\newtheorem{theorem}{Theorem}[]
\newtheorem{lemma}[theorem]{Lemma}
\newtheorem{definition}[theorem]{Definition}
\newtheorem{prop}[theorem]{Proposition}
\newtheorem*{remark}{Remark}
\begin{document}

\title{Quantum algorithms for group convolution, cross-correlation, and equivariant transformations}
\author{Grecia Castelazo}\altaffiliation{Equal contribution.}
\author{Quynh T. Nguyen}\altaffiliation{Equal contribution.}
\affiliation{Department of Electrical Engineering and Computer Science, Massachusetts Institute of Technology, Cambridge, MA, USA}
\author{Giacomo De Palma}
\affiliation{Scuola Normale Superiore, Pisa, Italy}
\affiliation{Department of Mathematics, University of Bologna, Bologna, Italy}
\author{Dirk Englund}
\affiliation{Department of Electrical Engineering and Computer Science, Massachusetts Institute of Technology, Cambridge, MA, USA}
\affiliation{Research Laboratory for Electronics, Massachusetts Institute of Technology, Cambridge, MA, USA}
\author{Seth Lloyd}
\affiliation{Research Laboratory for Electronics, Massachusetts Institute of Technology, Cambridge, MA, USA}
\affiliation{Department of Mechanical Engineering, Massachusetts Institute of Technology, Cambridge, MA, USA}
\author{Bobak T. Kiani}
\affiliation{Department of Electrical Engineering and Computer Science, Massachusetts Institute of Technology, Cambridge, MA, USA}
\affiliation{Research Laboratory for Electronics, Massachusetts Institute of Technology, Cambridge, MA, USA}

\begin{abstract}
Group convolutions and cross-correlations, which are equivariant to the actions of group elements, are commonly used to analyze or take advantage of symmetries inherent in a given problem setting.
Here, we provide efficient quantum algorithms for performing linear group convolutions and cross-correlations on data stored as quantum states. Runtimes for our algorithms are poly-logarithmic in the dimension of the group and the desired error of the operation. Motivated by the rich literature on quantum algorithms for solving algebraic problems, our theoretical framework opens a path for quantizing many algorithms in machine learning and numerical methods that employ group operations.

\end{abstract}

\maketitle

\section{Introduction}
Symmetry and invariance are properties of functions that have central importance in mathematics and physics. In machine learning, for example, many successful algorithms exploit inherent symmetries in a problem to guide or bias an algorithm towards special classes of functions which are suitable for that problem. Notably, convolutional neural networks (CNNs) exploit the translationally invariant structures within images (say if a cat moves sideways in an image, the features of that cat move with it) 
\cite{albawi2017understanding,goodfellow2016deep}. 
Convolutional neural networks seek to take advantage of translational invariance; however, symmetries arise through countless other group actions which each correspond to different invariance properties. More recent literature has focused on generalizing the results of convolutional neural networks to a broader class of group actions via the analysis of equivariance \cite{cohen2016group,kondor2018generalization}. Informally, a function is 
\emph{equivariant} if it transfers symmetries from the function's input space into its output space. Convolutions and cross-correlations provide a means to apply linear equivariant transformations. For group convolutional neural networks, previous work have shown that equivariant functions are precisely those that implement some form of group convolution or cross-correlation \cite{kondor2018generalization,ravanbakhsh2017equivariance,maron2019universality}.

In this study, we overview the group equivariant transformations of group convolution and cross-correlation and present quantum algorithms to perform these linear group operations on data stored as quantum states. Our algorithms can output quantum states storing the output of a group convolution or cross-correlation operation with runtimes that scale polynomially with the condition number of the linear operation and poly-logarithmically with the dimension of the group. Our primary aim is to contribute a theoretical framework for quantizing existing classical algorithms or designing new quantum algorithms in the group theoretic setting. A brief discussion of some concrete application of our work is also included.

Given functions $f$ and $g$ which map group elements $u \in G$ to complex or real numbers, a convolution over a group $G$ is defined as
\begin{equation}
    (f \circledast g) (u) = \sum_{v \in G} f(uv^{-1})g(v).
    \label{eq:conv_group}
\end{equation}

Similarly, cross-correlation is defined as
\begin{equation}
    (f \star g) (u) = \sum_{v \in G} f(vu^{-1})g(v).
    \label{eq:cross-corr}
\end{equation}

As an example, let us consider $G$ as the cyclic group $\mathbb{Z}/n \mathbb{Z}$. Then the group operation is isomorphic to integer addition modulus $n$. For example, if $u$ and $v$ are two elements of the groups corresponding to integers $u$ and $v$ as well (abusing notation), then we note that the operation $uv$ is equivalent to $u+v \mod n$ and $uv^{-1}$ is equivalent to $u-v \mod n$. Hence, for the cyclic group, the convolution corresponds to the typical case for one dimensional functions:
\begin{equation}
    (f \circledast g) (u) = \sum_{v \in G} f(uv^{-1})g(v) = \sum_{v = 0}^{n-1} f(u-v)g(v),
\end{equation}
where indices above are taken mod $n$. 

Note, that the equations above hold for finite groups, which are the focus of this study. For general groups, performing group convolution would require integration over Haar measures \cite{kondor2018generalization,haarconvs}. Though classical algorithms exist for approximately performing equivariant transformations over infinite dimensional groups \cite{finzi2020generalizing,kondor2018clebsch,perraudin2019deepsphere,cohen2018spherical}, we leave this more general case to future work. 

Quantum algorithms have been proposed to efficiently solve many algebraic or group-theoretic problems \cite{childs2010quantum}. Motivated by this prior success, we aim to address here the question of whether quantum computers can efficiently perform linear group operations efficiently. For properly chosen oracles and inputs, we provide two different methods to perform the linear group operations above. The first implements linear group operations in the ``real" regime by applying well-known quantum algorithms for performing linear combinations of unitary operators \cite{kothari2014efficient}. The second implements convolution theorems in the group Fourier regime by leveraging efficient quantum algorithms for group Fourier transforms \cite{moore2006generic}. 

Our paper is organized as follows. First, we list some related works (\autoref{sec:ii}) and overview salient group theoretic and representation theory concepts essential to understanding our algorithms (\autoref{sec:iii-a}). We also introduce the main specific linear algebraic methods used in our study such as converting group operations into matrices (\autoref{sec:iii-b}) and block encoding (\autoref{sec:iii-c}).
Then, we present our algorithms for performing linear group operations both in the real (\autoref{sec:iv}) and Fourier (\autoref{sec:v}) regimes by providing block encodings -- for the linear operations on a quantum computer. Furthermore, in (\autoref{sec:vi}) we provide an algorithm for applying inverse convolutions or cross-correlations, i.e., deconvolution both in the real and Fourier regimes.
Finally, we give an example application of our methods in solving an integral equation exhibiting a specific symmetry (\autoref{sec:vii}) and conclude with some discussion of future work (\autoref{sec:viii}).

\section{Related works}
\label{sec:ii}

Motivation for this work derives from the study of algebraic problems in quantum computing and implementations of equivariant transformations in deep learning especially in the context of group convolutional neural networks. We catalog some of these related works here. 

\paragraph{Quantum algorithms for group theoretic problems} Prior motivation for solving group theoretic problems in quantum computing stems from quantum algorithms aimed at solving the hidden subgroup problem \cite{childs2010quantum}. \cite{van2006quantum} proposed an algorithm for solving the hidden shift problem which employed group deconvolution on quantum states storing a superposition of queried function values. These ideas were expanded in \cite{roetteler2016quantum,rotteler2010quantum}. \cite{moore2006generic} provides an algorithm to perform generic group Fourier transforms on a quantum computer which forms the basis for many of the transformations performed in this work. In the broader context of quantum circuit analysis, group convolution has been used to analyze rates of convergence of ensembles of unitaries \cite{emerson2005convergence,dankert2009exact}. 

\paragraph{Quantum algorithms for linear algebra} The specific methods we use in this study are based on algorithms for performing linear algebraic operations on a quantum computer. Methods for block encoding unitary operators \cite{low2019hamiltonian,gilyen2019quantum} and applying linear combinations of unitary matrices \cite{kothari2014efficient,gui2008duality} are extensively used in our algorithms. Prior work in quantum computing has proposed methods and algorithms for efficiently performing matrix multiplication or solving linear systems of equations for dense matrices. The most related papers are those for applying circulant or Toeplitz matrices \cite{wan2018asymptotic,zhou2017efficient,mahasinghe2016efficient}. Circulant matrices are a specific instance of the more general form of group cross-correlation matrices studied here. From a more applied perspective, other related work focuses on pre-conditioning matrices using circulant matrices or solving Green's functions by taking advantage of symmetries in a problem \cite{tong2020fast,shao2018quantum}.  We note that our work considers a different setting than that of \cite{lomont2003quantum} which proved that group convolution is ``physically impossible.'' The work of \cite{lomont2003quantum} assumed that the convolution filter is given as a quantum state, whereas here we assume oracle access to its entries.
 
\paragraph{Equivariant and group convolutional neural networks} In the past few years, many algorithms for equivariant neural networks have been proposed and analyzed \cite{kondor2018generalization,keriven2019universal,cohen2016group,ravanbakhsh2017equivariance}. These algorithms employ and analyze weight sharing schemes that are inherent in equivariant transformations. This work has motivated a long line of research aiming to take advantage of symmetries in data \cite{zaheer2017deep,dieleman2016exploiting,worrall2017harmonic} with applications particularly in physics and chemistry \cite{bogatskiy2020lorentz,thomas2018tensor,kondor2018clebsch,cohen2019gauge}.

Many quantum algorithms have converted machine learning algorithms into quantum algorithms that are related to convolutions. For example, \cite{kerenidis2019quantum} construct a quantum algorithm that mimics the operation of a classical convolutional neural network (e.g., for image recognition). Quantum versions of convolutional neural networks which parameterize convolutions as quantum gates have also been proposed \cite{cong2019quantum,liu2019hybrid,pesah2021absence}.

\section{Preliminaries}
\label{sec:iii}

\subsection{Background in representation theory and group Fourier transforms}
\label{sec:iii-a}

In this section, we will discuss how to perform a group Fourier transform via the irreducible representations of a group. For the discussion here, we restrict ourselves to finite groups where the exposition of representations and group Fourier transforms is simpler.

% Formally, an action of a group $G$ on a set $X$ is an \emph{action map} $a: G \times X \rightarrow X$ which is \emph{compatible} with the group law,
% \begin{equation*}
%     a(h, a(g,x)) = a(hg,x) \;\;\;\;\;\;\;\; {\rm and} \;\; a(e,x) = x.
% \end{equation*}

% % From the geometric point of view, one can ask ''given a geometry object $X$, what is its group of symmetries?"; while representation theory reverses the question to''Given a group $G$, what objects $X$ does it act on?." 

% The basic problem of representation theory is to classify all the representations of a given group $G$; i.e., {\it ``given a group $G$, what objects $X$ does it act on?,"} while it is a  difficult problem for arbitrary groups, there exists a very good general theory for finite groups.

\paragraph{Representation} Representations of a group aim to translate the action of groups onto matrix operations. A representation of a group $G$ is a matrix valued function $\rho: G \to \mathbb{C}^{d_\rho \times d_\rho}$ such that $\rho(g_1)\rho(g_2) = \rho(g_1g_2)$ for all $g_1, g_2 \in G$ \cite{fulton2013representation}, where $d_\rho$ is the dimension of the representation. As an example, we have the trivial representation $\rho(g_i)=1$ for all $g_i \in G$. 

Representations of a group can arise when associating a basis vector to each element of a group. Let $V$ be a vector space of dimension $|G|$ with basis $\{e_x:x \in G\}$, then the left (right) action of any $g \in G$ is a permutation: $ge_x = e_{gx}$ ($ge_x = e_{xg}$). This gives rise to a representation consisting of permutation matrices of size $|G| \times |G|$ which are called the left and right regular representations, denoted $L_u$ and $R_u$ respectively for $u \in G$. These matrices permute the basis elements according to the left and right actions of the group: $L_i e_j = e_{ij}$ and $R_i e_j = e_{ji}$ \cite{fulton2013representation}. For example, the cyclic group $\mathbb{Z}/3\mathbb{Z}$ has the following $3 \times 3$ regular representation matrices (since this group is abelian, $R_i = L_i$ for all group elements):
\begin{equation}
    \begin{pmatrix}
    1 & 0 & 0 \\
    0 & 1 & 0 \\
    0 & 0 & 1
    \end{pmatrix}, \qquad
    \begin{pmatrix}
    0 & 0 & 1 \\
    1 & 0 & 0 \\
    0 & 1 & 0
    \end{pmatrix}, \qquad
    \begin{pmatrix}
    0 & 1 & 0 \\
    0 & 0 & 1 \\
    1 & 0 & 0
    \end{pmatrix}.
\end{equation}

A representation is unitary if $\rho(g)$ is a unitary matrix for all $g$. A representation is irreducible if it contains no proper invariant subspaces with respect to the action of the group. For finite groups, unitary irreducible representations always exist. On the contrary, a representation is reducible if it decomposes as a direct sum of irreducible subrepresentations. For example, if a representation $\rho$ can be decomposed into the direct sum of two other representations $\rho_1$ and $\rho_2$ as below,
\begin{equation}
    \rho(g) = Q^{-1} 
    \begin{pmatrix}
    \rho_1(g) & 0 \\ 0 & \rho_2(g)
    \end{pmatrix} Q,
\end{equation}
where $Q \in \mathbb{C}^{d_\rho \times d_\rho}$ is an invertible matrix, then it is reducible. Importantly, for compact groups, any representation $\rho$ can be decomposed as above into a direct sum of irreducible representations:
\begin{equation}
    \rho(g) = Q^{-1} \left[ \rho_1(g) \oplus \rho_2(g) \oplus \cdots \oplus \rho_k(g) \right] Q.
\end{equation}

For abelian groups, the irreducible representations all have dimension equal to one. For non-abelian groups, there is at least one irreducible representation which has dimension greater than one.

\paragraph{Group Fourier transform} Given any function $f: G \to \mathbb{C}$ which maps group elements to scalars, the group Fourier transform of $f$ is a function which maps irreducible representations to matrices whose output is denoted by $\hat{f}(\rho)$ and is defined as (for finite groups)
\begin{equation}
    \hat{f}(\rho) = \sum_{u \in G} f(u) \rho(u).
\label{eq:groupfour}
\end{equation}

As a corollary to the conventional Fourier transform, convolution in the Fourier regime of the group corresponds to matrix multiplication over irreducible representations. Namely, for convolutions we have
\begin{equation}
    \widehat{(f \circledast g)}(\rho) = \hat{f}(\rho) \hat{g}(\rho),
\end{equation}
Similarly for cross-correlations we have
\begin{equation}
    \widehat{(f \star g)}(\rho) = \hat{f}(\rho)^\dagger \hat{g}(\rho).
\end{equation}

Note that throughout this text, we will use $\circledast$ to indicate convolution and $\star$ to indicate cross-correlation.

% Note, that for abelian groups, one has the convenient fact that every irreducible representation is a scalar so it is often the case that the fourier transform of a group $\hat{f}$ is represented as a vector of dimension $|G|$. \textit{i.e.,} for an abelian group $G$, we can represent the fourier transform of a function $f$ as:
% \begin{equation}
%     \hat{f} = \begin{bmatrix}
%     \hat{f}(\rho_1) \\ \hat{f}(\rho_2) \\ \vdots \\ \hat{f}(\rho_{|G|})
%     \end{bmatrix}
% \end{equation}

% However, this is not as simple in the case of non-abelian groups where irreducible representations are matrices. Here, one can embed the irreps into a block diagonal matrix:
% \begin{equation}
%     \hat{f} = \bigoplus_{\rho_i \in \hat{G}} \hat{f}(\rho_i) \otimes I_{d_{\rho_i}}
% \end{equation}
% where $I_d$ is the identity matrix of dimension $d$ and $d_{\rho_i}$ is the dimension of the irreducible representation $\rho_i$. In other words, $\hat{f}$ is a representation of the group {\color{red}[GdP: why? At most we can say that the Hilbert space over which $\hat{f}$ acts carries a representation of $G$]} where the multiplicity of each irreducible representation is equal to the dimension of the irreducible representation.

\paragraph{Equivariance}
Equivariance is the property of a function that translates symmetries of a function from its domain to codomain (input to output domain). A function is equivariant when the action of the group commutes with the function. 

\begin{definition}[Paraphrased from \cite{kondor2018generalization}]
Let $G$ be a group and $\mathcal{X}_1, \mathcal{X}_2$ be two sets with corresponding G-actions
\begin{equation}
    T_g: \mathcal{X}_1 \to \mathcal{X}_1 \;\;\;\;\;\;\;\; T_g': \mathcal{X}_2 \to \mathcal{X}_2.
\end{equation}
Let $V_1$ and $V_2$ be vector spaces with basis elements labeled by elements of $\mathcal{X}_1$ and $\mathcal{X}_2$ respectively, and let $L_{V_1}$ ($L_{V_2}$) be the set of functions mapping $\mathcal{X}_1$ ($\mathcal{X}_2$) to $V_1$ ($V_2$). Let $\mathbb{T}$ and $\mathbb{T}'$ be the induced actions of group elements onto $V_1$ and $V_2$ respectively (\textit{i.e.,} permute vector elements). A map $\phi: L_{V_1} \to L_{V_2}$ is equivariant if
\begin{equation}
    \phi(\mathbb{T}_g(f)) = \mathbb{T}_g'(\phi(f)) \;\;\;\;\;\;\;\; \forall f \in L_{V_1}.
\end{equation}
\end{definition}

For example, circular or cyclic convolutional layers (\textit{i.e.,} filters or kernels wrap around to perform convolution) in a convolution neural network are equivariant to cyclic permutations -- applying a cyclic permutation to the pixels before or after the layer results in equivalent outputs \cite{ravanbakhsh2017equivariance}.

The property of equivariance can be visualized as a commutative diagram:
\begin{equation}
\begin{tikzcd}[row sep=large,column sep=huge]
L_{V_1}\arrow[r, "\mathbb{T}_g"]\arrow[d, "\phi"]& 
L_{V_1}\arrow[d, "\phi" ] \\
L_{V_2}\arrow[r, "\mathbb{T}_g'"]&  L_{V_2}
\end{tikzcd}
\end{equation}

Convolutions and cross-correlations are examples of functions which are equivariant to the actions of a group (see \autoref{app:equivariance} for more details). These operations are studied in deep learning under the topic of equivariant neural networks \cite{kondor2018generalization,ravanbakhsh2017equivariance}. \cite{kondor2018generalization} proves that a feedforward neural network layer is equivariant to the action of a group if and only if each layer of the neural network performs a generalized form of convolution or cross-correlation. 

\subsection{Converting group convolution to a linear algebraic formulation}
\label{sec:iii-b}

Since convolutions and cross-correlations are linear operations, one can convert them into a matrix formulation. One simple way to do so is to vectorize the two input functions into the transformation and represent the action of the transformation via a matrix. Given two functions that map group elements to complex numbers, $m,x: G \to \mathbb{C}$, we vectorize these functions over group elements by associating every group element to a basis of the vector space and denote the vectors of dimension $|G|$ as $\vec{m}$ (filter) and $\vec{x}$ (input). Any convolution or cross-correlation can be converted into a matrix weighted sum of the left or right regular representations $L_u$ and $R_u$ respectively.

% \begin{equation}
%     \begin{split}
%         (m \circledast x) (u) = \sum_{v \in G} m(uv^{-1})x(v) & \; \; \; \; \; \stackrel{\text{convolution}}{\Longleftrightarrow} \; \; \; \; \; \vec{m} \circledast \vec{x} = M^{\circledast} \vec{x} \; \; \; \text{where} \; \; M^{\circledast}_{\cdot, i} = R_i^{-1} \vec{m} \\
%         (m \circledast_R x) (u) = \sum_{v \in G} m(v^{-1}u)x(v) & \; \; \; \; \; \stackrel{\text{right convolution}}{\Longleftrightarrow} \; \; \; \; \; \vec{m} \circledast _R \vec{x} = M^{R \circledast} \vec{x} \; \; \; \text{where} \; \; M^{R \circledast}_{\cdot, i} = L_i^{-1} \vec{m} \\
%         (m \star x) (u) = \sum_{v \in G} m(vu^{-1})x(v) & \; \; \; \; \; \stackrel{\text{cross-correlation}}{\Longleftrightarrow} \; \; \; \; \; \vec{m} \star \vec{x} = M^{\star} \vec{x}\; \; \; \text{where} \; \; M^{\star}_{i, \cdot} = \left[R_i^{-1} \vec{m} \right]^T \\
%         (m \star_R x) (u) = \sum_{v \in G} m(u^{-1}v)x(v) & \; \; \; \; \; \stackrel{\text{right cross-correlation}}{\Longleftrightarrow} \; \; \; \; \; \vec{m} \star _R \vec{x} = M^{R \star} \vec{x} \; \; \; \text{where} \; \; M^{R \star}_{i, \cdot} = \left[L_i^{-1} \vec{m} \right]^T \\
%     \end{split}
% \end{equation}

\begin{widetext}

\begin{lemma}[Group operations as matrices]
Given a group $G$, let $\vec{m} \in \mathbb{C}^{|G|}$ and $\vec{x} \in \mathbb{C}^{|G|}$ be the filter and input for a group operation. Then, group convolutions and cross-correlations correspond to matrix weighted sums of the left or right regular representations. 
%(all proofs deferred to \autoref{app:conv_thm_proofs}).
% \begin{widetext}
\begin{equation}
    \begin{split}
        (m \circledast x) (u) = \sum_{v \in G} m(uv^{-1})x(v) & \; \; \; \; \; \stackrel{{\rm convolution}}{\Longleftrightarrow} \; \; \; \; \; \vec{m} \circledast \vec{x} = M^{\circledast} \vec{x}, \; \; M^{\circledast} = \sum_{i \in G} m_i L_i \\
        (m \circledast_R x) (u) = \sum_{v \in G} m(v^{-1}u)x(v) & \; \; \; \; \; \stackrel{{\rm right \ convolution}}{\Longleftrightarrow} \; \; \; \; \; \vec{m} \circledast _R \vec{x} = M^{R \circledast} \vec{x}, \; \; M^{R \circledast} = \sum_{i \in G} m_i R_i \\
        (m \star x) (u) = \sum_{v \in G} m(vu^{-1})x(v) & \; \; \; \; \; \stackrel{{\rm cross-correlation}}{\Longleftrightarrow} \; \; \; \; \; \vec{m} \star \vec{x} = M^{\star} \vec{x}, \; \; \; M^{\star} = \sum_{i \in G} m_i L_i^{-1} \\
        (m \star_R x) (u) = \sum_{v \in G} m(u^{-1}v)x(v) & \; \; \; \; \; \stackrel{{\rm right \ cross-correlation}}{\Longleftrightarrow} \; \; \; \; \; \vec{m} \star _R \vec{x} = M^{R \star} \vec{x}, \; \; M^{R \star} = \sum_{i \in G} m_i R_i^{-1} \\
    \end{split}
    \label{eq:conv_to_linear_comb}
\end{equation}
% \end{widetext}
\label{lem:conv_to_linear_comb}
\end{lemma}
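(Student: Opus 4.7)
The plan is to verify each of the four equivalences directly by computing the $u$-th coordinate of the right-hand matrix-vector product and matching it to the sum defining the left-hand group operation. The only tool needed is the explicit action of the regular representations on basis vectors, $L_i e_j = e_{ij}$ and $R_i e_j = e_{ji}$, which translates to matrix entries
\[
(L_i)_{k,j} = \delta_{k, ij}, \qquad (R_i)_{k,j} = \delta_{k, ji}.
\]
Since permutation matrices are unitary with integer entries, their inverses are just the transposes, giving $(L_i^{-1})_{k,j} = \delta_{j, ik}$ and $(R_i^{-1})_{k,j} = \delta_{j, ki}$; equivalently one may observe $L_i^{-1} = L_{i^{-1}}$ and $R_i^{-1} = R_{i^{-1}}$.

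First, for the convolution case, I would expand
\[
\bigl(M^{\circledast} \vec{x}\bigr)_u \;=\; \sum_{i \in G} m_i \sum_{v \in G} (L_i)_{u,v}\, x_v \;=\; \sum_{i,v} m_i\, \delta_{u, iv}\, x_v,
\]
and then use the fact that $\delta_{u, iv} = 1$ precisely when $i = uv^{-1}$ to collapse the sum over $i$ and recover $\sum_v m(uv^{-1}) x(v)$. The right-convolution identity follows by the same calculation with $L_i$ replaced by $R_i$; here $\delta_{u, vi} = 1$ forces $i = v^{-1} u$, yielding $\sum_v m(v^{-1} u) x(v)$.

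Next, for the two cross-correlation cases, I would carry out the analogous expansion but with $L_i^{-1}$ and $R_i^{-1}$. Using $(L_i^{-1})_{u,v} = (L_i)_{v,u} = \delta_{v, iu}$ gives $i = vu^{-1}$ and produces $\sum_v m(vu^{-1}) x(v)$; using $(R_i^{-1})_{u,v} = \delta_{v, ui}$ gives $i = u^{-1} v$ and produces $\sum_v m(u^{-1} v) x(v)$. Each of these matches the corresponding definition in the statement.

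There is no genuine obstacle here: the whole argument is a careful index chase. The only point requiring attention is to keep the order of multiplication straight in the non-abelian setting, so that $uv^{-1}$ is not accidentally confused with $v^{-1}u$ when reading off $i$ from the Kronecker delta. Once one confirms the four delta identities $\delta_{u,iv} \Leftrightarrow i=uv^{-1}$, $\delta_{u,vi} \Leftrightarrow i=v^{-1}u$, $\delta_{v,iu} \Leftrightarrow i=vu^{-1}$, and $\delta_{v,ui} \Leftrightarrow i=u^{-1}v$, all four equivalences in \autoref{eq:conv_to_linear_comb} drop out immediately.
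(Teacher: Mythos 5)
Your proof is correct and is essentially the same direct index-chasing verification the paper gives: the paper's appendix re-indexes the convolution sum via $v \to vu$ and $v \to v^{-1}$ to arrive at $\sum_{v} m(v)[L_v \vec{x}]_u$, which is just your Kronecker-delta expansion run in the opposite direction. The only difference is that you spell out all four cases via the delta identities, whereas the paper proves only the convolution case and asserts the others are analogous; your four delta equivalences are all correct.
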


For each of the operations above, we also have a corresponding convolution theorem which applies the operation in the Fourier domain of the group.

\begin{lemma}[Convolution theorems \cite{fulton2013representation}]
Given a group $G$, let $\vec{m} \in \mathbb{C}^{|G|}$ and $\vec{x} \in \mathbb{C}^{|G|}$ be the filter and input for a group operation. Let $\hat{m}(\rho)$ and $\hat{x}(\rho)$ indicate the value of the Fourier transform of the filter and input for irreducible representation $\rho$. Then, one can perform group operations in the Fourier regime by applying the corresponding convolution theorem.
% \begin{widetext}
\begin{equation}
    \begin{split}
        (m \circledast x) (u) = \sum_{v \in G} m(uv^{-1})x(v) & \; \; \; \; \; \stackrel{{\rm convolution}}{\Longleftrightarrow} \; \; \; \; \; \widehat{(m \circledast x)}(\rho) = \hat{m}(\rho) \hat{x}(\rho) \\
        (m \circledast_R x) (u) = \sum_{v \in G} m(v^{-1}u)x(v) & \; \; \; \; \; \stackrel{{\rm right \ convolution}}{\Longleftrightarrow} \; \; \; \; \; \widehat{(m \circledast_R x)}(\rho) =  \hat{x}(\rho) \hat{m}(\rho) \\
        (m \star x) (u) = \sum_{v \in G} m(vu^{-1})x(v) & \; \; \; \; \; \stackrel{{\rm cross-correlation}}{\Longleftrightarrow} \; \; \; \; \; \widehat{(m \star x)}(\rho) =  \hat{m}(\rho)^\dagger  \hat{x}(\rho)  \\
        (m \star_R x) (u) = \sum_{v \in G} m(u^{-1}v)x(v) & \; \; \; \; \; \stackrel{{\rm right \ cross-correlation}}{\Longleftrightarrow} \; \; \; \; \; \widehat{(m \star_R x)}(\rho) =  \hat{x}(\rho) \hat{m}(\rho)^\dagger  \\
    \end{split}
    \label{eq:convolution_theorems}
\end{equation}
% \end{widetext}
\label{lem:convolution_theorems}
\end{lemma}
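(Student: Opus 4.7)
The plan is to prove each of the four identities by a direct calculation that unpacks the definition of the group Fourier transform and uses two key facts: the homomorphism property $\rho(g_1 g_2) = \rho(g_1)\rho(g_2)$, and the unitarity of the representations (noted earlier as always available for finite groups), which gives $\rho(u^{-1}) = \rho(u)^\dagger$. All four cases follow the same three-step template: expand the Fourier transform as a double sum, reindex via a change of group variables, and factor the resulting expression using the homomorphism property.

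For the convolution identity, I would expand
\[
\widehat{(m \circledast x)}(\rho) \;=\; \sum_{u \in G}(m\circledast x)(u)\,\rho(u) \;=\; \sum_{u,v \in G} m(uv^{-1})\,x(v)\,\rho(u),
\]
and then substitute $w = uv^{-1}$ (equivalently $u = wv$). Since $w \mapsto wv$ is a bijection of $G$ for each fixed $v$, applying $\rho(wv)=\rho(w)\rho(v)$ separates the sum into $\bigl(\sum_w m(w)\rho(w)\bigr)\bigl(\sum_v x(v)\rho(v)\bigr) = \hat m(\rho)\hat x(\rho)$. The right-convolution case is analogous, with $w = v^{-1}u$ and $u = vw$; here $\rho(vw)=\rho(v)\rho(w)$, and the noncommutativity of $\rho$ produces the reversed product $\hat x(\rho)\hat m(\rho)$.

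For the cross-correlations, I would use the substitutions $w = vu^{-1}$ (so $u = w^{-1}v$) and $w = u^{-1}v$ (so $u = vw^{-1}$), respectively. Each of these introduces a factor of $\rho(w^{-1})$ into the double sum, which I would immediately replace by $\rho(w)^{\dagger}$ using unitarity. Factoring as before yields $\hat m(\rho)^{\dagger}\hat x(\rho)$ in the left case and $\hat x(\rho)\hat m(\rho)^{\dagger}$ in the right case, as claimed.

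The argument is essentially bookkeeping rather than substance, so there is no conceptually difficult step. The only thing that genuinely needs to be tracked is the order of the two factors, which is dictated by the side on which the summed variable $v$ ends up after reindexing — this is precisely why the left and right versions of each operation have their Fourier transforms multiplied in opposite orders, and it genuinely matters because for non-abelian groups $\hat m(\rho)$ and $\hat x(\rho)$ are noncommuting matrices of size $d_\rho > 1$. A secondary point worth being careful about is the convention under which $\sum_{w} m(w)\rho(w)^{\dagger}$ is identified with $\hat m(\rho)^{\dagger}$ in the cross-correlation cases; this matches the usage already introduced in the paper and is the only place where unitarity of $\rho$ is essential to the derivation.
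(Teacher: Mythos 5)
Your proof is correct and follows essentially the same route as the paper's: expand the group Fourier transform as a double sum, reindex by a change of group variable, and factor using the homomorphism property together with $\rho(g^{-1})=\rho(g)^{\dagger}$ for unitary irreducible representations. The paper presents the reindexing by inserting $\rho(v^{-1})\rho(v)$ rather than an explicit substitution, and obtains the right-convolution case from the identity $(m \circledast_R x) = (x \circledast m)$ instead of a direct computation, but these are cosmetic differences; the convention you flag, identifying $\sum_{w} m(w)\rho(w)^{\dagger}$ with $\hat{m}(\rho)^{\dagger}$ (exact for real-valued $m$), is the same one made in the paper's own proof.
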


\end{widetext}

We defer proofs of the above to \autoref{app:conv_thm_proofs} and provide examples of these concepts in \autoref{app:rep_theory_ex}.

The above shows that there are two methods by which one can perform group convolution or cross-correlation on a quantum computer. The first is to apply the matrices described in \Cref{lem:conv_to_linear_comb} as a weighted sum of unitaries \cite{kothari2014efficient}. The second is to apply Fourier transforms to inputs and perform convolution in the Fourier regime as described in \Cref{lem:convolution_theorems}. This method takes advantage of the fact that many group Fourier transforms are efficiently performable on a quantum computer \cite{childs2010quantum,moore2006generic}. %We first turn our attention to the prior setting where we implement transformations via operations delineated in \Cref{lem:conv_to_linear_comb}.

\subsection{Block encodings}
\label{sec:iii-c}

Throughout this study, we employ the block encoding framework to implement linear transformations on a quantum computer \cite{low2019hamiltonian}. In this framework, a desired linear but not necessarily unitary transformation $A \in \mathbb{C}^{2^w \times 2^w}$ bounded in the spectral norm by $\|A\|\leq 1$ is encoded in a unitary operator $U \in \mathbb{C}^{2^{(w+a)} \times 2^{(w+a)}}$ with $a$ ancilla qubits such that the top left block of $U$ is precisely $A$. 

\begin{equation}
U = \begin{pmatrix}
A & \cdot \\ \cdot & \cdot
\end{pmatrix}, \;\;\;\;\; \left(\bra{0^a} \otimes I_w \right) U \left( \ket{0^a} \otimes I_w \right) = A,
\end{equation}
where $I_w$ is the identity operation on the $w$ qubits encoding $A$. In other words, applying the unitary $U$ to a quantum state $\ket{0^a}\ket{\psi}$ and post-selecting on the measurement outcome $\ket{0^a}$ on the ancilla qubits is equivalent to applying the operation $A$ on $\ket{\psi}$.
\begin{equation}
    U \ket{0^a} \ket{\psi} = \ket{0^a} A \ket{\psi} + \ket{\text{garbage}},
\end{equation}
where $\ket{\text{garbage}}$ is a garbage state that is orthogonal to the subspace $\ket{0^a}$ (\textit{i.e.,} $[ \bra{0^a} \otimes I_w ] \ket{\text{garbage}} = 0$). The probability of successfully post-selecting $\ket{0}^a$ is thus equal to $\| A \ket{\psi} \|_2^2$.

\section{Quantum implementation as sum of unitaries}
\label{sec:iv}
Previous quantum algorithms have proposed efficient means to block encode and apply certain matrices as a sum of unitary matrices, each of which can be efficiently performed via quantum operations \cite{berry2015simulating,kothari2014efficient}. This framework can be applied to the form of the operations shown in \Cref{lem:conv_to_linear_comb} where convolution and cross-correlation operations are a weighted sum of the left or right regular (and unitary) representations. 

Let $w = \lceil \log_2 |G| \rceil$ indicate the number of qubits needed to block encode a given group operation. In the quantum case, we assume that we have access to either of the below oracles, $\mathcal{A}_m$ or $O_m$, as well as their inverses, which provide values of the convolution filter as $b$-bit descriptions or amplitudes of a quantum state:
\begin{equation}
\begin{split}
    \mathcal{A}_m &: \ket{0^w} \to \frac{1}{\sqrt{\|\vec{m}\|_1}} \sum_{i \in G} \sqrt{|m_i|} \ket{i}, \\
    O_m &: \ket{i} \ket{0^b} \to \ket{i} \ket{m_i}.
\end{split}
\end{equation}
If entries $m_i$ scale independently with the size of the group, one can efficiently convert oracle $O_m$ to $\mathcal{A}_m$ (and vice-versa up to phase factors) via algorithms for quantum digital-to-analog conversion \cite{mitarai2019quantum}, which we detail in \autoref{app:oracle_conversion} of the appendix. The oracle $O_m$ can be efficiently constructed if the entries $m_i$ are efficiently computable with a classical circuit, e.g. when $m$ is sparse or is the discretization of a kernel function (e.g., see \Cref{app:integraleqn}). We note that from the oracle $O_m$, one can also extract the phase of $m_i$, a fact which will become useful in block encoding the operations properly.

% \bk{Add a sentence here discussing runtime of oracle in reality. State instances where it's efficient (e.g. sparse support or analytic form of kernel). Reference appendix D as example.}

From here, we apply quantum algorithms for linear combinations of unitary matrices to perform group convolutions and cross-correlations. 
\begin{lemma}[Linear combination of unitaries, paraphrased from Lemma 2.1 of \cite{kothari2014efficient}]
    Let $V = \sum_i a_i U_i$ be a linear combination of unitary matrices $U_i$ with $a_i>0$. Let $A$ be a unitary matrix that maps $\ket{0^w}$ to $\frac{1}{\sqrt{a}} \sum_i \sqrt{a_i} \ket{i}$ where $a := \sum_i a_i$. Let $U := \sum_i \ket{i}\bra{i} \otimes U_i$, then $W := A^\dagger U A$ satisfies for any state $\ket{\psi}$
    \begin{equation}
        W \ket{0^w} \ket{\psi} = \sqrt{p} \ket{0^w} V \ket{\psi} + \ket{\Psi_\perp},
    \end{equation}
    where $p =  a^{-2}$ and the unnormalized state $\ket{\Psi_\perp}$ (depending on $\ket{\psi}$) satisfies $(\ket{0^w}\bra{0^w} \otimes I)\ket{\Psi_\perp} = 0$.
    \label{lem:linear_comb}
\end{lemma}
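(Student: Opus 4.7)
The plan is to compute $W\ket{0^w}\ket{\psi}$ explicitly by following the three operators of $W = A^\dagger U A$ in order (understanding that $A$ acts on the ancilla register while $U$ is controlled on that register), then read off the coefficient of $\ket{0^w}$ in the result and verify it equals $\sqrt{p}\, V\ket{\psi}/\|\cdot\|$-type expression — in fact exactly $\sqrt{p}\,V\ket{\psi}$ since $\sqrt{p}=1/a$. The remaining orthogonal component will then be defined as $\ket{\Psi_\perp}$, and its orthogonality to the $\ket{0^w}$ subspace will be by construction.

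First, I apply $A$ to the ancilla register to obtain
\begin{equation}
(A \otimes I)\ket{0^w}\ket{\psi} = \frac{1}{\sqrt{a}} \sum_i \sqrt{a_i}\, \ket{i}\ket{\psi}.
\end{equation}
Second, applying $U = \sum_i \ket{i}\bra{i}\otimes U_i$ produces
\begin{equation}
U(A\otimes I)\ket{0^w}\ket{\psi} = \frac{1}{\sqrt{a}} \sum_i \sqrt{a_i}\, \ket{i}\, U_i\ket{\psi}.
\end{equation}
Third, I extract the $\ket{0^w}$ component after the final $A^\dagger$. Using $\bra{0^w} A^\dagger = (A\ket{0^w})^\dagger = \frac{1}{\sqrt{a}}\sum_j \sqrt{a_j}\,\bra{j}$, the overlap computes to
\begin{equation}
(\bra{0^w}\otimes I)\, W \ket{0^w}\ket{\psi} = \frac{1}{a} \sum_i a_i\, U_i\ket{\psi} = \frac{1}{a}\, V\ket{\psi}.
\end{equation}

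Since $\sqrt{p} = 1/a$, this yields $W\ket{0^w}\ket{\psi} = \sqrt{p}\,\ket{0^w}V\ket{\psi} + \ket{\Psi_\perp}$ where $\ket{\Psi_\perp}$ is defined as the rest of the vector. The orthogonality condition $(\ket{0^w}\bra{0^w}\otimes I)\ket{\Psi_\perp} = 0$ is then automatic: it is exactly the statement that I have correctly split off the $\ket{0^w}$ component, which the overlap computation above confirms.

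There is no real obstacle here — the argument is a direct three-line computation once tensor-factor bookkeeping is kept straight (that $A$ is $A\otimes I$ and $U$ is controlled in the ancilla). The only thing worth emphasizing in the written version is that no assumption of orthonormality of the $\{U_i\ket{\psi}\}$ is needed; the cross terms contribute to $\ket{\Psi_\perp}$ but do not affect the extracted $\ket{0^w}$-block, which is the only content the lemma asserts.
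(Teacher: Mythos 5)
Your computation is correct and complete: it is the standard proof of the LCU lemma, and the only care required is exactly the tensor-factor bookkeeping you flag (that $A$ means $A\otimes I$ on the ancilla and that $\bra{0^w}A^\dagger=(A\ket{0^w})^\dagger$, which turns the $\sqrt{a_i}$ amplitudes into the weights $a_i/a$). The paper itself does not prove this lemma --- it imports it verbatim from Lemma~2.1 of the cited LCU reference --- so there is no in-paper argument to compare against; your three-line derivation is precisely the proof that reference gives, and your closing remark that positivity of the $a_i$ (not orthogonality of the $U_i\ket{\psi}$) is the only hypothesis actually used is accurate.
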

 In other words, \Cref{lem:linear_comb} shows that $W$ is a block encoding of the matrix $V$ \cite{gilyen2019quantum}. Performing group convolution or cross-correlation is a direct application of the above Lemma. 
\begin{lemma}[Block encoding of group convolution or cross-correlation]
\label{lem:O_m_block_encoding}
    Given oracle access to a filter $\vec{m}$ where $\vec{m}$ is normalized such that $\|\vec{m}\|_1 = 1$, one can block encode the matrix $M= \sum_i m_i U_i $ corresponding to group convolution or cross-correlation  (see \Cref{lem:conv_to_linear_comb} for the proper choice of $U_i$). This requires two calls to the oracle $\mathcal{A}_m$, one call to the oracle $O_m$, and efficient (classical) circuits for performing permutations based on group operations ($U_i$).
\end{lemma}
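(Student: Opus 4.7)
The plan is to apply the linear-combination-of-unitaries framework of \Cref{lem:linear_comb} directly to the matrix representations of the group operations given in \Cref{lem:conv_to_linear_comb}, with a preliminary step to absorb the complex phases of $m_i$ into the unitary factors. Since each of the matrices $L_i$, $R_i$, $L_i^{-1}$, $R_i^{-1}$ is a permutation matrix (hence unitary), any of the four operations $M^{\circledast}, M^{R\circledast}, M^{\star}, M^{R\star}$ already has the form $\sum_i m_i U_i$ with $U_i$ unitary; one only needs to bring it into the nonnegative-coefficient form required by \Cref{lem:linear_comb}.

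First, I would write $m_i = |m_i|\, e^{i\phi_i}$ and define $V_i := e^{i\phi_i} U_i$, which is still unitary, so that $M = \sum_i |m_i|\, V_i$ with $a_i := |m_i| \ge 0$ and $a := \sum_i a_i = \|\vec m\|_1 = 1$. Under the normalization $\|\vec m\|_1 = 1$, the oracle $\mathcal{A}_m$ is exactly the state-preparation unitary $A$ required by \Cref{lem:linear_comb}, namely $\mathcal{A}_m : \ket{0^w} \mapsto \sum_i \sqrt{|m_i|}\ket{i}$, with no additional scaling needed.

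Second, I would construct the controlled unitary $\mathcal{U} := \sum_i \ket{i}\bra{i}\otimes V_i$ in two pieces. The permutation part $\sum_i \ket{i}\bra{i}\otimes U_i$ is implemented by the assumed efficient classical circuit for the corresponding group action (e.g.\ $\ket{i}\ket{g}\mapsto\ket{i}\ket{ig}$ for $U_i=L_i$, and analogously $ig^{-1}$, $gi$, $gi^{-1}$ for the other three variants); no oracle calls are needed for this. The phase part $\sum_i \ket{i}\bra{i}\otimes e^{i\phi_i} I$ is implemented by calling $O_m$ to write $\ket{i}\ket{0^b}\mapsto\ket{i}\ket{m_i}$, applying a phase $e^{i\phi_i}$ conditioned on the $b$-bit description of $m_i$ (the phase can be read off since $O_m$ gives the full complex value, including sign/argument), and uncomputing via $O_m^\dagger$; this constitutes the ``one call to $O_m$'' in the statement. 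The block encoding is then $W := \mathcal{A}_m^\dagger\, \mathcal{U}\, \mathcal{A}_m$, using the second call to $\mathcal{A}_m$ as $\mathcal{A}_m^\dagger$. Substituting into \Cref{lem:linear_comb} with $p = a^{-2} = 1$ yields $W\ket{0^w}\ket{\psi} = \ket{0^w} M \ket{\psi} + \ket{\Psi_\perp}$, which is exactly a block encoding of $M$ with no amplitude loss.

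The main obstacle, and the only step that goes beyond direct citation of \Cref{lem:linear_comb}, is the correct handling of complex phases: \Cref{lem:linear_comb} is stated for positive coefficients, so one must verify that the phase-absorbed $V_i$ remain unitary and that the controlled phase can be implemented from $O_m$ without disturbing the $\ket{i}$-register (so that $\mathcal{A}_m^\dagger$ uncomputes cleanly). Everything else is bookkeeping: checking that the four specific choices of $U_i$ in \Cref{lem:conv_to_linear_comb} all admit efficient controlled implementations via group multiplication/inversion, and that $\|\vec m\|_1 = 1$ removes any sub-normalization in the block encoding.
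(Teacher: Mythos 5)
Your proposal is correct and follows essentially the same route as the paper's proof: absorb the phase of each $m_i$ into the unitary via a call to $O_m$, then invoke \Cref{lem:linear_comb} with $A = \mathcal{A}_m$ and the controlled permutation circuits, noting that $\|\vec m\|_1 = 1$ makes the block encoding exact with $p=1$. Your write-up is in fact more explicit than the paper's about the phase-kickback and uncomputation step (using $O_m$ and $O_m^\dagger$ so that $\mathcal{A}_m^\dagger$ acts cleanly), which is a clarification of the same argument rather than a different approach.
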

\begin{proof}
Based on which operation we would like to perform, we choose $U_i$ to be either $L_i$ or $R_i$ (or their inverses) as given in \Cref{lem:conv_to_linear_comb}. These permutation operations typically can be efficiently performed classically. Furthermore, we make a call to $O_m$ and apply a phase transformation to $U_i$ proportional to the phase of $m_i$. Finally, we use \Cref{lem:linear_comb}, setting $A$ in \Cref{lem:linear_comb} to $\mathcal{A}_m$ and $U_i$ to the chosen permutation operation (including the possible phase), and directly apply the results of \Cref{lem:linear_comb}.
\end{proof}

\begin{remark}
The normalization $\|\vec{m}\|_1 = 1$ is set to ensure that the largest singular value of the linear operation is no greater than 1. This bound can be easily obtained via the triangle inequality, \textit{e.g.,} for convolution $\|M^\circledast\| = \|  \sum_{i \in G} m_i L_i \| \leq \sum_{i \in G} |m_i| = \|\vec{m} \|_1$. This is required for block encoding a matrix within a larger unitary matrix. 
\end{remark}

The linear combination of unitaries approach can provide an efficient means to apply group operations to a quantum state as we describe below.

\begin{prop}[Applying group operations to an input state]
\label{prop:O_m_forward_convolution}
Given a quantum state $\ket{x} = \sum_i x_i \ket{i}$ containing the input state $\vec{x}$ normalized such that $\|\vec{x}\|_2 = 1$ and oracle access to the convolution filter $\vec{m}$, one can construct a state $\ket{m \circ x}$ which is equal to the normalized output of $\vec{m} \circ \vec{x}$ where $\circ$ corresponds to one of the group operations delineated in \Cref{lem:conv_to_linear_comb}. This operation has a runtime that scales as $O( T_B \|\vec{m} \circ \vec{x}\|_2^{-1})$ where $T_B$ is the runtime of the block encoding of \Cref{lem:linear_comb}.
\end{prop}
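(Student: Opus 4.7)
The plan is to invoke the block encoding supplied by \Cref{lem:O_m_block_encoding} and then boost its success probability via amplitude amplification. By \Cref{lem:O_m_block_encoding}, we have a unitary $W$, implementable with cost $T_B$, such that for the appropriate choice of $M \in \{M^\circledast, M^{R\circledast}, M^\star, M^{R\star}\}$,
\begin{equation}
W \ket{0^w}\ket{x} \;=\; \ket{0^w}\, M\ket{x} \;+\; \ket{\Psi_\perp},
\end{equation}
with $(\ket{0^w}\bra{0^w}\otimes I)\ket{\Psi_\perp} = 0$. Since we have normalized $\|\vec{m}\|_1 = 1$, the spectral norm bound $\|M\|\le 1$ holds so this is a genuine block encoding. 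Identifying $M\ket{x}$ with the unnormalized vector $\vec{m}\circ\vec{x}$ in the computational basis, a direct measurement of the ancilla register in the computational basis yields outcome $0^w$ with probability exactly $\|\vec{m}\circ\vec{x}\|_2^2$, and conditioned on that outcome the residual register holds the state $\ket{m\circ x} := M\ket{x}/\|M\ket{x}\|_2$.

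Next I would apply fixed-point amplitude amplification (as in the standard block-encoding toolkit) to the ``good'' subspace projector $\ket{0^w}\bra{0^w}\otimes I$. Because the initial overlap with the good subspace has magnitude $\|\vec{m}\circ\vec{x}\|_2$, amplitude amplification drives the success probability to $\Omega(1)$ after $O(\|\vec{m}\circ\vec{x}\|_2^{-1})$ calls to $W$ and $W^\dagger$, each of cost $T_B$. Ancillary reflections about $\ket{0^w}\ket{x}$ can be performed by inverting the state-preparation oracle for $\ket{x}$ combined with a standard reflection through $\ket{0^{w+w}}$. The resulting state is $\ket{m\circ x}$ up to error that can be made arbitrarily small inside the claimed runtime, giving overall cost $O(T_B\, \|\vec{m}\circ\vec{x}\|_2^{-1})$ as asserted.

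The only substantive subtlety, which I would call out explicitly, is ensuring that the block-encoded matrix inside $W$ is literally $M$ (not $M$ up to a phase or permutation) so that the good-subspace amplitude really equals $\|\vec{m}\circ\vec{x}\|_2$; this is exactly what the phase-tagging step involving $O_m$ in \Cref{lem:O_m_block_encoding} guarantees. Beyond that the argument is a routine invocation of amplitude amplification on a prepared block encoding, and there is no combinatorial or analytic obstacle — the whole content is packaged in \Cref{lem:linear_comb} and \Cref{lem:O_m_block_encoding}.
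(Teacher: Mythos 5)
Your proposal is correct and follows essentially the same route as the paper: apply the block encoding of \Cref{lem:O_m_block_encoding} (itself an instance of \Cref{lem:linear_comb}), post-select on the ancilla outcome $\ket{0^w}$ with success probability $\|\vec{m}\circ\vec{x}\|_2^2$, and boost this to $\Omega(1)$ with $O(\|\vec{m}\circ\vec{x}\|_2^{-1})$ rounds of amplitude amplification. Your version is in fact slightly more careful than the paper's (which writes the success probability with exponent $-2$ rather than $2$, and does not mention the reflection about the initial state needed for amplification), but the substance is identical.
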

\begin{proof}
Let the matrix $M$ correspond to the linear operator where $\vec{m} \circ \vec{x} = M \vec{x}$ for a given group operation. Applying the block encoding of \Cref{lem:linear_comb}, we obtain the state
\begin{equation}
    \ket{0^w} M \ket{x} + \ket{x_\perp},
\end{equation}
where $\ket{x_\perp}$ is the ``garbage" projected into the perpendicular subspace. The operation is successful after measuring the first register and obtaining the outcome $\ket{0}$. The probability of success for this measurement is equal to $\|\vec{m} \circ \vec{x}\|_2^{2}$. By using amplitude amplification, this probability can be improved to $O(\|\vec{m} \circ \vec{x}\|_2)$ \cite{brassard2002quantum,ambainis2012variable}.
\end{proof}

\begin{remark} 
The runtime of the above operation is efficient when the term $\|\vec{m} \circ \vec{x}\|_2$, which be bounded by the condition number of the matrix corresponding to the group operation (see \Cref{prop:forward_convolution}), is small. These matrices are diagonalized (abelian groups) or block diagonalized (non-abelian groups) by the group Fourier transform as discussed in the next section. This provides a convenient method to calculate the condition number of any given linear operation.
\end{remark}

Note, that if the group $G$ is a cyclic group of order $n$, then the cross-correlation operation over the group produces a circulant matrix and we recover results similar to prior quantum algorithms for performing quantum matrix operations for circulant matrices \cite{zhou2017efficient,wan2018asymptotic}.

\section{Quantum implementation via convolution theorems}
\label{sec:v}
In this section, we consider quantum implementations of group operations when one is given oracle access to the convolutional filter $\vec{m}$ in the Fourier regime. Such settings are possible when for example one has access to the generating function of a kernel or can approximate the eigenvalues of the convolution matrix by analytically calculating a corresponding integral (see example provided later) \cite{wan2018asymptotic}. When performing group operations in the Fourier regime, there is an important distinction between abelian and non-abelian groups that arises from the property of the irreducible representations of each class of groups. Namely, abelian groups have the nice property that all of their irreducible representations are scalars. Furthermore, the Fourier transform for an abelian group can be easily obtained given the fact that any finite abelian group is a direct product of cyclic groups.

\begin{theorem}[Fundamental theorem of finite abelian groups \cite{dummit2004abstract}]
Every  finite abelian group is a direct product of cyclic groups whose orders are prime powers uniquely determined by the group.
\end{theorem}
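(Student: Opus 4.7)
The plan is to prove the statement in three stages: first reduce the problem to abelian $p$-groups via a primary decomposition, then decompose each $p$-group into cyclic factors, and finally establish that the resulting list of prime power orders is an invariant of the group.

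The first step is the primary decomposition. Let $G$ be a finite abelian group of order $n$, and write $n = p_1^{a_1} \cdots p_k^{a_k}$. For each prime $p$ dividing $n$, define $G_p = \{g \in G : p^r g = 0 \text{ for some } r \geq 0\}$, the $p$-primary component. Using that $G$ is abelian, verify $G_p$ is a subgroup. To show $G \cong G_{p_1} \times \cdots \times G_{p_k}$, take $g \in G$ of order $d = q_1^{b_1}\cdots q_j^{b_j}$, and use Bezout's identity on the pairwise coprime quotients $d/q_i^{b_i}$ to write $1 = \sum_i c_i (d/q_i^{b_i})$, so that $g = \sum_i c_i (d/q_i^{b_i}) g$ with each summand in the appropriate $G_{q_i}$. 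Triviality of pairwise intersections follows from Lagrange's theorem on orders.

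The core (and hardest) step is the decomposition of a finite abelian $p$-group $H$ into cyclic factors of prime power order. The approach I would follow is induction on $|H|$: pick an element $x \in H$ of maximal order $p^s$, and prove the key lemma that the cyclic subgroup $\langle x \rangle$ is a direct summand, i.e., there exists a subgroup $K \leq H$ with $H = \langle x \rangle \oplus K$. The standard argument chooses $K$ maximal among subgroups of $H$ with $K \cap \langle x \rangle = 0$, and shows by contradiction that $\langle x \rangle + K = H$: if some $y \notin \langle x \rangle + K$ existed, one would find the smallest power $p^t y \in \langle x \rangle + K$, write $p^t y = mx + k$, and use maximality of the order of $x$ (which forces $p \mid m$) to produce a strictly larger subgroup $\langle K, y - (m/p)x \rangle$ still intersecting $\langle x \rangle$ trivially, contradicting maximality of $K$. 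This is the main obstacle because the lifting argument is delicate and requires careful bookkeeping of orders. Once this lemma is in hand, induction on $|K| < |H|$ finishes the decomposition of $H$ into a direct sum of cyclic $p$-groups.

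For uniqueness, the cleanest route is to count invariants intrinsic to $G$. For each prime $p$ and each $i \geq 0$, consider the $\mathbb{F}_p$-vector space $p^i G_p / p^{i+1} G_p$ and compute its dimension both directly and in terms of any chosen cyclic decomposition $G_p \cong \mathbb{Z}/p^{e_1}\mathbb{Z} \oplus \cdots \oplus \mathbb{Z}/p^{e_r}\mathbb{Z}$; each summand with $e_j > i$ contributes exactly one to the dimension. Equating the two expressions pins down the multiset $\{e_1, \ldots, e_r\}$, proving uniqueness of the prime power orders up to reordering.
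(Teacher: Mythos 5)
The paper does not prove this statement; it is quoted as a classical result with a citation to Dummit and Foote, so there is no in-paper argument to compare against. Your outline is the standard textbook proof --- primary decomposition via Bezout, the key lemma that a cyclic subgroup generated by an element of maximal order in a $p$-group is a direct summand (proved by the maximal-complement lifting argument), and uniqueness by computing $\dim_{\mathbb{F}_p}\bigl(p^i G_p / p^{i+1} G_p\bigr)$ --- and it is correct; this is essentially the argument found in the cited reference.
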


Given this convenient theorem, the algorithm for performing abelian group operations is rather simple and we consider that case first. Then, we will generalize to the case of non-abelian groups which requires more detail.

\subsection{Block encoding for abelian groups}

Based on the fundamental theorem of finite abelian groups, one can form the Fourier transform for a finite abelian group by taking tensor products over the corresponding Fourier transform (DFT matrix) for the groups in the direct product. For example, if an abelian group $G$ is isomorphic to $k$ cyclic groups of dimension $d_i$ respectively, then
\begin{equation}
    F_G = \bigotimes_{i=1}^k F_{d_i} \; \; \; \; \; \; \;  \text{(abelian groups)},
\end{equation}
where $F_d$ is the discrete Fourier transform matrix of dimension $d$. This provides a direct means for diagonalizing convolutions and cross-correlations. For example, for convolution over an abelian group, we can form a matrix with the corresponding eigenvalues and eigenvectors.

\begin{equation}
\begin{split}
    F_G M^{\circledast} \vec{x} &= \sqrt{|G|} (F_G \vec{m}) \odot (F_G \vec{x})\\ &= \sqrt{|G|} \; \text{diag}(F_G \vec{m}) F_G \vec{x},
\end{split}
\end{equation}
where the $\odot$ is entry-wise multiplication. This implies that 

\begin{equation}
    M^{\circledast} = F_G^\dagger \; \text{diag}(\sqrt{|G|} F_G m) F_G,
    \label{eq:conv_mat_expanded}
\end{equation}
where the eigenvalues of $M^{\circledast}$ are the entries of $\sqrt{|G|} \; F_G m$ and the eigenvectors are the columns of $F_G$. Note, that in the above, we assume the $F_G$ are normalized to be unitary and hence we have the additional factor of $\sqrt{|G|}$ not typically seen in the convolution theorem. Since outputs are quantum states, this additional factor will be removed due to the normalization of the state.

Assume we are given access to an oracle $O_{\mathcal{F}m}$ which returns entries of $\hat{m}_i = \text{diag}(\sqrt{|G|} F_G m)_{ii}$ in a separate register:
\begin{equation}
    O_{\mathcal{F}m}: \ket{i}\ket{0^b} \to \ket{i}\ket{\hat{m}_i}.
    \label{eq:fourier_entry_oracle}
\end{equation}
This oracle can be efficiently constructed if the entries $m_i$ are efficiently computable with a classical circuit, e.g. when $m$ is sparse or when the group Fourier transform can be analytically computed (e.g., see \Cref{app:integraleqn}).
% \bk{Put a sentence or two describing when it is efficient to convert between this oracle and prior one, e.g., when there is sparse support or integral that can be analytically calculated. Note in general this may not be possible.} 

Any finite abelian group $G$ of size $n$ is isomorphic to a direct product of $c$ cyclic groups of dimension $n_1, \dots, n_c$. Therefore, the Fourier transform for a finite abelian group is simply $F_G = F_{n_1} \otimes \cdots \otimes F_{n_c}$ where $F_m$ is the standard unitary discrete Fourier transform matrix of dimension $m$. To apply a convolution matrix, we need to apply the Fourier transform, a diagonal matrix, and an inverse Fourier transform (see \autoref{eq:conv_mat_expanded}). The applications of the Fourier transform and inverse Fourier transform, at least for the abelian case, are applications of the corresponding quantum Fourier transform in the appropriate dimensions of the cyclic group decomposition. 

To perform the diagonal matrix operation $\text{diag}(\sqrt{|G|} F_G m)$, we can use a block encoding as below. Alternatively, we can directly apply theorems from \cite{tong2020fast} which provide a method for performing the inverse of a diagonal matrix. 

\begin{lemma}[Block encoding of diagonal matrix, adapted from Lemma 48 of \cite{gilyen2019quantum}]
\label{lem:block_encoding}
Let $A \in \mathbb{C}^{2^w \times 2^w}$ be a diagonal matrix and each entry of $A$ has absolute value of at most 1. Given access to the oracle $O_A$ such that
\begin{equation}
    O_A: \ket{i}\ket{0^b} \to \ket{i}\ket{A_{ii}},
\end{equation}
where $\ket{A_{ii}}$ is a $b$-bit binary description of diagonal element $i$, then we can implement a unitary block encoding $U$ such that $\|A - (\bra{0^{w+3}} \otimes I) U (\ket{0^{w+3}} \otimes I) \| \leq \epsilon$ with $O(\operatorname{polylog}\frac{1}{\epsilon}+w)$ gates and two calls to $O_A$. 
\end{lemma}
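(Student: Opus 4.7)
The plan is to follow the standard ``oracle-load, controlled-rotation, uncompute'' template for block-encoding a diagonal operator from a data-access oracle. Specifically, I would query $O_A$ once to load a $b$-bit description of $A_{ii}$ into a scratch register, use that description as the control for rotations that transfer $A_{ii}$ into an amplitude on a small designated ancilla register, and then query $O_A$ a second time to erase the scratch register.

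In more detail, the work registers consist of a small constant-size amplitude ancilla, a $b$-qubit scratch register for the oracle output, and the $w$-qubit data register holding $\ket{\psi}$. After the first oracle call the state is $\sum_i \psi_i \ket{0}_{\mathrm{anc}} \ket{A_{ii}} \ket{i}$. Writing $A_{ii} = |A_{ii}|\,e^{i\theta_i}$, I would apply a controlled $R_y$ of angle $2\arccos(|A_{ii}|)$ followed by a controlled phase of angle $\theta_i$, both controlled on the bits of $\ket{A_{ii}}$; standard arithmetic-to-rotation synthesis implements these with $O(\operatorname{poly}(b))$ gates. A second call to $O_A$ clears the scratch register, leaving
\begin{equation}
   \sum_i \psi_i\, A_{ii}\, \ket{0}_{\mathrm{anc}} \ket{0^b} \ket{i} \;+\; \ket{\Phi_\perp},
\end{equation}
where $\ket{\Phi_\perp}$ is orthogonal to the $\ket{0}_{\mathrm{anc}}\ket{0^b}$ subspace, so projecting onto this subspace yields exactly $A\ket{\psi}$. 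The remaining ancilla in the $w+3$ count comes from the standard trick of CNOT-copying the data register into a $w$-qubit scratch before querying, so that $O_A$ never acts directly on the protected data register; the copy is uncomputed at the end and contributes the $O(w)$ term to the gate count.

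For the error analysis, truncating each $A_{ii}$ to $b$ bits produces rotation angles accurate to $O(2^{-b})$, so the resulting encoded matrix approximates $A$ in operator norm to the same order. Taking $b = O(\log(1/\epsilon))$ yields total error $\epsilon$ while using only $O(\operatorname{polylog}(1/\epsilon))$ gates in the rotation stage, for an overall cost of $O(w + \operatorname{polylog}(1/\epsilon))$ gates and exactly two queries to $O_A$. The most delicate step, though still routine, is ensuring the construction handles complex entries of $A$: the controlled phase of angle $\theta_i$ inserted above is precisely what produces the required complex amplitude using only a single ancilla for the amplitude encoding, and I expect this to be the main (modest) obstacle relative to the purely real case treated by most textbook presentations.
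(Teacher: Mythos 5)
Your construction is correct, but it is worth noting that the paper does not actually prove this lemma at all: it is imported wholesale as a special case of Lemma 48 of \cite{gilyen2019quantum}, which block-encodes general $s$-sparse matrices from sparse-access oracles, here instantiated with $s=1$. Your argument is therefore a more elementary, self-contained route: because the matrix is diagonal, the row/column index oracles and the state-preparation subroutines of the general sparse-access scheme become trivial, and the whole encoding reduces to the load--rotate--unload pattern you describe. This buys transparency (one sees exactly where the two oracle queries, the $O(\operatorname{polylog}(1/\epsilon))$ rotation-synthesis gates, and the truncation error come from) at the cost of not generalizing to the non-abelian case in \Cref{lem:nonabelian_encoding}, where the block-diagonal matrix is genuinely $d_{\max}$-sparse and the full machinery of Lemma 48 is needed. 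Two small remarks: your attribution of the $w$ extra ancillas to a CNOT-copy of the data register is an ad hoc way to match the quoted $w+3$ count --- in your construction the $b$-qubit scratch is returned deterministically to $\ket{0^b}$ by the second query and only the single amplitude ancilla requires post-selection, so you in fact need fewer post-selected ancillas than the lemma allows (which is harmless, since the statement only asserts that $w+3$ suffice). And your handling of complex entries via a controlled phase conditioned on the stored description of $A_{ii}$ is exactly right; the phase picked up by the $\ket{1}$ branch of the amplitude ancilla is irrelevant since it lies in the orthogonal complement of the post-selected subspace.
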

% \qn{This should be $O(\operatorname{polylog} \frac{1}{\epsilon})$ instead of log. Ref lemma 48, QSVT paper: $\mathcal{O}\left(w+\log ^{2.5}\left(\frac{s_{r} s_{c}}{\varepsilon}\right)\right)$. Why did you ignore the log group size $w$? Don't we want to make the point that the number of additional gates is logarithmic in the size of the group, which is good? These comments apply to \Cref{lem:abelian_encoding}, and 5.4 as well.}

% \qn{This $w$, in general, comes from 2 sources, the oracle $D_s$ in their lemma (in this case this source doesn't matter because sparsity $= d_{\max} =1$, the other source is the Hamiltonian simulation step in their conditioned rotation (\href{https://arxiv.org/pdf/1501.01715.pdf}{Ref: BCK15}). So I guess you can't drop it}

% \bk{Fixed the polylog issue. I ignored the $w$ to keep things simple and it's obvious that implementing the operation requires operations on each qubit. But I guess since you're confused, we should just be clear. I'll add it.}

The block encoding for a given group operation is a simple application of the proper Fourier transforms and the lemma above. 
\begin{lemma}[Fourier block encoding of abelian group convolution or cross-correlation]
\label{lem:abelian_encoding}
For an abelian group $G$, let $w = \lceil \log_2 (|G|) \rceil$. Assume we are given oracle access $O_{\mathcal{F}m}$ to the convolution filter $\hat{m}$ in the Fourier regime as described earlier. Furthermore, assume the filter $\hat{m}$ is normalized so that $|\hat{m}(\rho)| \leq 1$ for all entries. Then, one can obtain a unitary operator $U$ that is a block encoding of the group operation, \textit{e.g.,} for convolution $\| M^{\circledast} - (\bra{0^{w+3}} \otimes I) U (\ket{0^{w+3}} \otimes I) \| \leq \epsilon$, with $O(\operatorname{polylog}\frac{1}{\epsilon}+w)$ additional gates and application of the group Fourier transform, the inverse group Fourier transform, and two calls to the oracle $O_{\mathcal{F}m}$.
\end{lemma}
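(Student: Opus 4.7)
The plan is to reduce the claim to a direct composition of three unitaries: the group Fourier transform $F_G$, a block encoding of the diagonal Fourier-domain filter $\hat{m}$, and the inverse group Fourier transform $F_G^\dagger$. The starting point is the spectral decomposition already derived in the excerpt, namely $M^{\circledast} = F_G^\dagger\,\mathrm{diag}(\hat{m})\,F_G$ where $\hat{m}_i = (\sqrt{|G|}\,F_G m)_i$, and the analogous identities for the other three operations in \Cref{lem:convolution_theorems} (the Fourier-domain factor is either $\hat{m}$ or $\hat{m}^\dagger$, and the surrounding $F_G, F_G^\dagger$ may be swapped for the right actions, but the structure is the same). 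Since $|\hat{m}_i|\le 1$ by hypothesis, $\mathrm{diag}(\hat{m})$ satisfies the precondition of \Cref{lem:block_encoding}.

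First, I would invoke \Cref{lem:block_encoding} with the oracle $O_{\mathcal{F}m}$ (as defined in \autoref{eq:fourier_entry_oracle}) to produce a unitary $U_D$ on $w+3$ ancillas plus the $w$ data qubits satisfying
\begin{equation}
\bigl\| \mathrm{diag}(\hat{m}) - (\bra{0^{w+3}}\otimes I)\,U_D\,(\ket{0^{w+3}}\otimes I)\bigr\| \le \epsilon,
\end{equation}
using $O(\mathrm{polylog}(1/\epsilon)+w)$ additional gates and two queries to $O_{\mathcal{F}m}$. Second, using the fundamental theorem of finite abelian groups, I would implement $F_G = F_{n_1}\otimes \cdots \otimes F_{n_c}$ as a tensor product of qFFTs on the corresponding cyclic factor registers; this is a genuine unitary, so it requires no ancillas, and its inverse is similarly implementable. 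Finally, I would set
\begin{equation}
U \;=\; (I_{w+3}\otimes F_G^\dagger)\,(I_3\otimes U_D)\,(I_{w+3}\otimes F_G),
\end{equation}
where the ancilla register of $U_D$ is left untouched by the Fourier layers.

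To verify the block-encoding bound, I would compute
\begin{equation}
(\bra{0^{w+3}}\otimes I)\,U\,(\ket{0^{w+3}}\otimes I) \;=\; F_G^\dagger\,(\bra{0^{w+3}}\otimes I)\,U_D\,(\ket{0^{w+3}}\otimes I)\,F_G,
\end{equation}
and use unitary invariance of the spectral norm together with the guarantee from \Cref{lem:block_encoding} to conclude the error is at most $\epsilon$. The gate count is $O(\mathrm{polylog}(1/\epsilon)+w)$ on top of two applications of $F_G$ (one forward, one inverse) and two queries to $O_{\mathcal{F}m}$, matching the statement. The analogous constructions for right convolution and the two cross-correlations are obtained by (i) swapping the roles of $F_G$ and $F_G^\dagger$ per \Cref{lem:convolution_theorems}, and (ii) optionally complex-conjugating the filter entries returned by $O_{\mathcal{F}m}$ before feeding them into the diagonal block encoding, which adds only $O(w)$ gates.

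Honestly, there is no serious obstacle here: given \Cref{lem:block_encoding} and the abelian Fourier diagonalization \eqref{eq:conv_mat_expanded}, the argument is essentially a bookkeeping exercise in composing block encodings with surrounding unitaries. The only subtle point worth writing carefully is that conjugation by unitaries (no ancilla cost, no approximation error) preserves the block-encoding error exactly, so the $\epsilon$ from the diagonal step propagates without amplification — a fact that relies on the Fourier transforms being exact qFFTs rather than approximate block encodings themselves.
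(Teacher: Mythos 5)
Your proposal is correct and follows essentially the same route as the paper: decompose $M^{\circledast} = F_G^\dagger\,\mathrm{diag}(\sqrt{|G|}\,F_G \vec{m})\,F_G$ via the abelian Fourier diagonalization, block encode the diagonal factor with \Cref{lem:block_encoding} using two calls to $O_{\mathcal{F}m}$, and sandwich it between exact qFFTs, noting that unitary conjugation preserves the block-encoding error. Your write-up is in fact slightly more careful than the paper's (explicit error propagation, explicit handling of the other three operations); the only blemish is the cosmetic inconsistency between $I_{w+3}$ and $I_3$ in your definition of $U$, which does not affect the argument.
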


\begin{proof}
Let us consider the case of group convolution. Other group operations are equivalent up to simple transformations in the elements of the diagonal transformation. 

For group convolution, one must perform the three operations given in \autoref{eq:conv_mat_expanded} copied below in unnormalized form:
\begin{equation}
     M^{\circledast} = F_G^\dagger \; \text{diag}(F_G \vec{m}) F_G.
\end{equation}

The operation $F_G$ and $F_G^\dagger$ are implementations of the proper quantum Fourier transform for the dimensions of the group. To perform diagonal matrix multiplication of matrix $A = \text{diag}(\sqrt{|G|} F_G \vec{m})$, we block encode $A$ into $U$ using oracle $O_{\mathcal{F}m}$ and \Cref{lem:block_encoding}. Thus, we apply the above operations in the order described to obtain the given block encoding. 
\end{proof}

\subsection{Block encoding for non-abelian groups}
For non-abelian groups, irreducible representations are matrices, and convolution applied in the Fourier regime requires matrix multiplication over the irreducible representation. In this setting, we now assume access to an oracle $O_{\mathcal{F}m}$ which provides matrix entries of the Fourier transform of a convolution filter in a given irreducible representation,

\begin{equation}
    O_{\mathcal{F}m}: \ket{\rho,a,b}\ket{0} \to \ket{\rho,a,b}\ket{\hat{m}(\rho)_{ab}},
\end{equation}
where $\rho \in \hat{G}$ indexes the irreducible representations and $\hat{m}(\rho)_{ab}$ is the $a,b$-th entry of the matrix $\hat{m}(\rho)$. Note, that this oracle can be simplified for abelian groups by simply removing the $a,b$ indexing since all irreducible representations are scalars.

Quantum algorithms efficiently perform group Fourier transforms over many non-abelian groups (\textit{e.g.,} dihedral and symmetric groups) \cite{moore2006generic,childs2010quantum}. The quantum group Fourier transform for a group $G$ returns a state containing a weighted superposition over irreducible representations \cite{childs2010quantum}:
\begin{equation}
\begin{split}
        F_G &= \sum_{x \in G} \ket{\hat{x}}\bra{x} \\&= \sum_{x \in G} \sum_{\rho \in \hat{G}} \sqrt{\frac{d_\rho}{|G|}} \sum_{j,k = 1}^{d_\rho} \rho(x)_{j,k} \ket{\rho,j,k}\bra{x},
\end{split}
\end{equation}
where $\ket{\hat{x}}$ is the group Fourier transform of a given basis vector $\ket{x}$ and $\hat{G}$ is the set of irreducible representations of $G$. The factor $\sqrt{d_\rho/|G|}$ included above enforces $F_G$ as unitary. $F_G$ also has the convenient property that it block diagonalizes the left and right regular representations into the irreducible representations \cite{childs2010quantum}, \textit{e.g.,} for the left regular representation, we have that \cite[Eq. 118]{childs2010quantum}.
\begin{equation}
    \hat{L}_i = \sum_{j \in G} \ket{\widehat{ij}} \bra{\hat{j}} = F_G L_i F_G^\dagger = \bigoplus_{\rho \in \hat{G}}\rho(i) \otimes I_{d_\rho} \,.
    \label{eq:block_diagonalizing_representation}
\end{equation}

Since convolutions and cross-correlations over non-abelian groups require matrix multiplication over irreducible representations, we cannot simply diagonalize the state written above as in the abelian case. Here, to convolve $\vec{m}$ with $\vec{x}$, we need to apply a matrix of the form below.
\begin{equation}
    \vec{m} \circledast \vec{x} = F_G^{-1} \left[\bigoplus_{\rho \in \hat{G}} \hat{m}(\rho) \otimes I_{d_\rho} \right] F_G \vec{x},
    \label{eq:nonabelian_general}
\end{equation}
where $\hat{m}(\rho)$ is the Fourier transformed matrix for irrep $\rho$ with dimensionality $d_\rho$.

\begin{lemma}[Fourier block encoding of general group convolution or cross-correlation]
\label{lem:nonabelian_encoding}
For a group $G$ of dimension $|G|$ which has irreducible representations of dimension no greater than $d_{\max}$, let $w = \lceil \log_2 (|G|) \rceil$. Assume we are given oracle access $O_{\mathcal{F}m}$ to the convolution filter $\hat{m}$ in the Fourier regime as described earlier. Furthermore, assume the filter $\hat{m}$ is normalized so that $|\hat{m}(\rho)_{ab}| \leq 1$ for all entries. Then, one can obtain a unitary operator $U$ that is a block encoding of the group operation -- \textit{e.g.,} for convolution $\| M^{\circledast} - d_{\max} (\bra{0^{w+3}} \otimes I) U (\ket{0^{w+3}} \otimes I) \| \leq \epsilon$ -- with one application of the group Fourier transform and its inverse, two calls to the oracle $O_{\mathcal{F}m}$, and $O(\operatorname{polylog}\frac{d_{\max}}{\epsilon}+w)$ additional gates.
\end{lemma}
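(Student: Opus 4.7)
The plan is to realize the target operator through the identity $M^\circledast = F_G^\dagger \hat{M} F_G$ with $\hat{M} := \bigoplus_{\rho \in \hat{G}} \hat{m}(\rho) \otimes I_{d_\rho}$, implementing it as the composition of three factors: the quantum group Fourier transform $F_G$ (available via \cite{moore2006generic,childs2010quantum} for the groups we care about), a block encoding of $\hat{M}$ with subnormalization $d_{\max}$, and its inverse $F_G^\dagger$. The outer factors are unitary on the data register, so composing them with a block encoding of $\hat{M}$ leaves the subnormalization unchanged; all of the real work concentrates on block-encoding $\hat{M}$.

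For that middle step I would exploit the sparsity of $\hat{M}$ in the Fourier basis $\ket{\rho, j, k}$ produced by $F_G$. Its only nonzero matrix elements are $\langle \rho, j, k | \hat{M} | \rho, j', k\rangle = \hat{m}(\rho)_{j, j'}$, so both row and column sparsity are bounded by $d_{\max}$ and the locations of the nonzeros are determined by the indices without any auxiliary ``column locator'' oracle. This places us in the sparse-access regime for which \Cref{lem:block_encoding} has a direct generalization (cf.\ Lemma 48 of \cite{gilyen2019quantum}). The construction I have in mind is: prepare the uniform superposition $\frac{1}{\sqrt{d_{\max}}}\sum_{j'=0}^{d_{\max}-1}\ket{j'}$ on a fresh register; query $O_{\mathcal{F}m}$ to coherently load $\hat{m}(\rho)_{j', j}$ into a data ancilla; use controlled single-qubit rotations to amplitude-encode this complex value (splitting magnitude and phase); uncompute the oracle query; and finally swap the $j$ and $j'$ registers. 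Projecting the ancillas onto $\ket{0}$ returns precisely $\hat{M}/d_{\max}$ on the data register.

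Stitching everything together, $U := F_G^\dagger \cdot U_{\hat{M}} \cdot F_G$ yields a $(w+3)$-ancilla block encoding obeying $\|M^\circledast - d_{\max}(\bra{0^{w+3}}\otimes I)U(\ket{0^{w+3}}\otimes I)\| \leq \epsilon$. The gate count decomposes naturally: $O(\operatorname{polylog}(d_{\max}/\epsilon))$ gates for controlled rotation synthesis that converts the $b$-bit oracle outputs into amplitudes to the required precision, plus $O(w)$ gates for the uniform preparation, the swap, and any ancilla arithmetic on the $w$-qubit register. The two invocations of $O_{\mathcal{F}m}$ are the oracle call and its uncomputation inside the middle block encoding; the group Fourier transform is applied once forward and once backward.

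The main obstacle I anticipate is handling the variable irrep dimensions $d_\rho \leq d_{\max}$ against a single column-index ancilla of size $d_{\max}$. One must ensure that $O_{\mathcal{F}m}$ returns zero (or an out-of-range flag) for indices $j' \geq d_\rho$, that the uniform preparation and swap commute correctly with this padding, and that precisely this padding is what forces the subnormalization to be $d_{\max}$ rather than a tighter $d_\rho$. One also has to verify that the $\rho$ register emitted by the quantum group Fourier transform in \eqref{eq:block_diagonalizing_representation} is explicitly accessible for controlling the state preparation and the swap range. Once the bookkeeping is arranged, the error analysis is the routine $\operatorname{polylog}(1/\epsilon)$ accumulation from approximate rotations, and the stated bound follows.
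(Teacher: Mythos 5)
Your proposal takes essentially the same route as the paper: write $M^{\circledast} = F_G^\dagger \left[\bigoplus_{\rho \in \hat{G}} \hat{m}(\rho) \otimes I_{d_\rho}\right] F_G$, block-encode the $d_{\max}$-sparse block-diagonal middle factor via the sparse-access block-encoding lemma (Lemma 48 of \cite{gilyen2019quantum}), noting that the nonzero locations are determined by the block structure so no separate locator oracle is needed, and conjugate by the quantum group Fourier transform and its inverse. The paper's proof is terser---it cites Lemma 48 directly rather than unpacking the state-preparation, rotation, and swap steps---but the decomposition, the two oracle calls, and the $O(\operatorname{polylog}\frac{d_{\max}}{\epsilon}+w)$ gate count are identical.
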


\begin{proof}
Let us consider the case of group convolution. Other group operations are equivalent up to simple transformations in the elements of the block diagonal transformation. For group convolution, one must perform the three operations given in \autoref{eq:nonabelian_general} copied below:
\begin{equation}
      F_G^\dagger \left[\bigoplus_{\rho \in \hat{G}} \hat{m}(\rho) \otimes I_{d_\rho} \right] F_G .
\end{equation}

The operations $F_G$ and $F_G^\dagger$ are implementations of the group Fourier transform and its inverse \cite{moore2006generic}. Let $|G| \leq 2^w$ so we can encode the data in $w$ qubits. To form the block encoding we follow methods in \cite{gilyen2019quantum}. For our block encoding, we construct a data register of $w$ qubits and ancillary registers of $w+3$ qubits where the $\ket{0}$ measurement in this register corresponds to the location of the block encoding. We first apply the group Fourier transform to the data register. The middle operation $\left[\bigoplus_{\rho \in \hat{G}} \hat{m}(\rho) \otimes I_{d_\rho} \right]$ is a block diagonal matrix which we block encode using Lemma 48 of \cite{gilyen2019quantum}. Each row or column of the matrix is at most $d_{\max}$ sparse. Note, this lemma also requires oracles that provide the locations of each sparse entry in a given row or column of the matrix; in our case, since matrices are block diagonal, locating these entries is easy. Applying this operation up to error $\epsilon$ in operator norm requires two calls to the oracle $O_{\mathcal{F}m}$ and $O(\text{poly} \log \frac{d_{\max}}{\epsilon})$ additional gates \cite{gilyen2019quantum}. Finally, one applies an inverse group Fourier transform $F_G^\dagger$ to the data register to obtain the given encoding.
\end{proof}

\begin{remark}
$d_{\max}$ corresponds to the maximum sparsity of any row or column of the block diagonal matrix in our block encoding. The number of irreducible representations of a group is equal to the number of conjugacy classes of the group, so groups with many conjugacy classes tend to have lower dimensional irreducible representations.  For all abelian groups, $d_{\max}$ is trivially equal to $1$. For many non-abelian groups, $d_{\max}$ is also strictly bounded, \textit{e.g.,} $d_{\max} = 2$ for dihedral groups $D_{2n}$ for all $n$ \cite{childs2010quantum}.
\end{remark}

\subsection{Performing group operations on quantum states}

With the block encodings described above, we can apply linear group operations to an input state $\ket{x}$ and leverage the runtime benefits of the quantum group Fourier transform to efficiently perform linear group operations. Here, we consider the case of group convolution, and note that other group operations can be performed by simple changes to the steps below. First, we show how to perform group convolution directly on an input state.

\begin{prop}[Applying group convolution to $\ket{x}$]
\label{prop:forward_convolution}
For a group $G$ of dimension $|G|$ which has irreducible representations of dimension no greater than $d_{\max}$, assume we are given oracle access to the convolution filter $m$ to form block encodings described in \Cref{lem:abelian_encoding} or \Cref{lem:nonabelian_encoding}. Furthermore, assume the filter $\hat{m}$ is normalized so that $|\hat{m}(\rho)_{ab}| \leq 1$ for all entries. Given a quantum state $\ket{x} = \sum_i x_i \ket{i}$ containing the input state $\vec{x}$ normalized such that $\|\vec{x}\|_2 = 1$, one can construct a state $\ket{\tilde{y}}$ which is $\epsilon$-close to the true normalized output $\vec{m} \circ \vec{x}$, \textit{i.e.,} $\|\ket{\tilde{y}} - \ket{\vec{m} \circ \vec{x}} \| \leq \epsilon$ where $\circ$ corresponds to one of the group operations delineated in \Cref{lem:conv_to_linear_comb}. This operation has a runtime that scales as $O(  T_B \kappa d_{\max}/\|M\|)$ where $T_B$ is the runtime of the block encoding (which includes the dependence on $\epsilon$) of \Cref{lem:abelian_encoding} or \Cref{lem:nonabelian_encoding}, $\|M\|$ is the operator norm of $M$, and $\kappa$ is the condition number of $M$. 

\end{prop}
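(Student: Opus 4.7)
The plan is to combine the block encoding from Lemma~\ref{lem:abelian_encoding} (abelian case) or Lemma~\ref{lem:nonabelian_encoding} (non-abelian case) with amplitude amplification. In both cases, the block encoding yields a unitary $U$ whose top-left block approximates $M/d_{\max}$ (with $d_{\max}=1$ in the abelian case) up to operator-norm error $\epsilon_B$, using runtime $T_B$ that depends polylogarithmically on $1/\epsilon_B$.

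First I would apply $U$ to $\ket{0^{w+3}}\ket{x}$, obtaining
\begin{equation}
    U \ket{0^{w+3}}\ket{x} = \frac{1}{d_{\max}}\ket{0^{w+3}}\,M\ket{x} + \ket{\Psi_\perp},
\end{equation}
up to an additive error of norm $\epsilon_B$, where $\ket{\Psi_\perp}$ is supported on the orthogonal complement of $\ket{0^{w+3}}\otimes \mathcal{H}$. Measuring the ancilla register and obtaining outcome $\ket{0^{w+3}}$ projects onto the (unnormalized) target state $M\ket{x}/d_{\max}$, whose squared amplitude is $\|M\ket{x}\|_2^2/d_{\max}^2$. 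Since $\ket{x}$ is a unit vector and $\sigma_{\min}(M) = \|M\|/\kappa$, we have the lower bound $\|M\ket{x}\|_2 \geq \|M\|/\kappa$, so the success probability is at least $\|M\|^2/(\kappa^2 d_{\max}^2)$.

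Next I would apply fixed-point amplitude amplification \cite{brassard2002quantum,ambainis2012variable} to boost this probability to $\Theta(1)$. The number of amplification rounds is $O(\kappa d_{\max}/\|M\|)$, each requiring one application of $U$ and its inverse together with $O(w)$ auxiliary gates, giving a total runtime of $O(T_B \kappa d_{\max}/\|M\|)$ as claimed. Upon success, renormalizing the resulting $\ket{0^{w+3}}$-projected state yields exactly $\ket{\vec{m}\circ\vec{x}}$ in the ideal (exact block encoding) case.

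The main subtlety will be propagating the block-encoding error through the amplitude amplification procedure so that $\|\ket{\tilde y} - \ket{\vec{m}\circ\vec{x}}\|_2 \leq \epsilon$. The amplification uses $O(\kappa d_{\max}/\|M\|)$ calls to $U$, so by subadditivity of errors, setting $\epsilon_B = \Theta(\epsilon \|M\|/(\kappa d_{\max}))$ in the block encoding suffices; this inflates $T_B$ only by an additional $\operatorname{polylog}(\kappa d_{\max}/(\epsilon \|M\|))$ factor, which is absorbed into the stated runtime dependence of $T_B$ on $\epsilon$. One additional care point is that renormalization after post-selection can amplify small perturbations inversely to $\|M\ket{x}\|_2$; the condition-number lower bound $\|M\ket{x}\|_2 \geq \|M\|/\kappa$ ensures this factor is already accounted for in the chosen $\epsilon_B$.
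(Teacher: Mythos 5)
Your proposal is correct and follows essentially the same route as the paper's proof: apply the block encoding of Lemma~\ref{lem:abelian_encoding} or Lemma~\ref{lem:nonabelian_encoding}, post-select on the ancilla, lower-bound the success amplitude by $\sigma_{\min}(M)/d_{\max} = \|M\|/(\kappa\, d_{\max})$, and amplify. Your additional discussion of how the block-encoding error $\epsilon_B$ must be tightened to $\Theta(\epsilon\|M\|/(\kappa\, d_{\max}))$ to survive amplification and renormalization is a welcome refinement that the paper's proof leaves implicit.
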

\begin{proof}
One first applies the block encoding of \Cref{lem:abelian_encoding} or \Cref{lem:nonabelian_encoding} to a state with the data encoded in the data register. After applying the block encoding, we obtain success when ancillary registers are measured in the $\ket{0}$ basis. The minimum singular value of the linear operation is $\|M\|/\kappa$ and the block encoding is such that it has a normalization factor of $d_{\max}$. Therefore, this has a worst-case success probability of $(\kappa \frac{d_{\max}}{\|M\|})^{-2}$. By using amplitude amplification, this probability can be improved to $O\left((\kappa \frac{d_{\max}}{\|M\|})^{-1}\right)$ \cite{brassard2002quantum,ambainis2012variable}. 
\end{proof} 

\begin{remark}
The condition number $\kappa$ can be calculated by analyzing the norms of the diagonal or block diagonal matrices in the block encoding. For example, for abelian groups,
\begin{equation}
    \kappa = \frac{\max_{\rho \in \hat{G}} |\hat{m}(\rho)|}{\min_{\rho \in \hat{G}} |\hat{m}(\rho)|}.
\end{equation}
For non-abelian groups, we analyze the singular values of the Fourier transform over its irreducible representations. Let $s_{min}(M)$ and $s_{max}(M)$ be the smallest and largest singular values of a matrix $M$, then for non-abelian groups,
\begin{equation}
    \kappa = \frac{\max_{\rho \in \hat{G}} s_{\max}(\hat{m}(\rho))}{\min_{\rho \in \hat{G}} s_{\min}(\hat{m}(\rho))}.
\end{equation}
\end{remark}

\section{Inverse group operations or deconvolution}
\label{sec:vi}
Given group operations as block encodings, we can conveniently perform polynomial transformations to the singular values of the block encoded matrix \cite{gilyen2019quantum}. This consequentially provides a straightforward means to apply inverse convolutions or cross-correlations, also commonly termed deconvolution. In this setting, one is provided with the output $\vec{y} = \vec{m} \circ \vec{x}$ as a quantum state $\ket{y}$ where $\circ$ corresponds to convolution or cross-correlation. Given information about the filter $\vec{m}$, one hopes to reconstruct the input to the group operation $\vec{x}$ as a quantum state $\ket{x}$. 

Here, we provide an algorithm to perform deconvolution given the block encoding using oracle $O_m$ described in \Cref{lem:O_m_block_encoding}. Similar steps can be followed to apply deconvolution for other block encodings.

\begin{prop}[Applying inverse group convolution or cross-correlation (deconvolution)]
For a group $G$ of dimension $|G|$, assume we are given oracle $O_m$ to access the convolution filter $m$ as described in \Cref{lem:O_m_block_encoding}. Given a quantum state $\ket{y}$ containing the output of $\vec{y} = \vec{m} \circ \vec{x}$ normalized such that $\|\vec{y}\|_2=1$, one can construct a state $\ket{\tilde{x}}$ which is $\epsilon$-close to the true normalized input $\vec{x}$, where $\circ$ corresponds to one of the group operations delineated in \Cref{lem:conv_to_linear_comb}. This operation has a runtime that scales as $O( T_B \frac{\kappa^2}{\|M\|} \operatorname{polylog} \frac{ \kappa}{\|M\| \epsilon} )$ where $T_B$ is the runtime of the block encoding (which includes the dependence on $\epsilon$) of \Cref{lem:O_m_block_encoding} and $\kappa$ is the condition number of the linear group operation $M$.
\label{prop:inverse_op}
\end{prop}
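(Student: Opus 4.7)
The plan is to perform deconvolution via QSVT-based matrix inversion, in the style of HHL specialized to our group-operation matrix. First, I invoke \Cref{lem:O_m_block_encoding} with the given oracle $O_m$ to obtain a block encoding of $M$. Because $\|\vec{m}\|_1 = 1$ enforces $\|M\| \leq 1$, the block encoding has subnormalization factor $1$ and each query costs $T_B$. Then, using the QSVT matrix-inversion algorithm of \cite{gilyen2019quantum}, I would build an approximate block encoding of $M^{-1}$ (suitably subnormalized), apply it to $\ket{y}$, and finish with amplitude amplification to produce $\ket{\tilde{x}}$.

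For the QSVT step, the singular values of $M$ lie in $[\|M\|/\kappa, \|M\|] \subseteq [\|M\|/\kappa, 1]$, so the effective condition number presented to QSVT is $\kappa_{\text{eff}} = \kappa/\|M\|$. I would take the standard odd polynomial approximation to $1/x$ on $[1/\kappa_{\text{eff}},1]$ of degree $O(\kappa_{\text{eff}}\log(\kappa_{\text{eff}}/\epsilon'))$, compiled via QSVT into a unitary $U_{\text{inv}}$ that $\epsilon'$-approximately block-encodes $\|M\| M^{-1}/(2\kappa)$ using $O((\kappa/\|M\|)\log(\kappa/(\|M\|\epsilon')))$ queries to the block encoding of $M$. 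Applying $U_{\text{inv}}$ to $\ket{0^{\text{anc}}}\ket{y}$ and post-selecting the ancilla on $\ket{0^{\text{anc}}}$ leaves a state on the data register proportional to $M^{-1}\ket{y}$; since $\vec{x}$ and $M^{-1}\vec{y}$ are parallel, its normalization equals $\ket{x}$. The un-amplified success probability is $\|M\|^2 \|M^{-1}\ket{y}\|^2/(4\kappa^2) \geq 1/(4\kappa^2)$, using the lower bound $\|M^{-1}\ket{y}\| \geq 1/\|M\|$, so amplitude amplification with $O(\kappa)$ rounds suffices to boost it to $\Theta(1)$.

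The step I expect to require the most care is the error accounting. An $\epsilon'$-error in the block encoding of $\|M\|M^{-1}/(2\kappa)$ translates, after the division by $\|M^{-1}\ket{y}\|$ implicit in post-selection plus amplification, into an error of order $\epsilon'\kappa/\|M\|$ on the normalized output state. Setting $\epsilon' = \Theta(\epsilon\|M\|/\kappa)$ therefore achieves $\|\ket{\tilde{x}} - \ket{x}\| \leq \epsilon$, and because QSVT's query count depends only polylogarithmically on $\epsilon'$, this rescaling is absorbed into the $\operatorname{polylog}(\kappa/(\|M\|\epsilon))$ factor. Multiplying the per-inversion cost $O((\kappa/\|M\|)\operatorname{polylog}(\kappa/(\|M\|\epsilon)))$ by the $O(\kappa)$ amplification rounds and the base block-encoding cost $T_B$ gives the stated runtime $O(T_B(\kappa^2/\|M\|)\operatorname{polylog}(\kappa/(\|M\|\epsilon)))$. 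Beyond this bookkeeping, every ingredient is off-the-shelf: LCU block encoding from \Cref{lem:O_m_block_encoding}, QSVT matrix inversion, and amplitude amplification.
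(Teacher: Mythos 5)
Your proposal is correct and follows essentially the same route as the paper: block encode $M$ via \Cref{lem:O_m_block_encoding}, apply QSVT with an $O\bigl(\tfrac{\kappa}{\|M\|}\log\tfrac{\kappa}{\|M\|\epsilon}\bigr)$-degree polynomial approximation of $1/x$ on the spectral window $[\|M\|/\kappa,1]$ (Lemma 40 / Theorem 41 of Gily\'en et al.), note the subnormalization $\|M\|/(2\kappa)$ gives an unamplified success probability of $\Omega(\kappa^{-2})$, and boost with $O(\kappa)$ rounds of amplitude amplification. Your explicit error propagation (choosing $\epsilon'=\Theta(\epsilon\|M\|/\kappa)$ and absorbing it into the polylog) is slightly more careful than what the paper writes out, but the argument and the resulting runtime are identical.
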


\begin{proof}
Follow the steps of \Cref{prop:O_m_forward_convolution}, but instead of directly applying the block encoding, apply a singular value transformation with polynomial approximations of the inverse function instead. In this setting, we start with a unitary $U$ that block encodes $M$ as in $\Cref{lem:O_m_block_encoding}$. Let $M=W D V^\dagger$ be the singular value decomposition of $M$ where $D$ is a diagonal matrix and $W$ and $V$ are unitary matrices. Using the quantum singular value transformation, with $d$ applications of the block encoding $U$ and its inverse, one can apply polynomial transformations $P(\cdot)$ of up to degree $d$ to the diagonal entries of $D$, \textit{i.e.,} $p(D)_{ii}=p(D_{ii})$ \cite{gilyen2019quantum}. Therefore, one must simply apply a polynomial approximating the inverse function as shown in Lemma 40 of \cite{gilyen2019quantum} and described below.

On the domain $[-1,1] \setminus \left(- \|M\| \kappa^{-1},\|M\| \kappa^{-1} \right)$, there exists an $O\left(\frac{\kappa}{\|M\|} \log(\frac{\kappa}{\|M\|  \epsilon})\right)$-degree polynomial function that is $\epsilon$-close to the inverse function $1/x$ \cite{gilyen2019quantum}. Using this polynomial transformation of the singular values, one can obtain an $\epsilon$-close block encoding of the deconvolution operation using $O\left(\frac{\kappa}{\|M\|} \log(\frac{\kappa}{\|M\|  \epsilon})\right)$ applications of the unitary $U$ block encoding $M$ and its inverse. See Lemma 40 and Theorem 41 of \cite{gilyen2019quantum}.  Upon application of this singular value transformation, the probability of successfully obtaining the state $\ket{\tilde{x}}$ is $O(\kappa^{-2})$ since the normalization factor of the inverse block encoding in Theorem 41 of \cite{gilyen2019quantum} is equal to the smallest eigenvalue of $M$, which is $\|M\| \kappa^{-1}$, and the smallest eigenvalue of the inverse operation (deconvolution) $M^{-1}$ is equal to $\|M\|^{-1}$. By using amplitude amplification, this probability can be improved to $O(\kappa^{-1})$ \cite{brassard2002quantum,ambainis2012variable}. 
\end{proof}

\begin{remark}
Alternatively, when performing deconvolution in the Fourier regime using the block encoding in \Cref{lem:nonabelian_encoding}, the operation has a runtime that scales as $O(\frac{T_B}{d_{\max}}\frac{\kappa^2}{\|M\|} \operatorname{polylog} \frac{ \kappa}{\|M\| \epsilon})$, where $T_B$ is the runtime of $\Cref{lem:nonabelian_encoding}$. We defer the proof and formal statement to \Cref{app:deconvolve}.
\end{remark}

% \qn{Change $\kappa ^{-1}$ to $\|M\|\kappa^{-1}$. Q: Lemma 4.1 cannot be applied to complex filters. Q: Account for time to compute $\Phi$?
% The above time is to construct the block encoding of deconvolution, not the time to get the state}

\section{Example application for integral equations}
\label{sec:vii}
One particular application of the group convolution algorithms written above is in solving linear integral equations over domains that contain a certain symmetry. The example we consider here is solving a linear integral equation defined over the surface of an $n$-dimensional Torus $\boldsymbol{T}^d$ which is the $d$-fold product of the circle $\boldsymbol{S}^1$.

Consider an integral equation defined as below:
\begin{equation}
    g(\vec{t}) = f(\vec{t}) + \lambda \int_{\boldsymbol{T^d}} K(\| \vec{t} - \vec{t'} \|) f(\vec{t'}) d t' ,
\end{equation}
where $\vec{t} = [t_1, t_2, \dots, t_d]$ is an $d$-dimensional vector indicating the location of a point on the surface of the Torus, $f(\vec{t})$ is the unknown function (defined on the Torus) we would like to solve for, $\|\vec{t} - \vec{t'} \|$ is a norm which is cyclically invariant (see example in \autoref{app:integraleqn}), and $\lambda$ is a scalar constant. 

To solve the given integral equation numerically, one approximates the integral above via a numerical approximation scheme. For example, using the Nystr\"om method \cite{atkinson2009numerical}, one can approximate the integral as a weighted sum over $n^d$ discretized points with $n$ points evenly spaced along each dimension of $\vec{t}$. In each dimension $k$, we identify these evenly spaced points as $t_{k,i}$ where $i \in [n]$ such that any given discretized point $\vec{t}_{i_1, \dots, i_d} = [ t_{1,i_1}, \dots, t_{d,i_d}]$. Here, our approximation would be
\begin{equation}
\begin{split}
    \int_{\boldsymbol{T^d}} K(\| \vec{t} - \vec{t'} \|) & f(\vec{t'}) d t' \approx \\\frac{1}{n^d} \sum_{i_1 = 1}^{n} \cdots & \sum_{i_d=1}^{n} K(\| \vec{t} - \vec{t'}_{i_1, \dots, i_d} \| ) f(\vec{t'}_{i_1, \dots, i_d}).
\end{split}
\end{equation}
Therefore, to solve the above integral equation, we discretize our numerical solution and solve a linear systems of equations. In our example, we have
\begin{equation}
    \boldsymbol{g} = (I + \lambda \boldsymbol{K})\boldsymbol{f},
    \label{eq:ex_integral_equation}
\end{equation}
where bold text indicates discretized values vectorized in lexicographic order and $\boldsymbol{K}$ is a $n^d \times n^d$ matrix. The form of the matrix $\boldsymbol{K}$ corresponds to a group cross-correlation over the direct product of the cyclic groups $(\mathbb{Z}/n\mathbb{Z})^{(\times d)}$ where the filter $\vec{m}$ is equal to the values of the first row of $\boldsymbol{K}$. In other words, moving from one row of $\boldsymbol{K}$ to another is equivalent to applying a permutation operation corresponding to one of the group elements of $(\mathbb{Z}/n\mathbb{Z})^{(\times d)}$. 

Solving \autoref{eq:ex_integral_equation} requires inverting a dense matrix, and most existing quantum algorithms \cite{hhl, preconditionedHHL, childshhl} for this task only run efficiently with sparse matrices. This matrix, however, has added structure as it is a group cross-correlation and thus can be solved by inverting the cross-correlation operation via algorithms outlined here (\Cref{prop:inverse_op} or \Cref{app:deconvolve}). When $\boldsymbol{g}$ is provided as a quantum state, such an operation can be performed efficiently. In \autoref{app:integraleqn} we provide a detailed example of treating integrals over a Torus as group cross-correlation.

\section{Discussion}
\label{sec:viii}
Group convolutions and cross-correlations cover a wide class of equivariant functions commonly studied in machine learning and mathematics. We provide a framework and methodology for performing these equivariant group operations on a quantum computer. In well-conditioned cases, the runtimes of these operations scale logarithmically with the dimension of the group. Outputs of our algorithms, which are quantum states storing the vectorial output of the operations, can be post-processed or analyzed through various schemes, e.g., see \cite{lloyd2014quantum,kiani2022quantum,montanaro2015quantum,hoyer1997efficient} for examples.

For applications in machine learning, it has been shown that group-equivariant neural networks can be decomposed into layers of group convolutions \cite{kondor2018generalization} followed by nonlinear activation functions and pooling operators. Our algorithms provide a path towards quantizing the linear operations in group-equivariant neural networks \cite{cohen2016group,kondor2018generalization,ravanbakhsh2017equivariance,maron2019universality} and exploring potential quantum speedups in these machine learning models. Furthermore, one can apply our framework in a variational algorithm where a quantum circuit is parameterized and optimized as a convolutional filter \cite{cerezo2020variational}. More generally, our work provides a means to speed up linear operations for kernel matrices in the form of convolutions or cross-correlations commonly found in algorithms for machine learning and numerical methods \cite{rasmussen2003gaussian,dietrich1997fast,atkinson2009numerical}. This generalizes results from previous quantum algorithms for implementing circulant or Toeplitz matrices \cite{wan2018asymptotic,zhou2017efficient} and calculating Green's functions via convolutional formulations \cite{tong2020fast} using our algorithms for inverting group convolutions.

\section*{Acknowledgements}
This work was supported by the NSF, IARPA, DOE, and DARPA. The authors would like to thank Milad Marvian and William Kaminsky for helpful discussions.
GDP is a member of the ``Gruppo Nazionale per la Fisica Matematica (GNFM)'' of the ``Istituto Nazionale di Alta Matematica ``Francesco Severi'' (INdAM)''.

% \bibliographystyle{apsrev4-1}
% \bibliography{main.bib}

% \printbibliography
% \pagebreak
% \clearpage
\appendix
\onecolumngrid
\section{Deferred proofs}
\label{app:conv_thm_proofs}

\subsection{Linear algebraic formulation
(\Cref{lem:conv_to_linear_comb})}
We would like to prove the equivalences shown in \Cref{lem:conv_to_linear_comb} and repeated below.
\begin{equation}
    \begin{split}
        (m \circledast x) (u) = \sum_{v \in G} m(uv^{-1})x(v) & \; \; \; \; \; \stackrel{\text{convolution}}{\Longleftrightarrow} \; \; \; \; \; \vec{m} \circledast \vec{x} = M^{\circledast} \vec{x}, \; \; M^{\circledast} = \sum_{i \in G} m_i L_i \\
        (m \circledast_R x) (u) = \sum_{v \in G} m(v^{-1}u)x(v) & \; \; \; \; \; \stackrel{\text{right convolution}}{\Longleftrightarrow} \; \; \; \; \; \vec{m} \circledast _R \vec{x} = M^{R \circledast} \vec{x}, \; \; M^{R \circledast} = \sum_{i \in G} m_i R_i \\
        (m \star x) (u) = \sum_{v \in G} m(vu^{-1})x(v) & \; \; \; \; \; \stackrel{\text{cross-correlation}}{\Longleftrightarrow} \; \; \; \; \; \vec{m} \star \vec{x} = M^{\star} \vec{x}, \; \; \; M^{\star} = \sum_{i \in G} m_i L_i^{-1} \\
        (m \star_R x) (u) = \sum_{v \in G} m(u^{-1}v)x(v) & \; \; \; \; \; \stackrel{\text{right cross-correlation}}{\Longleftrightarrow} \; \; \; \; \; \vec{m} \star _R \vec{x} = M^{R \star} \vec{x}, \; \; M^{R \star} = \sum_{i \in G} m_i R_i^{-1} \\
    \end{split}
\end{equation}

We provide a proof for convolution $(m \circledast x) (u)$ noting that the others require similar steps.

\begin{equation}
    \begin{split}
        (m \circledast x) (u) &= \sum_{v \in G} m(uv^{-1})x(v) \\
        &=  \sum_{v \in G} m(v^{-1})x(vu) \\
        &= \sum_{v \in G} m(v)x(v^{-1}u) \\
        &= \sum_{v \in G} m(v) \left[ L_v \vec{x} \right]_u,
    \end{split}
\end{equation}
where the notation $[ \cdot ]_i$ indicates the $i$-th component of the vector within the brackets. In the second and third lines above, we re-order the sum over all group elements by transforming $v \to vu$ and $v \to v^{-1}$ respectively. Converting the above into a vector form over the output, we have the final result:
\begin{equation}
    \vec{m} \circledast \vec{x} = \sum_{i \in G} m_i L_i \vec{x}.
\end{equation}

\subsection{Proofs of convolution theorems (\Cref{lem:convolution_theorems})}
We would like to prove the equivalences via the various convolution theorems shown in \Cref{lem:convolution_theorems} and repeated below.
\begin{equation}
    \begin{split}
        (m \circledast x) (u) = \sum_{v \in G} m(uv^{-1})x(v) & \; \; \; \; \; \stackrel{{\rm convolution}}{\Longleftrightarrow} \; \; \; \; \; \widehat{(m \circledast x)}(\rho) = \hat{m}(\rho) \hat{x}(\rho) \\
        (m \circledast_R x) (u) = \sum_{v \in G} m(v^{-1}u)x(v) & \; \; \; \; \; \stackrel{{\rm right \ convolution}}{\Longleftrightarrow} \; \; \; \; \; \widehat{(m \circledast_R x)}(\rho) =  \hat{x}(\rho) \hat{m}(\rho) \\
        (m \star x) (u) = \sum_{v \in G} m(vu^{-1})x(v) & \; \; \; \; \; \stackrel{{\rm cross-correlation}}{\Longleftrightarrow} \; \; \; \; \; \widehat{(m \star x)}(\rho) =  \hat{m}(\rho)^\dagger  \hat{x}(\rho)  \\
        (m \star_R x) (u) = \sum_{v \in G} m(u^{-1}v)x(v) & \; \; \; \; \; \stackrel{{\rm right \ cross-correlation}}{\Longleftrightarrow} \; \; \; \; \; \widehat{(m \star_R x)}(\rho) =  \hat{x}(\rho) \hat{m}(\rho)^\dagger  \\
    \end{split}
\end{equation}

For standard convolution, we have:
\begin{equation}
    \begin{split}
        \widehat{(m \circledast x)}(\rho) &= \sum_{u \in G} \rho(u) \sum_{v \in G} m(uv^{-1})x(v) \\
        &= \sum_{u \in G} \sum_{v \in G} \rho(u) \rho(v^{-1}) \rho(v) m(uv^{-1})x(v) \\
        &=  \sum_{v \in G} \sum_{u \in G} m(uv^{-1}) \rho(uv^{-1}) x(v) \rho(v) \\
        &= \sum_{v \in G} \left[ \sum_{u \in G} m(uv^{-1}) \rho(uv^{-1}) \right] x(v) \rho(v) \\
        &= \sum_{v \in G} \hat{m}(\rho)  x(v) \rho(v) \\
        &= \hat{m}(\rho) \hat{x}(\rho).
    \end{split}
\end{equation}

Since $(m \circledast_R x)(u) = (x \circledast m)(u) $, then the above argument can be applied to also show that $\widehat{(m \circledast_R x)}(\rho) =  \hat{x}(\rho) \hat{m}(\rho)$.

For standard cross-correlation, we similarly can show that:
\begin{equation}
    \begin{split}
        \widehat{(m \star x)}(\rho) &= \sum_{u \in G} \rho(u) \sum_{v \in G} m(vu^{-1})x(v) \\
        &= \sum_{u \in G} \sum_{v \in G} \rho(u)  \rho(v^{-1}) \rho(v)  m(vu^{-1})x(v) \\
        &=  \sum_{v \in G} \sum_{u \in G} m(vu^{-1}) \rho(uv^{-1}) x(v) \rho(v) \\
        &=  \sum_{v \in G} \left[ \sum_{u \in G} m(vu^{-1}) \rho(vu^{-1})^\dagger \right] x(v) \rho(v) \\
        &=  \sum_{v \in G} \hat{m}(\rho)^\dagger x(v) \rho(v) \\
        &=  \hat{m}(\rho)^\dagger \hat{x}(\rho).
    \end{split}
\end{equation}

For right cross-correlation, we have that:
\begin{equation}
    \begin{split}
        \widehat{(m \star_R x)}(\rho) &= \sum_{u \in G} \rho(u) \sum_{v \in G} m(u^{-1}v)x(v) \\
        &= \sum_{u \in G} \sum_{v \in G}  \rho(v) \rho(v^{-1}) \rho(u)    m(u^{-1}v)x(v) \\
        &=  \sum_{v \in G} \sum_{u \in G} x(v) \rho(v) m(u^{-1}v) \rho(v^{-1}u)  \\
        &=  \sum_{v \in G} x(v) \rho(v) \left[ \sum_{u \in G} m(u^{-1}v) \rho(u^{-1}v)^\dagger \right]  \\
        &=  \sum_{v \in G}  x(v) \rho(v) \hat{m}(\rho)^\dagger \\
        &=  \hat{x}(\rho) \hat{m}(\rho)^\dagger.
    \end{split}
\end{equation}

\subsection{Digital to analog oracle conversion}
\label{app:oracle_conversion}
As a reminder, we have analog ($\mathcal{A}_m$) and digital oracles ($O_m$) as shown below.
\begin{equation}
\begin{split}
    \mathcal{A}_m &: \ket{0^w} \to \frac{1}{\sqrt{\|\vec{m}\|_1}} \sum_{i \in G} \sqrt{|m_i|} \ket{i}, \\
    O_m &: \ket{i} \ket{0^b} \to \ket{i} \ket{m_i}.
\end{split}
\end{equation}

Given oracle $O_m$, our goal is to construct $\mathcal{A}_m$. Here, we assume that $O_m$ returns values normalized such that the magnitude of the maximum value of $m_i$ is equal to 1. This is chosen to maximize the success probability of oracle conversion which can be performed by following the steps below.

\begin{enumerate}
    \item Beginning with the state $\ket{0^w}\ket{0^b}$, obtain an equal superposition of states in the support of $m$. 
    \begin{equation}
        \ket{0^w}\ket{0^b} \to \frac{1}{\sqrt{|\operatorname{supp}(m)|}} \sum_{i \in \operatorname{supp}(m)} \ket{i}\ket{0^b},
    \end{equation}
    where $\operatorname{supp}(m)$ returns the set of basis states in the support of $m$. If the filter $m$ has full support, then this is equivalent to applying Hadamard gates to each qubit.
    \item Call oracle $O_m$ and perform (classical) transformations to obtain the magnitude of the filter resulting in
    \begin{equation}
        \frac{1}{\sqrt{|\operatorname{supp}(m)|}} \sum_{i \in \operatorname{supp}(m)} O_m \ket{i}\ket{0^b} \to \frac{1}{\sqrt{|\operatorname{supp}(m)|}} \sum_{i \in \operatorname{supp}(m)} \ket{i}\ket{|m_i|}.
    \end{equation}
    \item Append a qubit and conditionally rotate the qubit by $\sqrt{|m_i|}$.
    \begin{equation}
        \frac{1}{\sqrt{|\operatorname{supp}(m)|}} \sum_{i \in \operatorname{supp}(m)} \ket{i}\ket{m_i} \ket{0} \to \frac{1}{\sqrt{|\operatorname{supp}(m)|}} \sum_{i \in \operatorname{supp}(m)} \ket{i}\ket{m_i} \left( \sqrt{|m_i|} \ket{0} + \sqrt{1 - | m_i | } \ket{1} \right).
    \end{equation}
    \item Measuring the last appended register, the oracle conversion is successful when the outcome of the measurement is $\ket{0}$. We note, that this register need not be measured right away and can be included in the block encoding to be measured later. 
\end{enumerate}

The runtime of this procedure depends on the probability of successfully measuring the $\ket{0}$ state in the last step. This probability is $|\operatorname{supp}(m)|^{-1} \sum_{i \in \operatorname{supp}(m)} |m_i|$ and is equal to the average value of $|m_i|$. If values of $m_i$ are $\Theta(1)$ and do not decay with the dimension of the group, then this success probability is also $\Omega(1)$. Finally, additional gates are needed to obtain an equal superposition over states in the support of $m$ as in step 1. This, in most cases, requires a number of operations that scale poly-logarithmically with the dimension of the state. For example, for filters with support over all states, this is equivalent to applying Hadamard gates to each qubit. 

\subsection{Deconvolution in the Fourier regime (\Cref{prop:inverse_op})}\label{app:deconvolve}
To perform deconvolution in the Fourier regime, we apply Theorem 41 of \cite{gilyen2019quantum} to perform an inverse block encoding.
\begin{lemma} [Inverse block encoded matrix, adapted from Theorem 41 of \cite{gilyen2019quantum}] Let $A$ be an invertible matrix and $A = (\bra{0^{a}}\otimes I) U (\ket{0^{a}}\otimes I)$. Let $\delta$ be the smallest singular value of $A$ and $0 < \epsilon \leq \delta \leq \frac{1}{2}$. For $m = O(\frac{1}{\delta} \log {\frac{1}{\delta \epsilon}})$, there is an efficient circuit to implement  $U_{\Phi}$ such that
\begin{equation*}
    \left|(\bra{+}\otimes I) U_{\Phi} (\ket{+}\otimes I) - \frac{\delta}{2} A^{-1}\right| \leq \epsilon,
\end{equation*}
where $U_{\Phi}$ can be implemented using a single ancilla qubit and $O(m)$ gates, which include $m$ uses of $U$, $U^{\dagger}$, $C_{\ket{0^a}\bra{0^a}} NOT$ and single-qubit gates.
\end{lemma}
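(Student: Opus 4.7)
The plan is to reduce the statement to an application of the Quantum Singular Value Transformation (QSVT) of \cite{gilyen2019quantum} with a polynomial that approximates $\tfrac{\delta}{2x}$ on the interval containing the singular values of $A$. Since $U$ block-encodes $A$ and every singular value of $A$ lies in $[\delta,1]$, it will suffice to produce an odd polynomial $p$ that is uniformly bounded by $1$ on $[-1,1]$ and satisfies $|p(x)-\tfrac{\delta}{2x}|\leq \epsilon$ on $[-1,-\delta]\cup[\delta,1]$, then feed that polynomial into the QSVT machinery.

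First, I would invoke the standard inverse-approximation result (Lemma~40 / Corollary~69 of \cite{gilyen2019quantum}): there exists such an odd polynomial of degree $m=O\!\left(\tfrac{1}{\delta}\log\tfrac{1}{\delta\epsilon}\right)$. The factor $\delta/2$ in the target function is chosen precisely so that $\delta/(2x)$ is bounded in magnitude by $1/2$ on $[\delta,1]$, which leaves room for an approximant that remains bounded by $1$ on all of $[-1,1]$, a prerequisite for QSVT. The condition $\epsilon\leq\delta\leq 1/2$ in the hypothesis is what makes this construction feasible.

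Next, I would apply the QSVT theorem (Theorem~17 of \cite{gilyen2019quantum}) to $U$ with this polynomial. This yields a classically computable sequence of $m$ phase angles $\Phi=(\phi_1,\dots,\phi_m)$ and a circuit $U_\Phi$ of exactly the claimed structure: an alternating product of $U$, $U^\dagger$, controlled reflections $C_{\ket{0^a}\bra{0^a}}\mathrm{NOT}$ and single-qubit $Z$-rotations by the $\phi_i$, using a single additional ancilla qubit. Because $p$ is odd, the encoded operator is extracted by sandwiching the control ancilla between $\ket{+}$ and $\bra{+}$, giving $(\bra{+}\otimes I)\,U_\Phi\,(\ket{+}\otimes I)=p^{(SV)}(A)$, the polynomial applied to the singular values of $A$.

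Finally, I would close the argument by diagonalizing $A=W D V^\dagger$ and writing
\[
p^{(SV)}(A)-\tfrac{\delta}{2}A^{-1}=W\bigl[p(D)-\tfrac{\delta}{2}D^{-1}\bigr]V^\dagger,
\]
whose operator norm is $\max_i\,|p(\sigma_i)-\tfrac{\delta}{2\sigma_i}|\leq \epsilon$ by the approximation guarantee, since each $\sigma_i\in[\delta,1]$. The main obstacle is not in any of the QSVT or diagonalization bookkeeping above but in the polynomial construction itself: producing a low-degree polynomial that simultaneously approximates $1/x$ well near the boundary $\delta$ and stays uniformly bounded on $[-1,1]$ requires a careful Chebyshev truncation combined with a smooth cutoff near the origin, and is the only genuinely nontrivial input. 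I would import this as a black box from \cite{gilyen2019quantum}, consistent with how it is used implicitly in the proof of \Cref{prop:inverse_op}.
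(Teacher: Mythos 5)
The paper gives no proof of this lemma --- it is imported directly (with a slightly different normalization) from Theorem 41 of \cite{gilyen2019quantum} --- and your reconstruction via QSVT with a bounded odd polynomial approximating $\tfrac{\delta}{2x}$ on $[-1,1]\setminus(-\delta,\delta)$ is exactly how that theorem is proved in the reference, so the approach is correct and consistent with how the paper uses the result. The only slip is in the final bookkeeping: for a general invertible $A=WDV^\dagger$ one has $A^{-1}=VD^{-1}W^\dagger$, so the relevant difference is $V\bigl[p(D)-\tfrac{\delta}{2}D^{-1}\bigr]W^\dagger$ rather than $W\bigl[p(D)-\tfrac{\delta}{2}D^{-1}\bigr]V^\dagger$ --- the singular value transformation of an odd polynomial is defined in \cite{gilyen2019quantum} precisely so that $p^{(SV)}(A)=\sum_i p(\sigma_i)\ket{v_i}\bra{w_i}$ carries left singular vectors to right ones and thus has the shape of $A^{-1}$ (equivalently of the pseudoinverse $A^{+}$) --- but your operator-norm bound $\max_i\lvert p(\sigma_i)-\delta/(2\sigma_i)\rvert\leq\epsilon$ and the rest of the argument go through unchanged.
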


From here, one can perform deconvolution in the Fourier regime via block encodings described in \Cref{lem:abelian_encoding} and \Cref{lem:nonabelian_encoding}.

\begin{prop}[Applying deconvolution in the Fourier regime] For a group $G$, assume we are given oracle access $O_{\mathcal{F}m}$ to the Fourier transformed convolution filter $\hat{m}$ as described in \Cref{lem:nonabelian_encoding}. Given a quantum state $\ket{y}$ containing the output of $\vec{y} = \vec{m} \circ \vec{x}$ normalized such that $\|\vec{y}\|_2=1$, one can construct a state $\ket{\tilde{x}}$ which is $\epsilon$-close to the true normalized input $\vec{x}$, where $\circ$ corresponds to one of the group operations delineated in \Cref{lem:conv_to_linear_comb}. This operation has a runtime that scales as $O( T_B d_{\max} \frac{\kappa^2}{\|M\|} \operatorname{polylog} \frac{d_{\max} \kappa}{\|M\| \epsilon})$ where $T_B$ is the runtime of the block encoding (which includes the dependence on $\epsilon$ and $d_{\max}$) of \Cref{lem:nonabelian_encoding} and $\kappa$ is the condition number of the linear group operation $M$.
\label{app:fourier-deconvolve}
\end{prop}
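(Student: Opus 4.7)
The plan is to mirror the QSVT-based strategy of \Cref{prop:inverse_op}, but with the $O_m$-based block encoding replaced by the Fourier-regime block encoding of \Cref{lem:nonabelian_encoding}, and with careful bookkeeping of the extra $d_{\max}$ normalization factor. Concretely, I would first invoke \Cref{lem:nonabelian_encoding} to produce a unitary $U$ that is an $\epsilon'$-approximate block encoding of $M/d_{\max}$, where $\epsilon'$ is chosen polynomially smaller than the target accuracy $\epsilon$ so that the error does not blow up when passed through the QSVT (standard error-propagation bounds from \cite{gilyen2019quantum} dictate how small). The singular values of $M/d_{\max}$ lie in the interval $[\|M\|/(\kappa d_{\max}),\,\|M\|/d_{\max}]$, so the relevant minimum singular value is $\delta = \|M\|/(\kappa d_{\max})$.

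Next, I would apply the QSVT with a polynomial approximation of $1/x$ that is uniformly $\epsilon$-close to $1/x$ on $[\delta,1]$. By Lemma~40 of \cite{gilyen2019quantum} this requires a polynomial of degree $d = O\bigl(\delta^{-1}\log(1/(\delta\epsilon))\bigr) = O\bigl((\kappa d_{\max}/\|M\|)\,\operatorname{polylog}(d_{\max}\kappa/(\|M\|\epsilon))\bigr)$, and (by Theorem~41 of \cite{gilyen2019quantum}) produces a unitary $U_{\Phi}$ that is an $\epsilon$-approximate block encoding of $(\delta/2)(M/d_{\max})^{-1} = (\|M\|/(2\kappa))\,M^{-1}$. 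Applying $U_{\Phi}$ to $\ket{0^{\star}}\ket{y}$ and post-selecting on the all-zero ancilla outcome yields, up to error $\epsilon$, the normalized state $\ket{x}$, with amplitude at least
\begin{equation}
    \frac{\|M\|}{2\kappa}\,\|M^{-1}\vec{y}\|_2 \;\geq\; \frac{\|M\|}{2\kappa}\cdot\frac{\|\vec{y}\|_2}{\|M\|} \;=\; \frac{1}{2\kappa},
\end{equation}
using $\|M^{-1}\vec{y}\|_2 \geq \|\vec{y}\|_2/\|M\|$ and the hypothesis $\|\vec{y}\|_2=1$.

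The post-selection therefore succeeds with probability $\Omega(1/\kappa^2)$, and fixed-point amplitude amplification \cite{brassard2002quantum,ambainis2012variable} boosts this to $\Omega(1)$ with $O(\kappa)$ further invocations of $U_{\Phi}$. Multiplying the three cost contributions — per-block-encoding cost $T_B$, polynomial degree $d$, and $O(\kappa)$ amplification rounds — gives the claimed runtime $O\bigl(T_B\,d_{\max}\,\kappa^2/\|M\|\cdot\operatorname{polylog}(d_{\max}\kappa/(\|M\|\epsilon))\bigr)$.

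The only real subtlety, as in \Cref{prop:inverse_op}, is bookkeeping the $d_{\max}$ normalization: it shrinks the spectrum of the block-encoded matrix by a factor of $d_{\max}$ (so the polynomial degree inherits a factor of $d_{\max}$), but this factor cancels against the $d_{\max}$ that re-emerges in $(M/d_{\max})^{-1} = d_{\max}\,M^{-1}$ inside the QSVT output. Because of this cancellation, the post-selection amplitude is governed only by $\kappa$ rather than by $\kappa d_{\max}$, and $d_{\max}$ survives exactly once — in the polynomial degree — accounting for its single appearance in the final runtime. Everything else is a mechanical translation of the argument of \Cref{prop:inverse_op}.
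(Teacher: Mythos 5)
Your proposal is correct and follows essentially the same route as the paper's proof: block-encode $A = M/d_{\max}$ via \Cref{lem:nonabelian_encoding}, apply the QSVT inverse (Lemma 40/Theorem 41 of \cite{gilyen2019quantum}) with $\delta = \|M\|\kappa^{-1}d_{\max}^{-1}$, observe that the $d_{\max}$ factors cancel in the post-selection amplitude so the success probability is $\Omega(\kappa^{-2})$, and amplify. Your explicit error-propagation remark and the $\|M^{-1}\vec y\|_2 \geq \|\vec y\|_2/\|M\|$ bound are slightly more careful than the paper's exposition but substantively identical.
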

\begin{proof}
    We take $A = M/d_{\max} =( \bra{0^{w+3}}\otimes I) U (\ket{0^{w+3}}\otimes I)$ where $U$ is the block encoding in \Cref{lem:nonabelian_encoding}. The smallest eigenvalue of $M/d_{\max}$ is $\delta = \|M\|\kappa^{-1} d_{\max}^{-1}$. Thus, one can obtain a block encoding of $A^{-1}$ in $O(T_B \frac{d_{\max}\kappa}{\|M\|}\log {\frac{d_{\max} \kappa}{\|M\|\epsilon}})$ operations. Upon application of this block encoded inverse, the probability of successfully obtaining the state $\ket{\tilde{x}}$ is $O(\kappa^2)$ since the normalization factor of the block encoding of the inverse $A^{-1}$ is $\delta/2 = O(\|M\|\kappa^{-1} d_{\max}^{-1})$ and the smallest singular value of $A^{-1}$ is $d_{\max} \|M\|^{-1}$. By using amplitude amplification \cite{brassard2002quantum,ambainis2012variable}, this probability of success can be improved to $O(\kappa)$.

    Note, that the assumption $\epsilon \leq \delta$, while being natural, can be removed by applying Corollary 69 of \cite{gilyen2019quantum}. Furthermore, the assumption $\delta \leq \frac{1}{2}$ can be fulfilled by rescaling entries of $\hat{m}$ accordingly.
\end{proof}

Note, that the runtime of the above algorithm has an additional factor of $d_{\max}$ compared to the runtime shown in \Cref{prop:inverse_op}.

\section{Representation theory of finite groups: example for the dihedral group}
\label{app:rep_theory_ex}
In this section we will provide a detailed example that highlights some important concepts of representation theory \cite{reptheory2004springer} focusing on the dihedral groups $D_n$, and specifically on $D_3$, which is the non-abelian group having the smallest group order.

\paragraph{Representations} 
A \emph{representation} of a finite group $G$ on a finite-dimensional complex vector space $V$ is a homomorphism $\rho: G \rightarrow GL(V)$ -where $GL(V)$ is the group of invertable linear transformations of the vector space $V$- of $G$ to the group of automorphisms of $V$ (invertible matrices).
\paragraph{Left and right regular representations}
If we associate each element $u$ of a group $G$ to a basis element $e_u$ in a vector space $V$, then the left and right regular representations, denoted by $L_u$ and $R_u$, respectively, are matrices that permute the basis elements according to the left and right actions of the group:
\begin{equation*}
    L_u e_v = e_{uv} \;\;\;\;\;\;\;\;\; R_u e_v = e_{vu}.
\end{equation*}

\paragraph{Subrepresentations} 
A \emph{subrepresentation} of a representation $V$ is a vector space $W$ of $V$ which is invariant under $G$.  
For compact groups, any representation $\rho$ can be decomposed as a direct sum of subrepresentations which are irreducible,
\begin{equation}
    \rho(g) = Q^{-1}[\rho_1(g)\oplus\rho_2(g)\oplus\cdots\oplus\rho_k(g)]Q.
\end{equation}
where $Q$ is an invertible matrix and each $\rho_i$ is irreducible.

\paragraph{Irreducible representations}
A representation is irreducible if there is no proper, nontrivial subspace of V that is invariant under the action of G.
It is called \emph{completely reducible} if it decomposes as a direct sum of irreducible subrepresentations.
The number of irreducible representations for a finite group is equal to the number of conjugacy classes \cite{fulton2013representation}. A \emph{conjugacy class} in $G$ is a nonempty subset $H$ of $G$ that is closed under the action of the group on itself by conjugation, {\it i.e.}

\begin{itemize}
    \item Given any $x,y \in H$, there exists $g \in G$ such that $gxg^{-1} = y$.
    \item If $x \in H$ and $g \in G$, then $gxg^{-1} \in H$.
\end{itemize}

Furthermore, the sum of the dimensions squared of all the irreducible representation of a group $G$ equals group size $G$.

\begin{equation}
    \sum_{\sigma} |d_{\sigma}|^2 = |G|
\end{equation}
 
\subsection{Representations of $D_n$}

\paragraph{Dihedral groups}
The dihedral group $D_n$ is the group of symmetries of the regular $n$-gon in the plane. The dihedral group $D_n$ is of order $2n$ and is represented by $D_n = \mathbb{Z}/n\mathbb{Z} \rtimes \mathbb{Z}/2\mathbb{Z}$, with the group law

\begin{equation*}
    (x,a)\cdot (y,b) = (x+(-1)^a y, a+b),
\end{equation*}
for $x,y \in \mathbb{Z}/n\mathbb{Z}$ and $a,b \in \mathbb{Z}/2\mathbb{Z}$. 

The dihedral group $D_{n}$ with $2n$ elements is isomorphic to a semidirect product of the cyclic groups $\mathbb{Z}/n\mathbb{Z}$ and $\mathbb{Z}/2\mathbb{Z}$. Let $r$ be the generator of $\mathbb{Z}/n\mathbb{Z}$ and $s$ be the generator of $\mathbb{Z}/2\mathbb{Z}$, then the dihedral group $D_n$ can be written compactly as
\begin{equation}
    \langle r, s | s^2=e, r^n = e, s r s^{-1} = r^{-1} \rangle.
\end{equation}

\paragraph{Irreducible representations of $D_n$}
For $n$ even, we have the following 1-dimensional irreducible representations
\begin{equation}
\begin{split}
    \sigma_{tt}((x,a)) & = 1 \\
    \sigma_{ts}((x,a)) & = (-1)^a \\
    \sigma_{st}((x,a)) & = (-1)^x \\
    \sigma_{ss}((x,a)) & = (-1)^{x+a}.
\end{split}
\label{eq:1d-irrep}
\end{equation}
For $n$ odd, we have $\sigma_{tt}$ and $\sigma_{ts}$ only. The $2$-dimensional irreducible representations are of the form

\begin{equation}
    \sigma_h((x,0)) = 
    \begin{pmatrix}
    e^{2\pi ihx/n} & 0 \\
    0 & e^{-2\pi ihx/n}
    \end{pmatrix}
    \hspace{0.8cm}
    \sigma_h((x,1)) = 
    \begin{pmatrix}
    0 & e^{2\pi ihx/n} \\
    e^{-2\pi ihx/n} & 0 
    \end{pmatrix},
    \label{eq:2d-irrep}
\end{equation}
for $h \in \{1,2, \ldots, \lceil\frac{n}{2}\rceil-1 \}$. The sum of the squared dimensions of the irreducible representations is equal to $2n$, which is the size of the group:
\begin{equation*}
    \sum_{\sigma} |d_{\sigma}|^2 = 2n = |G|.
\end{equation*}

\subsection{Representations of $D_3$}

The dihedral group $D_3$ is obtained by composing the six symmetries of an equilateral triangle. The dihedral group $D_3$ and the cyclic group $C_6$ are the only two groups that have order 6. Unlike $C_6$ (which is abelian),  $D_3$ is non-abelian. Products of group elements of $D_3$ are shown in the Cayley table shown in \autoref{table:cayley}.

Like all dihedral groups, group elements of $D_3$ are generated by $s$ and $r$, where $s$ is a rotation by  $\pi$ radians about an axis passing through the center and one of the vertices of a regular $n$-gon and $r$ is a rotation by $2\pi/n$ about the center of the $n$-gon (see \autoref{fig:d3}).

\begin{figure}[h]
\centering
\includegraphics[width=0.65\textwidth]{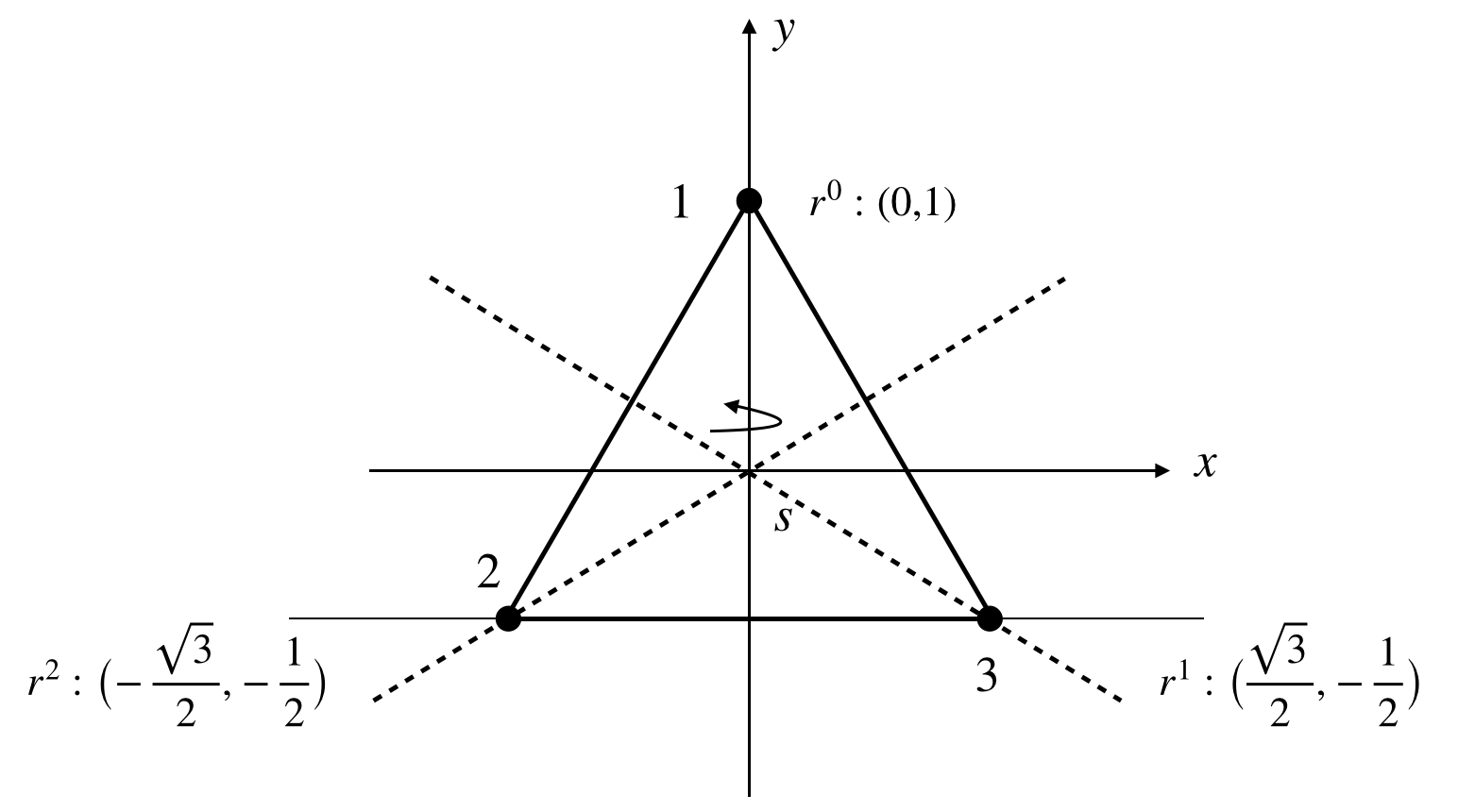}
\caption{The dihedral group $D_3$ is the symmetry group of an equilateral triangle, that is, it is the set of all transformations such as reflection, rotation, and combinations of these, that leave the shape and position of this triangle fixed.}
\label{fig:d3}
\end{figure}

\begin{table}
\centering
\caption{The Cayley table of $D_3$.}
\begin{ruledtabular}
\begin{tabular}
{ c   c  c  c  c  c  c }

& 1 & $r$ & $r^2$ & $s$ & $rs$ & $r^2s$ \\
\hline
1 & 1 & $r$ & $r^2$ & $s$ & $rs$ & $r^2s$ \\

$r$ & $r$ & $r^2$ & 1 & $rs$ & $r^2s$ & $s$ \\ 

$r^2$ & $r^2$ & 1 & $r$ & $r^2s$ & $s$ & $rs$ \\ 

$s$ & $s$ & $r^2s$ & $rs$ & $1$ & $r^2$ & $r$ \\ 

$rs$ & $rs$ & $s$ & $r^2s$ & $r$ & $1$ & $r^2$  \\ 

$r^2s$ & $r^2s$ & $rs$ & $s$ & $r^2$ & $r$ & 1 \\

\end{tabular}
\end{ruledtabular}
\label{table:cayley}
\end{table}

\paragraph{Left and right regular representations of $D_3$.} The regular representations of $D_3$ are obtained by associating a basis vector to each element of the group $\{1, r, r^2, s, rs, r^2s\}$.
\begin{equation*}
    \vec{e}_1 = \left( \begin{array}{ccc}
    1\\
    0\\
    0\\
    0\\
    0\\
    0
    \end{array} \right) \qquad
    \vec{e}_r = \left( \begin{array}{ccc}
    0\\
    1\\
    0\\
    0\\
    0\\
    0
    \end{array} \right) \qquad
    \vec{e}_{r^2} = \left( \begin{array}{ccc}
    0\\
    0\\
    1\\
    0\\
    0\\
    0
    \end{array} \right) \qquad
    \vec{e}_{s} = \left( \begin{array}{ccc}
    0\\
    0\\
    0\\
    1\\
    0\\
    0
    \end{array} \right) \qquad
    \vec{e}_{rs} = \left( \begin{array}{ccc}
    0\\
    0\\
    0\\
    0\\
    1\\
    0
    \end{array} \right) \qquad
    \vec{e}_{r^2 s} = \left( \begin{array}{ccc}
    0\\
    0\\
    0\\
    0\\
    0\\
    1
    \end{array} \right)
\end{equation*}

For all $u \in D_3$, the left regular representations $L_u$ are:
\begin{equation*}
    L_1 = \left( \begin{array}{cccccc}
    1 & 0 & 0 & 0 & 0 & 0\\
    0 & 1 & 0 & 0 & 0 & 0\\
    0 & 0 & 1 & 0 & 0 & 0\\
    0 & 0 & 0 & 1 & 0 & 0\\
    0 & 0 & 0 & 0 & 1 & 0\\
    0 & 0 & 0 & 0 & 0 & 1
    \end{array} \right) \qquad
    L_r = \left( \begin{array}{cccccc}
    0 & 0 & 1 & 0 & 0 & 0\\
    1 & 0 & 0 & 0 & 0 & 0\\
    0 & 1 & 0 & 0 & 0 & 0\\
    0 & 0 & 0 & 0 & 0 & 1\\
    0 & 0 & 0 & 1 & 0 & 0\\
    0 & 0 & 0 & 0 & 1 & 0
    \end{array} \right) \qquad
    L_{r^2} = \left( \begin{array}{cccccc}
    0 & 1 & 0 & 0 & 0 & 0\\
    0 & 0 & 1 & 0 & 0 & 0\\
    1 & 0 & 0 & 0 & 0 & 0\\
    0 & 0 & 0 & 0 & 1 & 0\\
    0 & 0 & 0 & 0 & 0 & 1\\
    0 & 0 & 0 & 1 & 0 & 0
    \end{array} \right)
\end{equation*}

\begin{equation*}
    L_s = \left( \begin{array}{cccccc}
    0 & 0 & 0 & 1 & 0 & 0\\
    0 & 0 & 0 & 0 & 0 & 1\\
    0 & 0 & 0 & 0 & 1 & 0\\
    1 & 0 & 0 & 0 & 0 & 0\\
    0 & 0 & 1 & 0 & 0 & 0\\
    0 & 1 & 0 & 0 & 0 & 0
    \end{array} \right) \qquad
    L_{rs} = \left( \begin{array}{cccccc}
    0 & 0 & 0 & 0 & 1 & 0\\
    0 & 0 & 0 & 1 & 0 & 0\\
    0 & 0 & 0 & 0 & 0 & 1\\
    0 & 1 & 0 & 0 & 0 & 0\\
    1 & 0 & 0 & 0 & 0 & 0\\
    0 & 0 & 1 & 0 & 0 & 0
    \end{array} \right) \qquad
    L_{r^2s} = \left( \begin{array}{cccccc}
    0 & 0 & 0 & 0 & 0 & 1\\
    0 & 0 & 0 & 0 & 1 & 0\\
    0 & 0 & 0 & 1 & 0 & 0\\
    0 & 0 & 1 & 0 & 0 & 0\\
    0 & 1 & 0 & 0 & 0 & 0\\
    1 & 0 & 0 & 0 & 0 & 0
    \end{array} \right).
\end{equation*}

Similarly, right regular representations $R_u$ of $D_3$ are:
\begin{equation*}
    R_1 = \left( \begin{array}{cccccc}
    1 & 0 & 0 & 0 & 0 & 0\\
    0 & 1 & 0 & 0 & 0 & 0\\
    0 & 0 & 1 & 0 & 0 & 0\\
    0 & 0 & 0 & 1 & 0 & 0\\
    0 & 0 & 0 & 0 & 1 & 0\\
    0 & 0 & 0 & 0 & 0 & 1
    \end{array} \right) \qquad
    R_r = \left( \begin{array}{cccccc}
    0 & 0 & 1 & 0 & 0 & 0\\
    1 & 0 & 0 & 0 & 0 & 0\\
    0 & 1 & 0 & 0 & 0 & 0\\
    0 & 0 & 0 & 0 & 1 & 0\\
    0 & 0 & 0 & 0 & 0 & 1\\
    0 & 0 & 0 & 1 & 0 & 0
    \end{array} \right) \qquad
    R_{r^2} = \left( \begin{array}{cccccc}
    0 & 1 & 0 & 0 & 0 & 0\\
    0 & 0 & 1 & 0 & 0 & 0\\
    1 & 0 & 0 & 0 & 0 & 0\\
    0 & 0 & 0 & 0 & 0 & 1\\
    0 & 0 & 0 & 1 & 0 & 0\\
    0 & 0 & 0 & 0 & 1 & 0
    \end{array} \right)
\end{equation*}

\begin{equation*}
    R_s = \left( \begin{array}{cccccc}
    0 & 0 & 0 & 1 & 0 & 0\\
    0 & 0 & 0 & 0 & 1 & 0\\
    0 & 0 & 0 & 0 & 0 & 1\\
    1 & 0 & 0 & 0 & 0 & 0\\
    0 & 1 & 0 & 0 & 0 & 0\\
    0 & 0 & 1 & 0 & 0 & 0
    \end{array} \right) \qquad
    R_{rs} = \left( \begin{array}{cccccc}
    0 & 0 & 0 & 0 & 1 & 0\\
    0 & 0 & 0 & 0 & 0 & 1\\
    0 & 0 & 0 & 1 & 0 & 0\\
    0 & 0 & 1 & 0 & 0 & 0\\
    1 & 0 & 0 & 0 & 0 & 0\\
    0 & 1 & 0 & 0 & 0 & 0
    \end{array} \right) \qquad
    R_{r^2s} = \left( \begin{array}{cccccc}
    0 & 0 & 0 & 0 & 0 & 1\\
    0 & 0 & 0 & 1 & 0 & 0\\
    0 & 0 & 0 & 0 & 1 & 0\\
    0 & 1 & 0 & 0 & 0 & 0\\
    0 & 0 & 1 & 0 & 0 & 0\\
    1 & 0 & 0 & 0 & 0 & 0
    \end{array} \right)
\end{equation*}

\paragraph{Irreducible representations of $D_3$} As $D_3$ is a non-abelian group, at least one of its irreducible representations is a matrix. \autoref{table:d3irreps} shows the irreducible representations of $D_3$, obtained from \autoref{eq:1d-irrep} and \autoref{eq:2d-irrep}, noting that $h \in \{1,2, \ldots, \lceil\frac{n}{2}\rceil -1 \} = \{ 1 \} $.

\begin{table}
\centering
\squeezetable
\caption{Irreducible representations of $D_3$}
\begin{ruledtabular}
\begin{tabular}{ c  c  c  c }
& $\rho_1$ & $\rho_2$ & $\rho_3$ \\ 
& $\sigma_{tt}((x,a))$ & $\sigma_{ts}((x,a))$ & $\sigma_{1}((x,a))$ \\
\hline
(0,0) & 1 & 1 & $\begin{pmatrix} 1 & 0 \\ 0 & 1 \end{pmatrix}$ \\ 

(1,0) & 1 & 1 & $\begin{pmatrix} \omega^1 & 0 \\ 0 & \omega^{-1} \end{pmatrix}$ \\ 

(2,0) & 1 & 1 & $\begin{pmatrix} \omega^2 & 0 \\ 0 & \omega^{-2} \end{pmatrix}$ \\ 

(0,1) & 1 & -1 & $\begin{pmatrix} 0 & 1 \\ 1 & 0 \end{pmatrix}$ \\ 

(1,1) & 1 & -1 & $\begin{pmatrix} 0 & \omega^{1} \\ \omega^{-1} & 0 \end{pmatrix}$ \\

(2,1) & 1 & -1 & $\begin{pmatrix} 0 & \omega^{2} \\ \omega^{-2} & 0 \end{pmatrix}$ \\ 

\end{tabular}
\end{ruledtabular}
\label{table:d3irreps}
\end{table}

\subsection{The Group Fourier Transform and the Convolution Theorem over $D_3$}
The group Fourier transform table (i.e. $F_G$) for $D_3$ can be constructed by aligning the elements of the 2-dimensional representation $\rho_3$ elementwise $(\rho_{3_{11}}, \rho_{3_{12}}, \rho_{3_{21}}, \rho_{3_{22}})$, yielding a $6 \times 6$ transformation matrix.

The normalized (unitary) Fourier transformation matrix $F_G$ is defined as \cite{childs2010quantum}
\begin{equation*}
    F_G = \sum_{x \in G} \ket{\hat{x}}\bra{x} = \sum_{x \in G}\sum_{\rho \in \hat{G}}\sqrt{\frac{d_{\rho}}{|G|}}\sum_{j,k =1}^{d_{\rho}}\rho(x)_{j,k}\ket{\rho, j,k}\bra{x}
\end{equation*}
where $\hat{G}$ is the set of irreducible representations and the $\sqrt{\frac{d_{\rho}}{|G|}}$ factor enforces $F_G$ as unitary. For $D_3$, we have:

\begin{equation}
F_G =
\begin{pmatrix}
1/\sqrt{6} & 1/\sqrt{6} & 1/\sqrt{6} & 1/\sqrt{6} & 1/\sqrt{6} & 1/\sqrt{6}\\
1/\sqrt{6} & 1/\sqrt{6} & 1/\sqrt{6} & -1/\sqrt{6} & -1/\sqrt{6} & -1/\sqrt{6}\\
1/\sqrt{3} & \omega^{1}/\sqrt{3} & \omega^{2}/\sqrt{3} & 0 & 0 & 0 \\
0 & 0 & 0 & 1/\sqrt{3} & \omega^{1}/\sqrt{3} & \omega^{2}/\sqrt{3} \\
0 & 0 & 0 & 1/\sqrt{3} & \omega^{-1}/\sqrt{3} & \omega^{-2}/\sqrt{3} \\
1/\sqrt{3} & \omega^{-1}/\sqrt{3} & \omega^{-2}/\sqrt{3} & 0 & 0 & 0
\label{fouriermatrix}
\end{pmatrix}.
\end{equation}

%Let $m$ and $f$ be functions which map group elements $u \in G$ to real numbers.
%Let $\vec{m}$ and $\vec{f}$ be vector-valued functions on the elements of group $G$ 
Let $m$ and $f$ be functions that map group elements of $G$ to complex numbers, if we associate each element $u \in G$ to a basis vector $e_u$ in some vector space $V$, we can represent $m$ and $f$ as vectors,
\begin{equation*}
    \vec{m} = \sum_{u \in G} m(u)e_u, \qquad 
    \vec{f} = \sum_{u \in G} f(u)e_u.
\end{equation*} 

As an example, we take
\begin{equation}
\vec{m} = \vec{f} =
\begin{pmatrix}
    1\\
    \omega^1\\
    \omega^2\\
    0\\
    0\\
    0
\end{pmatrix},
\end{equation}
where we have chosen $e_u$ to be the standard basis of $\mathbb{C}^{|G|}$.

Calculating the  Fourier transform through matrix multiplication on $\vec{m}$ and $\vec{f}$ we obtain
\begin{equation*}
\widehat{m} = \widehat{f} =  F_G \vec{m} = F_G \vec{f} = 
\left( \begin{array}{ccc}
0\\
0\\
0\\
0\\
0\\
\sqrt{3} \end{array} \right)
\end{equation*}
Note that $1 + \omega^1 + \omega^2 = 0$ since the cube root of unity (i.e. $\omega^3 = 1$) can be factorized as 
\begin{equation*}
    \omega^3 - 1 = (\omega-1)(\omega^2+\omega+1) = 0.
\end{equation*}

In the following we compare the computation of the Fourier transform of $\vec{m} \circledast \vec{f}$ by two different methods: (1) through direct calculation of the convolution and applying the Fourier transform,
and (2) by computing the individual Fourier transforms of $\vec{m}$ and $\vec{f}$ via the convolution theorem.\\
First, we complete case (1). Recall the definition of a convolution over a group $G$
\begin{equation*}
    (m \circledast f)(u) = \sum_{v \in G} m(uv^{-1})f(v)
\end{equation*}
Calculating the convolution expansion over the indexed elements of $D_3$, where ${1,2,3,4,5,6}$ correspond to  (refer to \autoref{table:cayley} for group element multiplication $uv^{-1}$)
\begin{equation*}
\begin{split}
    (m\circledast f)(1) =  m(1) \cdot f(1) + m(3)\cdot f(2) + m(2)\cdot f(3)+m(4)\cdot f(4) + m(5)\cdot f(5) + m(6)\cdot f(6)\\
    (m\circledast f) (2) = m(2) \cdot f(1) + m(1)\cdot f(2) + m(3)\cdot f(3)+m(5)\cdot f(4) + m(6)\cdot f(5) + m(4)\cdot f(6)\\
    (m\circledast f) (3) = m(3) \cdot f(1) + m(2)\cdot f(2) + m(1)\cdot f(3)+m(6)\cdot f(4) + m(4)\cdot f(5) + m(5)\cdot f(6)\\
    (m\circledast f) (4) = m(4)\cdot f(1) + m(5)\cdot f(2) + m(6)\cdot f(3)+m(1)\cdot f(4) + m(3)\cdot f(5) + m(2)\cdot f(6)\\
    (m\circledast f) (5) = m(5)\cdot f(1) + m(6)\cdot f(2) + m(4)\cdot f(3)+m(2)\cdot f(4) + m(1)\cdot f(5) + m(3)\cdot f(6) \\
    (m\circledast f) (6) = m(6)\cdot f(1) + m(4)\cdot f(2) + m(5)\cdot f(3)+m(3)\cdot f(4) + m(2)\cdot f(5) + m(1)\cdot f(6).
\end{split}    
\end{equation*}

In particular, for $m$ and $f$
\begin{equation*}
\begin{split}
    (m \circledast f)(1) =&  1 \cdot 1 + \omega^2\cdot\omega^1 + \omega^1\cdot\omega^2 + 0 \cdot 0 + 0 \cdot 0 + 0 \cdot 0 = 3 \\
    (m \circledast f)(2) =&  \omega^1 \cdot 1 + 1\cdot\omega^1 + \omega^2\cdot\omega^2 + 0 \cdot 0 + 0 \cdot 0 + 0 \cdot 0 = 3\omega^1\\
    (m \circledast f)(3) =& \omega^2 \cdot 1 + \omega^1\cdot\omega^1 + 1\cdot\omega^2 + 0 \cdot 0 + 0 \cdot 0 + 0 \cdot 0 = 3\omega^2\\
    (m \circledast f)(4) =& (m \circledast f)(5) = (m \circledast f)(6) = 0.
\end{split}
\end{equation*}
Written in vector form,
\begin{equation*}
    \vec{m} \circledast \vec{f} = \left( \begin{array}{ccc}
    3\\
    3\omega^1\\
    3\omega^2\\
    0\\
    0\\
    0
    \end{array} \right),
\end{equation*}
calculating the Fourier transform of $\vec{m} \circledast \vec{f}$ through matrix multiplication with $F_G$,

\begin{equation*}
\widehat{m \circledast f} = F_G(\vec{m} \circledast \vec{f}) = \left( \begin{array}{ccc}
    0\\
    0\\
    0\\
    0\\
    0\\
    3\sqrt{3}
    \end{array} \right).
\label{eq:ex-result1}
\end{equation*}

Now, we overview case (2). Recall the convolution theorem (\autoref{eq:convolution_theorems})
\begin{equation*}
    (\widehat{m \circledast f})(\rho_i) = \widehat{m}(\rho_i)\widehat{f}(\rho_i).
\end{equation*}

First we compute the group Fourier transform (\autoref{eq:groupfour}) of $m$ and $f$ over the irreducible representations,
\begin{equation*}
\widehat{m}(\rho_1) = \widehat{f}(\rho_1) = 0 \qquad
\widehat{m}(\rho_2) = \widehat{f}(\rho_2) = 0 \qquad
\widehat{m}(\rho_3) = \widehat{f}(\rho_3) = \left( \begin{array}{ccc}
0 & 0\\
0 & 3 \end{array} \right).
\end{equation*}

Applying the convolution theorem, we have
\begin{equation*}
\widehat{m \circledast f}(\rho_1) = \widehat{m}(\rho_1)\widehat{f}(\rho_1) = 0 \qquad
\widehat{m \circledast f}(\rho_2) = \widehat{m}(\rho_2)\widehat{f}(\rho_2) = 0 \qquad
\widehat{m \circledast f}(\rho_3) = \widehat{m}(\rho_3)\widehat{f}(\rho_3) = \left( \begin{array}{ccc}
0 & 0\\
0 & 9 \end{array} \right) 
\qquad
\end{equation*}

Aligning the elements of $\widehat{m \circledast f}(\rho_i)$ in order for all the irreps $\rho_i$ on a 6-dim vector on the same standard basis we obtain 

\begin{equation}
    \widehat{m \circledast f}= 
    \left(\begin{array}{ccc}
    0\\
    0\\
    0\\
    0\\
    0\\
    9
    \end{array} \right)
\label{eq:ex-result2}
\end{equation}

% Recall that convolutions over non-abelian groups require matrix multiplication over irreducible representations and since we cannot diagonalize as in the abelian case, we need to apply a matrix of the form below in order to convolve $\vec{m}$ with $\vec{f}$, as in \autoref{eq:nonabelian_general}

% \begin{equation*}
%     \Big[ \bigoplus_{\rho \in \hat{G}} \hat{m}(\rho)\otimes I_{d_{\rho}} \Big]
% \end{equation*}

Note that \autoref{eq:ex-result1} and \autoref{eq:ex-result2} differ by a factor of normalization $\sqrt{d_{\rho}/|G|}$ since in method (1) $F_G$ is already forced to be unitary, while method (2) is based on purely classical calculations.

Alternatively, we can also perform the group Fourier transform in the block encoding form of \autoref{eq:nonabelian_general},

\begin{equation}
    \widehat{m \circledast f} = 
    \Big[ \bigoplus_{v \in G} \widehat{m}(\rho_i)\otimes I_{d_{\rho}} \Big] \widehat{f}
    =
    \begin{pmatrix}
    \begin{bmatrix}
    0
    \end{bmatrix} \oplus
    \begin{bmatrix}
    0
    \end{bmatrix} \oplus
    \begin{bmatrix}
    0 & 0 \\
    0 & 3
    \end{bmatrix} \otimes
    I_2
    \end{pmatrix}
    \begin{pmatrix}
    0\\
    0\\
    0\\
    0\\
    0\\
    \sqrt{3}
    \end{pmatrix} =
    \begin{pmatrix}
    0\\
    0\\
    0\\
    0\\
    0\\
    3\sqrt{3}
    \end{pmatrix}
\end{equation}
which already includes the normalization factor on $F_G$.

%\bk{Do not forget to discuss normalization in the example... it also might be useful to consider an example that is slightly more complicated than this one.}

\section{Equivariance}
\label{app:equivariance}

Here, we show explicitly that convolution and cross-correlation are equivariant actions. Let $G$ be a group and $\mathcal{X}_1, \mathcal{X}_2$ be two sets with corresponding $G$-actions
\begin{equation*}
    T_g : \mathcal{X}_1 \rightarrow \mathcal{X}_1 \;\;\;\;\;\; T'_g: \mathcal{X}_2 \rightarrow \mathcal{X}_2.
\end{equation*}

Let $V_1$ and $V_2$ be vector spaces with basis elements labeled by elements of $\mathcal{X}_1$ and $\mathcal{X}_2$, respectively, and let $L_{V_1}, (L_{V_2})$ be the set of functions mapping $\mathcal{X}_1(\mathcal{X}_2)$ to $V_1 (V_2)$. First, we will look at the case of convolution.

\subsection{Convolution}
Let $\phi_m : L_{V_1} \rightarrow L_{V_2} $ be the map performing convolution with a fixed filter $\vec{m}$ on an input  $\vec{f}$,
\begin{equation*}
    \phi_m(\vec{f}) = \vec{m} \circledast \vec{f}.
\end{equation*}
Let $T_g, T'_g$ denote the right actions of the group,
\begin{equation}
\begin{split}
    T_g, T'_g &: u \rightarrow ug,
\end{split}
\end{equation}
and let $\mathbb{T}_g$ and $\mathbb{T}'_g$ be the induced actions of group elements onto $V_1$ and $V_2$ respectively. From definition 1.1, the map $\phi_m: L
_{V_1} \rightarrow L
_{V_2}$ is \emph{equivariant} to the action of $T_g$ since 
\begin{equation*}
    \phi_m(\mathbb{T}_g \vec{f}) = \mathbb{T}_g'( \phi_m(\vec{f})).
\end{equation*}

%\bk{need to use the induced actions since $\phi$ and $f$ are vector space objects}

%\bk{ You need to define the pairing of $T_g$ and $T_g'$ before any proof. It's unclear what these do. In other words, you need to tell the reader what permutations $T_g$ and $T_g'$ are for each $g$. Also, if these are acting on vector spaces, you have to use the induced actions as well as per Quynh's comment below.}

\begin{proof}
Recall the convolution definition from \autoref{eq:convolution_theorems}
\begin{equation*}
    \Big[\phi_m (\vec{f}) \Big]_u = \Big[\vec{m} \circledast \vec{f} \Big]_u = \sum_{v \in G} m(uv^{-1})f(v)
\end{equation*}

% \begin{equation*}
%   \phi_m(f)(u) = (m \circledast f)(u) = \sum_{v \in G} m(uv^{-1})f(v)
% \end{equation*}
Let $\phi_m$ act on $\mathbb{T}_g \vec{f}$,
\begin{equation*}
\begin{split}
    \Big[\phi_m (\mathbb{T}_g \vec{f}) \Big]_u = \Big[ \vec{m} \circledast \mathbb{T}_g\vec{f} \Big]_u &= \sum_{v\in G} m(uv^{-1}) \Big[\mathbb{T}_g \vec{f} \Big]_v\\
    &= \sum_{v\in G} m(uv^{-1})f(vg)
\end{split}
\end{equation*}

%\textcolor{blue}{[QN: Here it looks like you are applying $\phi$ before $T_g$. Also, do you mean $\mathbb{T}_g$ and $\mathbb{T}_g^{\prime}$? I think the induced actions act like $\mathbb{T}_g f(x) = f(T_g^{-1} x) = f(g^{-1}x)$ for both $\mathbb{T}_g$ and $\mathbb{T}_g^{\prime}$. UPDATE: I was wrong, nevermind!]}

Now redefine the sum above over $v' = vg$, and now we have $v^{-1} = g(v')^{-1}$, 
\begin{equation}
\begin{split}
    \sum_{v \in G} m(uv^{-1})f(vg) &= \sum_{v' \in G} m(ugv'^{-1})f(v')\\
    &= \Big[\vec{m} \circledast \vec{f}\Big]_{ug}\\
    & = \Big[\phi_m (\vec{f})\Big]_{ug}\\
    & = \Big[\mathbb{T}'_g(\phi_m(\vec{f}))\Big]_u \;\;\;\;\; \forall \; u \in G.
\end{split}
\end{equation}

Concluding that $\phi_m(\mathbb{T}_g \vec{f}) = \mathbb{T}'_g(\phi_m(\vec{f}))$; hence convolution is equivariant to the right actions of the group.
\end{proof}

\subsection{Cross-correlation}
Let $\phi_m: L_{V_1} \rightarrow L_{V_2}$ be the map performing cross-correlation with a fixed filter $\vec{m}$ on an input $\vec{f}$,

\begin{equation*}
    \phi_m (\vec{f}) = \vec{m} \star \vec{f}.
\end{equation*}
Let $T_g$ and $T'_g$ denote the right actions of the group,

\begin{equation*}
    T_g, T'_g : u \rightarrow ug
\end{equation*}
Let $\mathbb{T}_g$ and $\mathbb{T}'_g$ be the induced action of group elements onto $V_1$ and $V_2$. From definition 1.1, the map $\phi_m: L_{V_1} \rightarrow L_{V_2}$ is equivariant to $T_g$ since 
\begin{equation*}
    \phi_m(\mathbb{T}_g\vec{f}) = \mathbb{T}_g' (\phi_m(\vec{f})).
\end{equation*}

\begin{proof}
    Recall the cross correlation definition from \autoref{eq:convolution_theorems}
\begin{equation*}
    \Big[\phi_m(\vec{f})\Big]_u = \Big[\vec{m} \star \vec{f}\Big]_u = \sum_{v\in G} m(vu^{-1})f(v).
\end{equation*}

Let $\phi_m$ act on $\mathbb{T}_g\vec{f}$,
\begin{equation*}
\begin{split}
    \Big[\phi_m (\mathbb{T}_g \vec{f}) \Big]_u = \Big[\vec{m} \star \mathbb{T}_g \vec{f} \Big]_u &= \sum_{v\in G} m(uv)\Big[\mathbb{T}_g\vec{f}\Big]_v \\
    &= \sum_{v\in G} m(vu^{-1})f(vg)
\end{split}
\end{equation*}
Now redefine the sum above over $v' = vg$, and now we have $v = v'g^{-1}$,
\begin{equation*}
\begin{split}
    \sum_{v\in G} m(vu^{-1})f(vg) &= \sum_{v'\in G} m(v'g^{-1}u^{-1})f(v')\\
    &= \Big[\vec{m} \star \vec{f} \Big]_{ug}\\
    &= \Big[ \mathbb{T}'_g (\phi_m(\vec{f})) \Big]_u \;\;\;\;\; \forall u \in G.
\end{split}
\end{equation*}
Concluding that $\phi_m(\mathbb{T}_g\vec{f}) = \mathbb{T}'_g(\phi_m(\vec{f}))$; hence cross-correlation is equivariant to the right actions of the group. 
\end{proof}

\section{Details of algorithm for solving integral equation} \label{app:integraleqn}

In this section we detail an application of our quantum group convolution algorithms for solving integral equations where an unknown function appears under an integral sign. Integral equations often arise in mathematical physics models which require summations (integrations) over space or time or can be formed by appropriately transforming differential operators \cite{source_integral_eq}. Integral equations have wide applications in scientific and engineering problems including diffraction problems, scattering in quantum mechanics, and conformal mappings, among others. One common integral equation is a second kind Fredholm equation shown in general form below:
\begin{equation}
    g(\vec{x}) = y(\vec{x}) f(\vec{x}) + \lambda \int_{\vec{a}}^{\vec{b}} K(\vec{x}, \vec{x}') f(\vec{x}') d\vec{x}',
\end{equation}
where $f(\vec{x})$ is the unknown function and $K(\vec{x},\vec{x}')$ is the kernel of the integral equation. In physical settings, kernels are often potential functions (e.g. gravitational, electric potentials) which depend only on the difference between their arguments: $K (\vec{x},\vec{x}')) =K(\vec{x}-\vec{x}'))$, commonly termed displacement kernels. Kernels commonly used in machine learning such as the squared exponential or rational quadratic kernels also take this form \cite{rasmussen2003gaussian}.

For most integral equations, closed form solutions do not exist. Thus, one must numerically solve integral equations by converting them into a linear system of equations. One option for discretizing an integral equation is the Nystr\"om method \cite{atkinson2009numerical} which approximates the integral via a weighted summation over discretized points (\textit{e.g.,} quadrature or Riemann sum) and solves for the values of the unknown function at discretized points. If certain group symmetries are present in the problem setting, one can use the Nystr\"om method to convert the numerical integration into a finite group cross-correlation (recall definitions in  \Cref{lem:conv_to_linear_comb}) and solve the integral equation using our quantum algorithms in time logarithmic in the number of discretized points given inputs stored as quantum data.

We consider an integral equation defined over the boundary of a $2$-dimensional periodic (wrap-around) lattice:
\begin{equation}
    g(t_1,t_2) = f(t_1, t_2) + \lambda \int_{0}^1 \int_{0}^1 K(\| \vec{t} - \vec{t'} \|) f(t_1', t_2') dt_1' dt_2'.
    \label{app:equation}
\end{equation}
Here, we would like to solve for $f(t_1,t_2)$. Due to the periodic property of the lattice, if we discretize $t_1$ and  $t_2$ evenly, the resultant discritized grid is invariant to certain two dimensional translations of the grid (i.e., translations from the direct sum of two cyclic groups). In this setting, we will show that numerical discretization of integral equations with a displacement kernel $K(\|\vec{t}-\vec{t'}\|)$ take the form of group convolution/cross-correlation. Using the Nystr\"om method, we approximate the integral as a weighted sum over $n^2$ discretized points evenly spaced on $t_1$ and $t_2$:
\begin{equation}
    \int_0^1 \int_0^1 K(\| \vec{t} - \vec{t'} \|) f(\vec{t'}) d\vec{t'}  \approx \frac{1}{n^2} \sum_{i_1 = 1}^{n} \sum_{i_2=1}^{n} K(\| \vec{t} - \vec{t'}_{i_1, i_2} \| ) f(\vec{t'}_{i_1, i_2}).
\end{equation}
Here we have utilized the trapezoidal rule, noting that the integrand evaluates to the same values at the boundaries of the lattice. \autoref{app:equation} can then be written as a system of equations
\begin{equation}
    \boldsymbol{g} \approx (I+\lambda \boldsymbol{K}) \boldsymbol{f} = \boldsymbol{\widetilde{K}} \boldsymbol{f},
    \label{eq:app.discretized}
\end{equation}
where bold text indicates discretized values of the functions vectorized in lexicographic order and $\boldsymbol{K}$ is an $n^2 \times n^2$ matrix whose entries $\boldsymbol{K}_{ij} = (1/n^2) K(\| \vec{t}_i - \vec{t'_j} \|)$ (note that $i, j \in [n^2] $ here index vectors $\vec{t}$ or $\vec{t'}$ at grid points). The form of the matrix $\boldsymbol{\widetilde{K}}$ (and $\boldsymbol{K}$) corresponds to a group cross-correlation over the direct product of the cyclic groups $(\mathbb{Z}/n\mathbb{Z}) \times (\mathbb{Z}/n\mathbb{Z})$ where the filter $\boldsymbol{m}$ is equal to the values of the first row (or equivalently the first column) of $\boldsymbol{\widetilde{K}}$. In other words, moving from one row of $\boldsymbol{\widetilde{K}}$ to another is equivalent to applying a permutation operation corresponding to one of the group elements. In our case, these permutations are the (left) regular representations of $(\mathbb{Z}/n\mathbb{Z}) \times (\mathbb{Z}/n\mathbb{Z})$:
\begin{equation}
    L_{i} = P_{i_1} \otimes P_{i_2} \text{,  for } i \in [n^2],
\end{equation}
where $i_1 = \lfloor i/n \rfloor$, $i_2 = i\mod n$, and $P_{i}$ are the $n \times n$ cyclic permutation matrices corresponding to the cyclic group $\mathbb{Z}/n\mathbb{Z}$: $[P_i]_{jk}=\delta_{j, k+i}$.

As stated above, we take the filter $\boldsymbol{m}$ to be the first column of $\boldsymbol{\widetilde{K}}$: 
\begin{equation}
    m_i = \delta_{i0}+\lambda \boldsymbol{K}_{i0}, \text{ for } i \in [n^2],
    \label{eq:app.filter}
\end{equation}
 and rewrite $\boldsymbol{\widetilde{K}} = \sum_{i \in [n^2]}m_i L_i^{-1}$ (\Cref{lem:conv_to_linear_comb}). Thus, \autoref{eq:app.discretized} is equivalent to a group cross-correlation 
 \begin{equation}
      \boldsymbol{g} \approx \boldsymbol{m} \star \boldsymbol{f}.
      \label{eq:app.cross-correlation}
 \end{equation}
 
We now show that the above procedure is numerically robust by analyzing a numerical experiment where we can conveniently calculate errors by comparing to a closed form solution (of course, the existence of a closed form solution is not required for our algorithm). We emphasize here that we are directly inverting the cross-correlation matrix $\boldsymbol{\widetilde{K}}$ in \autoref{eq:app.discretized} and not simulating the experiment as it would be performed on a quantum computer. Later, we discuss runtimes and errors should on perform this task one a quantum computer. We take the exponential kernel
\begin{equation}
    K(\| \vec{t} - \vec{t'} \|) = e^{-D( \vec{t}, \vec{t'}) },
    \label{eq:app.kernel}
\end{equation}
where $D(\vec{t}, \vec{t'}) = \sum_{i}^{d} \min \{| t_i - t_i' | , L_i - | t_i - t_i' | \}$ is the Manhattan distance \cite{taxicab} defined over the periodic lattice of size $L_i$ in each dimension (in our case $d=2$ and $L_i = 1$),
\begin{equation}
 \lambda = 1, \qquad \qquad g(t_1,t_2) = (t_1 - t_1^3)(t_2 - t_2^3) + h(t_1)h(t_2),
 \label{eq:app.example}
\end{equation}
where 
\begin{equation}
    h(u) = \begin{cases} 
      -3 e^{-u} -2u(5+u^2) + \frac{(1+2u)(21+4u(1+u)) }{4\sqrt{e}} \qquad \text{if } 0 \leq u < 1/2\\
      9 e^{-1+u} -2u(5+u^2) + \frac{(-1+2u)(21+4u(-1+u)) }{4\sqrt{e}} \qquad \text{if } 1/2 \leq u < 1
   \end{cases}.
\label{eq:app.examples}
\end{equation}
The solution takes the following simple form
\begin{equation}
    f(t_1,t_2)=(t_1 - t_1^3)(t_2 - t_2^3).
\end{equation}

\autoref{fig:solution} displays the the underlying solution alongside the discretized solution $\boldsymbol{f}$ for various numbers of discretized points $n^2$, where $n \in \{ 4,16,64\}$. We observe that the numerical solution using inverse group cross-correlation converges to the true solution. Furthermore, as numerically shown in \autoref{fig:error}, the average absolute error, defined as
\begin{equation}
    \frac{\|\boldsymbol{f}-f(\boldsymbol{t})\|_1}{n^2},
    \label{eq:app.error}
\end{equation}
decreases at a rate of $ 1/n^2$ with the discretization resolution $n$. Here, we use the notation $f(\boldsymbol{t})$ to represent the vector of values of the underlying solution at the discretized points. This is consistent with errors in integral approximations with the trapezoidal rule where $n$ discretized points in each dimension produces an error of $O(1/n^2)$ \cite{atkinson2009numerical} (for functions with non-smooth derivatives at the boundary). In addition, the condition number of $\boldsymbol{\widetilde{K}}$ is bounded by $O(1+1/n^2)$, as we will show later. Therefore, by the fundamental theorem of numerical analysis \cite{Arnold2015}, the error in convergence to the true solution decays at a rate of $1/n^2$. For details of these error analysis techniques, we refer the reader to \cite{atkinson2009numerical}.

\begin{figure}[h]
\centering
\includegraphics{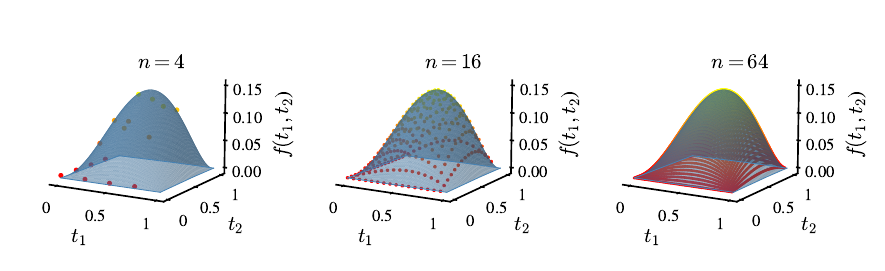}
\caption[Caption for LOF]{Numerical solution (yellow-red dots) of the integral equation defined in \autoref{app:equation} and \autoref{eq:app.examples} using our Nystr\"om cross-correlation approach shown side-by-side with the underlying solution $f(t_1,t_2) = (t_1 -t_1^3)(t_2-t_2^3)$ (blue surface) for increasing number of discretized points $n$ along each dimension\footnote{See supplementary code at: \url{https://github.com/nguyenquantum/group-convolution}.}.}
\label{fig:solution}
\end{figure}

\begin{figure}[h]
\centering
\includegraphics[width=8.6cm]{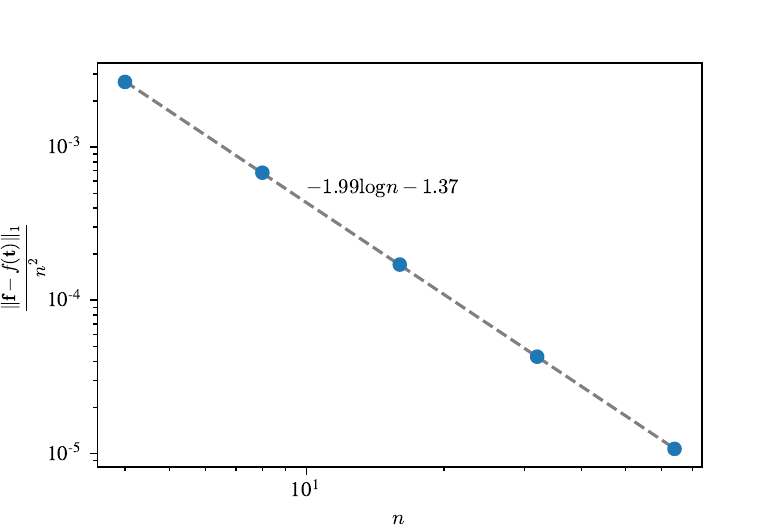}
\caption[Caption for LOF]{Log-log scale: average absolute error (\autoref{eq:app.error}) of our Nystr\"om cross-correlation approach for solving the integral equation decreases at rate $\sim 1/n^2$ (dashed grey), where $n$ is the number of discretized points along each dimension.}
\label{fig:error}
\end{figure}

The cross correlation inversion step above can be implemented in quantum settings using our quantum group convolution/cross-correlation algorithms as follows. One first computes the group Fourier transform $\boldsymbol{\Hat{m}}$ of the filter (\autoref{eq:app.filter}) and prepares a circuit that implements an oracle $O_{\boldsymbol{\Hat{m}}}$ which returns entries of $\boldsymbol{\Hat{m}}$ (\autoref{eq:fourier_entry_oracle} repeated):
\begin{equation}
    O_{\boldsymbol{\Hat{m}}}: \ket{j}\ket{0} \rightarrow \ket{j} \ket{\Hat{m}_j}.
\end{equation}
When $n$ is sufficiently large, we have $\hat{m}_j \approx \int_{0}^{1}\int_{0}^{1}K(\|\vec{t}\|) \chi_{j}(\vec{t}) d\vec{t}$, where $\chi_{j}(\vec{t}) = e^{2\pi i(j_1t_1+j_2t_2)/n}$ (with $j_1=\lfloor j/n \rfloor, j_2 = j \mod n$) is a character of the group $(\mathbb{Z}/n\mathbb{Z}) ^{\times 2}$. The error of this approximation can be similarly analyzed using standard error bounds from quadrature schemes; we refer the reader to \cite{atkinson2009numerical}. This integral can be done analytically with the exponential kernel $K$ defined in \autoref{eq:app.kernel}, hence there exists an efficient circuit for $O_{\boldsymbol{\Hat{m}}}$. In addition, this oracle only depends on the kernel $K$, the constant $\lambda$, and the number of discretized points $n$ and thus can be reused for different functions $f, g$. Given the input function $g$ stored as a quantum state $|\boldsymbol{g}\rangle$, one can apply our algorithm for inverting group cross-correlation (\Cref{prop:inverse_op} or \Cref{app:fourier-deconvolve}) to efficiently obtain the discretized solution $\boldsymbol{f}$ as a quantum state in time $O(T_B \frac{\kappa^2}{\|\boldsymbol{\widetilde{K}}\|} \operatorname{polylog} \frac{ \kappa}{\|\boldsymbol{\widetilde{K}}\| \epsilon} )$, where we have taken $d_{\max}=1$ as the group $(\mathbb{Z}/n\mathbb{Z})^{\times 2}$ is abelian. Here $T_B$ is the runtime of the block encoding of forward group convolution/cross-correlation in \Cref{lem:abelian_encoding}, which is composed of one group Fourier transform, one inverse group Fourier transform, two calls to the oracle $O_{\boldsymbol{\Hat{m}}}$ above, and $O(\operatorname{polylog} \frac{1}{\epsilon}+\log n)$ additional gates. The group Fourier transform and its inverse can be efficiently implemented by a circuit of size logarithmic in the size of the group \cite{childs2010quantum} (here, $|(\mathbb{Z}/n\mathbb{Z})^{\times 2}| = n^2$). Thus, assuming the two oracle calls also take at worst polylogarithmic time, we have $T_B = O (\operatorname{polylog}{ \frac{1}{\epsilon}, \log n })$.

We now bound the condition number $\kappa$ and the operator norm of the cross-correlation matrix $\boldsymbol{\widetilde{K}}$ in \autoref{eq:app.discretized}. First, the operator norm of $\boldsymbol{K}$ can be bounded as follows:
\begin{equation}
\begin{aligned}
     \|\boldsymbol{K}\| = \|\sum_{i \in [n^2]} \boldsymbol{K}_{i0} L_i^{-1} \| & \leq \sum_{i \in [n^2]} \boldsymbol{K}_{i0} \\
     & \leq \iint K(\|\vec{t} - \vec{0}\|) d\vec{t} + O(1/n^2),
\end{aligned}
\end{equation}
where we have used the triangle inequality, noting that the kernel is real non-negative, and the error estimate of the trapezoidal quadrature scales as $O(1/n^2)$. For example, the integral of the Manhattan-distance-exponential kernel in our example evaluates to $\iint e^{-(D(\vec{t},0))} d\vec{t} = 0.7869$, which is independent of $n$. For the (classical) numerical approximation to be stable, we require that $1- 0.7869|\lambda| = \Omega(1)$ \cite{conditionnumber}, which is indeed the case in our example specified by \autoref{eq:app.example}. The condition number of $\boldsymbol{\widetilde{K}} = I +\lambda \boldsymbol{K}$ can then be bounded as 
\begin{equation}
    \kappa (\boldsymbol{\widetilde{K}}) \leq \frac{1+ |\lambda| \| \boldsymbol{K}\|}{1- |\lambda| \|\boldsymbol{K} \|} \leq \frac{1+ 0.7869 |\lambda|  + O(1/n^2)}{1- 0.7869 |\lambda| - O(1/n^2)} = O(1),
\end{equation}
and the operator norm of $\boldsymbol{\widetilde{K}}$ can be lower bounded as
\begin{equation}
    \|\boldsymbol{\widetilde{K}}\| \geq \sigma_{\min}(\boldsymbol{\widetilde{K}}) \geq 1 - |\lambda| \| K\| = \Omega(1).
\end{equation}

Therefore, given $\boldsymbol{g}$ as quantum data, our quantum inverse cross-correlation algorithm solves the above integral equation in time $O( \log n, \operatorname{polylog} \frac{1}{\epsilon})$ up to normalization. In practice, kernels are often well conditioned for the use of trapezoidal rule, thus the condition number $\kappa$ can be bounded similarly as above. In these well-conditioned settings, our algorithm is applicable to integral equations with any number of group symmetries.

\bibliographystyle{apsrev4-1}
\bibliography{main}

%merlin.mbs apsrev4-1.bst 2010-07-25 4.21a (PWD, AO, DPC) hacked
%Control: key (0)
%Control: author (72) initials jnrlst
%Control: editor formatted (1) identically to author
%Control: production of article title (-1) disabled
%Control: page (0) single
%Control: year (1) truncated
%Control: production of eprint (0) enabled
\begin{thebibliography}{61}%
\makeatletter
\providecommand \@ifxundefined [1]{%
 \@ifx{#1\undefined}
}%
\providecommand \@ifnum [1]{%
 \ifnum #1\expandafter \@firstoftwo
 \else \expandafter \@secondoftwo
 \fi
}%
\providecommand \@ifx [1]{%
 \ifx #1\expandafter \@firstoftwo
 \else \expandafter \@secondoftwo
 \fi
}%
\providecommand \natexlab [1]{#1}%
\providecommand \enquote  [1]{``#1''}%
\providecommand \bibnamefont  [1]{#1}%
\providecommand \bibfnamefont [1]{#1}%
\providecommand \citenamefont [1]{#1}%
\providecommand \href@noop [0]{\@secondoftwo}%
\providecommand \href [0]{\begingroup \@sanitize@url \@href}%
\providecommand \@href[1]{\@@startlink{#1}\@@href}%
\providecommand \@@href[1]{\endgroup#1\@@endlink}%
\providecommand \@sanitize@url [0]{\catcode `\\12\catcode `\$12\catcode
  `\&12\catcode `\#12\catcode `\^12\catcode `\_12\catcode `\%12\relax}%
\providecommand \@@startlink[1]{}%
\providecommand \@@endlink[0]{}%
\providecommand \url  [0]{\begingroup\@sanitize@url \@url }%
\providecommand \@url [1]{\endgroup\@href {#1}{\urlprefix }}%
\providecommand \urlprefix  [0]{URL }%
\providecommand \Eprint [0]{\href }%
\providecommand \doibase [0]{http://dx.doi.org/}%
\providecommand \selectlanguage [0]{\@gobble}%
\providecommand \bibinfo  [0]{\@secondoftwo}%
\providecommand \bibfield  [0]{\@secondoftwo}%
\providecommand \translation [1]{[#1]}%
\providecommand \BibitemOpen [0]{}%
\providecommand \bibitemStop [0]{}%
\providecommand \bibitemNoStop [0]{.\EOS\space}%
\providecommand \EOS [0]{\spacefactor3000\relax}%
\providecommand \BibitemShut  [1]{\csname bibitem#1\endcsname}%
\let\auto@bib@innerbib\@empty
%</preamble>
\bibitem [{\citenamefont {Albawi}\ \emph {et~al.}(2017)\citenamefont {Albawi},
  \citenamefont {Mohammed},\ and\ \citenamefont
  {Al-Zawi}}]{albawi2017understanding}%
  \BibitemOpen
  \bibfield  {author} {\bibinfo {author} {\bibfnamefont {S.}~\bibnamefont
  {Albawi}}, \bibinfo {author} {\bibfnamefont {T.~A.}\ \bibnamefont
  {Mohammed}}, \ and\ \bibinfo {author} {\bibfnamefont {S.}~\bibnamefont
  {Al-Zawi}},\ }in\ \href@noop {} {\emph {\bibinfo {booktitle} {2017
  International Conference on Engineering and Technology (ICET)}}}\ (\bibinfo
  {organization} {Ieee},\ \bibinfo {year} {2017})\ pp.\ \bibinfo {pages}
  {1--6}\BibitemShut {NoStop}%
\bibitem [{\citenamefont {Goodfellow}\ \emph {et~al.}(2016)\citenamefont
  {Goodfellow}, \citenamefont {Bengio},\ and\ \citenamefont
  {Courville}}]{goodfellow2016deep}%
  \BibitemOpen
  \bibfield  {author} {\bibinfo {author} {\bibfnamefont {I.}~\bibnamefont
  {Goodfellow}}, \bibinfo {author} {\bibfnamefont {Y.}~\bibnamefont {Bengio}},
  \ and\ \bibinfo {author} {\bibfnamefont {A.}~\bibnamefont {Courville}},\
  }\href@noop {} {\emph {\bibinfo {title} {Deep learning}}}\ (\bibinfo
  {publisher} {MIT press},\ \bibinfo {year} {2016})\BibitemShut {NoStop}%
\bibitem [{\citenamefont {Cohen}\ and\ \citenamefont
  {Welling}(2016)}]{cohen2016group}%
  \BibitemOpen
  \bibfield  {author} {\bibinfo {author} {\bibfnamefont {T.}~\bibnamefont
  {Cohen}}\ and\ \bibinfo {author} {\bibfnamefont {M.}~\bibnamefont
  {Welling}},\ }in\ \href@noop {} {\emph {\bibinfo {booktitle} {International
  Conference on machine learning}}}\ (\bibinfo {organization} {PMLR},\ \bibinfo
  {year} {2016})\ pp.\ \bibinfo {pages} {2990--2999}\BibitemShut {NoStop}%
\bibitem [{\citenamefont {Kondor}\ and\ \citenamefont
  {Trivedi}(2018)}]{kondor2018generalization}%
  \BibitemOpen
  \bibfield  {author} {\bibinfo {author} {\bibfnamefont {R.}~\bibnamefont
  {Kondor}}\ and\ \bibinfo {author} {\bibfnamefont {S.}~\bibnamefont
  {Trivedi}},\ }in\ \href@noop {} {\emph {\bibinfo {booktitle} {International
  Conference on Machine Learning}}}\ (\bibinfo {organization} {PMLR},\ \bibinfo
  {year} {2018})\ pp.\ \bibinfo {pages} {2747--2755}\BibitemShut {NoStop}%
\bibitem [{\citenamefont {Ravanbakhsh}\ \emph {et~al.}(2017)\citenamefont
  {Ravanbakhsh}, \citenamefont {Schneider},\ and\ \citenamefont
  {Poczos}}]{ravanbakhsh2017equivariance}%
  \BibitemOpen
  \bibfield  {author} {\bibinfo {author} {\bibfnamefont {S.}~\bibnamefont
  {Ravanbakhsh}}, \bibinfo {author} {\bibfnamefont {J.}~\bibnamefont
  {Schneider}}, \ and\ \bibinfo {author} {\bibfnamefont {B.}~\bibnamefont
  {Poczos}},\ }in\ \href@noop {} {\emph {\bibinfo {booktitle} {International
  Conference on Machine Learning}}}\ (\bibinfo {organization} {PMLR},\ \bibinfo
  {year} {2017})\ pp.\ \bibinfo {pages} {2892--2901}\BibitemShut {NoStop}%
\bibitem [{\citenamefont {Maron}\ \emph {et~al.}(2019)\citenamefont {Maron},
  \citenamefont {Fetaya}, \citenamefont {Segol},\ and\ \citenamefont
  {Lipman}}]{maron2019universality}%
  \BibitemOpen
  \bibfield  {author} {\bibinfo {author} {\bibfnamefont {H.}~\bibnamefont
  {Maron}}, \bibinfo {author} {\bibfnamefont {E.}~\bibnamefont {Fetaya}},
  \bibinfo {author} {\bibfnamefont {N.}~\bibnamefont {Segol}}, \ and\ \bibinfo
  {author} {\bibfnamefont {Y.}~\bibnamefont {Lipman}},\ }in\ \href@noop {}
  {\emph {\bibinfo {booktitle} {International Conference on machine
  learning}}}\ (\bibinfo {organization} {PMLR},\ \bibinfo {year} {2019})\ pp.\
  \bibinfo {pages} {4363--4371}\BibitemShut {NoStop}%
\bibitem [{\citenamefont {Farashahi}(2011)}]{haarconvs}%
  \BibitemOpen
  \bibfield  {author} {\bibinfo {author} {\bibfnamefont {A.~G.}\ \bibnamefont
  {Farashahi}},\ }\href@noop {} {\bibfield  {journal} {\bibinfo  {journal}
  {arXiv preprint arXiv:1201.0297}\ } (\bibinfo {year} {2011})}\BibitemShut
  {NoStop}%
\bibitem [{\citenamefont {Finzi}\ \emph {et~al.}(2020)\citenamefont {Finzi},
  \citenamefont {Stanton}, \citenamefont {Izmailov},\ and\ \citenamefont
  {Wilson}}]{finzi2020generalizing}%
  \BibitemOpen
  \bibfield  {author} {\bibinfo {author} {\bibfnamefont {M.}~\bibnamefont
  {Finzi}}, \bibinfo {author} {\bibfnamefont {S.}~\bibnamefont {Stanton}},
  \bibinfo {author} {\bibfnamefont {P.}~\bibnamefont {Izmailov}}, \ and\
  \bibinfo {author} {\bibfnamefont {A.~G.}\ \bibnamefont {Wilson}},\ }in\
  \href@noop {} {\emph {\bibinfo {booktitle} {International Conference on
  Machine Learning}}}\ (\bibinfo {organization} {PMLR},\ \bibinfo {year}
  {2020})\ pp.\ \bibinfo {pages} {3165--3176}\BibitemShut {NoStop}%
\bibitem [{\citenamefont {Kondor}\ \emph {et~al.}(2018)\citenamefont {Kondor},
  \citenamefont {Lin},\ and\ \citenamefont {Trivedi}}]{kondor2018clebsch}%
  \BibitemOpen
  \bibfield  {author} {\bibinfo {author} {\bibfnamefont {R.}~\bibnamefont
  {Kondor}}, \bibinfo {author} {\bibfnamefont {Z.}~\bibnamefont {Lin}}, \ and\
  \bibinfo {author} {\bibfnamefont {S.}~\bibnamefont {Trivedi}},\ }\href@noop
  {} {\bibfield  {journal} {\bibinfo  {journal} {Advances in Neural Information
  Processing Systems}\ }\textbf {\bibinfo {volume} {31}},\ \bibinfo {pages}
  {10117} (\bibinfo {year} {2018})}\BibitemShut {NoStop}%
\bibitem [{\citenamefont {Perraudin}\ \emph {et~al.}(2019)\citenamefont
  {Perraudin}, \citenamefont {Defferrard}, \citenamefont {Kacprzak},\ and\
  \citenamefont {Sgier}}]{perraudin2019deepsphere}%
  \BibitemOpen
  \bibfield  {author} {\bibinfo {author} {\bibfnamefont {N.}~\bibnamefont
  {Perraudin}}, \bibinfo {author} {\bibfnamefont {M.}~\bibnamefont
  {Defferrard}}, \bibinfo {author} {\bibfnamefont {T.}~\bibnamefont
  {Kacprzak}}, \ and\ \bibinfo {author} {\bibfnamefont {R.}~\bibnamefont
  {Sgier}},\ }\href@noop {} {\bibfield  {journal} {\bibinfo  {journal}
  {Astronomy and Computing}\ }\textbf {\bibinfo {volume} {27}},\ \bibinfo
  {pages} {130} (\bibinfo {year} {2019})}\BibitemShut {NoStop}%
\bibitem [{\citenamefont {Cohen}\ \emph {et~al.}(2018)\citenamefont {Cohen},
  \citenamefont {Geiger}, \citenamefont {K{\"o}hler},\ and\ \citenamefont
  {Welling}}]{cohen2018spherical}%
  \BibitemOpen
  \bibfield  {author} {\bibinfo {author} {\bibfnamefont {T.~S.}\ \bibnamefont
  {Cohen}}, \bibinfo {author} {\bibfnamefont {M.}~\bibnamefont {Geiger}},
  \bibinfo {author} {\bibfnamefont {J.}~\bibnamefont {K{\"o}hler}}, \ and\
  \bibinfo {author} {\bibfnamefont {M.}~\bibnamefont {Welling}},\ }in\
  \href@noop {} {\emph {\bibinfo {booktitle} {International Conference on
  Learning Representations}}}\ (\bibinfo {year} {2018})\BibitemShut {NoStop}%
\bibitem [{\citenamefont {Childs}\ and\ \citenamefont
  {Van~Dam}(2010)}]{childs2010quantum}%
  \BibitemOpen
  \bibfield  {author} {\bibinfo {author} {\bibfnamefont {A.~M.}\ \bibnamefont
  {Childs}}\ and\ \bibinfo {author} {\bibfnamefont {W.}~\bibnamefont
  {Van~Dam}},\ }\href@noop {} {\bibfield  {journal} {\bibinfo  {journal}
  {Reviews of Modern Physics}\ }\textbf {\bibinfo {volume} {82}},\ \bibinfo
  {pages} {1} (\bibinfo {year} {2010})}\BibitemShut {NoStop}%
\bibitem [{\citenamefont {Kothari}(2014)}]{kothari2014efficient}%
  \BibitemOpen
  \bibfield  {author} {\bibinfo {author} {\bibfnamefont {R.}~\bibnamefont
  {Kothari}},\ }\emph {\bibinfo {title} {Efficient algorithms in quantum query
  complexity}},\ \href@noop {} {Ph.D. thesis},\ \bibinfo  {school} {University
  of Waterloo} (\bibinfo {year} {2014})\BibitemShut {NoStop}%
\bibitem [{\citenamefont {Moore}\ \emph {et~al.}(2006)\citenamefont {Moore},
  \citenamefont {Rockmore},\ and\ \citenamefont {Russell}}]{moore2006generic}%
  \BibitemOpen
  \bibfield  {author} {\bibinfo {author} {\bibfnamefont {C.}~\bibnamefont
  {Moore}}, \bibinfo {author} {\bibfnamefont {D.}~\bibnamefont {Rockmore}}, \
  and\ \bibinfo {author} {\bibfnamefont {A.}~\bibnamefont {Russell}},\
  }\href@noop {} {\bibfield  {journal} {\bibinfo  {journal} {ACM Transactions
  on Algorithms (TALG)}\ }\textbf {\bibinfo {volume} {2}},\ \bibinfo {pages}
  {707} (\bibinfo {year} {2006})}\BibitemShut {NoStop}%
\bibitem [{\citenamefont {Van~Dam}\ \emph {et~al.}(2006)\citenamefont
  {Van~Dam}, \citenamefont {Hallgren},\ and\ \citenamefont
  {Ip}}]{van2006quantum}%
  \BibitemOpen
  \bibfield  {author} {\bibinfo {author} {\bibfnamefont {W.}~\bibnamefont
  {Van~Dam}}, \bibinfo {author} {\bibfnamefont {S.}~\bibnamefont {Hallgren}}, \
  and\ \bibinfo {author} {\bibfnamefont {L.}~\bibnamefont {Ip}},\ }\href@noop
  {} {\bibfield  {journal} {\bibinfo  {journal} {SIAM Journal on Computing}\
  }\textbf {\bibinfo {volume} {36}},\ \bibinfo {pages} {763} (\bibinfo {year}
  {2006})}\BibitemShut {NoStop}%
\bibitem [{\citenamefont {Roetteler}(2016)}]{roetteler2016quantum}%
  \BibitemOpen
  \bibfield  {author} {\bibinfo {author} {\bibfnamefont {M.}~\bibnamefont
  {Roetteler}},\ }in\ \href@noop {} {\emph {\bibinfo {booktitle} {11th
  Conference on the Theory of Quantum Computation, Communication and
  Cryptography}}}\ (\bibinfo {year} {2016})\BibitemShut {NoStop}%
\bibitem [{\citenamefont {R{\"o}tteler}(2010)}]{rotteler2010quantum}%
  \BibitemOpen
  \bibfield  {author} {\bibinfo {author} {\bibfnamefont {M.}~\bibnamefont
  {R{\"o}tteler}},\ }in\ \href@noop {} {\emph {\bibinfo {booktitle}
  {Proceedings of the twenty-first annual ACM-SIAM symposium on Discrete
  algorithms}}}\ (\bibinfo {organization} {SIAM},\ \bibinfo {year} {2010})\
  pp.\ \bibinfo {pages} {448--457}\BibitemShut {NoStop}%
\bibitem [{\citenamefont {Emerson}\ \emph {et~al.}(2005)\citenamefont
  {Emerson}, \citenamefont {Livine},\ and\ \citenamefont
  {Lloyd}}]{emerson2005convergence}%
  \BibitemOpen
  \bibfield  {author} {\bibinfo {author} {\bibfnamefont {J.}~\bibnamefont
  {Emerson}}, \bibinfo {author} {\bibfnamefont {E.}~\bibnamefont {Livine}}, \
  and\ \bibinfo {author} {\bibfnamefont {S.}~\bibnamefont {Lloyd}},\
  }\href@noop {} {\bibfield  {journal} {\bibinfo  {journal} {Physical Review
  A}\ }\textbf {\bibinfo {volume} {72}},\ \bibinfo {pages} {060302} (\bibinfo
  {year} {2005})}\BibitemShut {NoStop}%
\bibitem [{\citenamefont {Dankert}\ \emph {et~al.}(2009)\citenamefont
  {Dankert}, \citenamefont {Cleve}, \citenamefont {Emerson},\ and\
  \citenamefont {Livine}}]{dankert2009exact}%
  \BibitemOpen
  \bibfield  {author} {\bibinfo {author} {\bibfnamefont {C.}~\bibnamefont
  {Dankert}}, \bibinfo {author} {\bibfnamefont {R.}~\bibnamefont {Cleve}},
  \bibinfo {author} {\bibfnamefont {J.}~\bibnamefont {Emerson}}, \ and\
  \bibinfo {author} {\bibfnamefont {E.}~\bibnamefont {Livine}},\ }\href@noop {}
  {\bibfield  {journal} {\bibinfo  {journal} {Physical Review A}\ }\textbf
  {\bibinfo {volume} {80}},\ \bibinfo {pages} {012304} (\bibinfo {year}
  {2009})}\BibitemShut {NoStop}%
\bibitem [{\citenamefont {Low}\ and\ \citenamefont
  {Chuang}(2019)}]{low2019hamiltonian}%
  \BibitemOpen
  \bibfield  {author} {\bibinfo {author} {\bibfnamefont {G.~H.}\ \bibnamefont
  {Low}}\ and\ \bibinfo {author} {\bibfnamefont {I.~L.}\ \bibnamefont
  {Chuang}},\ }\href@noop {} {\bibfield  {journal} {\bibinfo  {journal}
  {Quantum}\ }\textbf {\bibinfo {volume} {3}},\ \bibinfo {pages} {163}
  (\bibinfo {year} {2019})}\BibitemShut {NoStop}%
\bibitem [{\citenamefont {Gily{\'e}n}\ \emph {et~al.}(2018)\citenamefont
  {Gily{\'e}n}, \citenamefont {Su}, \citenamefont {Low},\ and\ \citenamefont
  {Wiebe}}]{gilyen2019quantum}%
  \BibitemOpen
  \bibfield  {author} {\bibinfo {author} {\bibfnamefont {A.}~\bibnamefont
  {Gily{\'e}n}}, \bibinfo {author} {\bibfnamefont {Y.}~\bibnamefont {Su}},
  \bibinfo {author} {\bibfnamefont {G.~H.}\ \bibnamefont {Low}}, \ and\
  \bibinfo {author} {\bibfnamefont {N.}~\bibnamefont {Wiebe}},\ }\href
  {https://arxiv.org/abs/1806.01838} {\enquote {\bibinfo {title} {Quantum
  singular value transformation and beyond: exponential improvements for
  quantum matrix arithmetics},}\ } (\bibinfo {year} {2018})\BibitemShut
  {NoStop}%
\bibitem [{\citenamefont {Gui-Lu}\ and\ \citenamefont
  {Yang}(2008)}]{gui2008duality}%
  \BibitemOpen
  \bibfield  {author} {\bibinfo {author} {\bibfnamefont {L.}~\bibnamefont
  {Gui-Lu}}\ and\ \bibinfo {author} {\bibfnamefont {L.}~\bibnamefont {Yang}},\
  }\href@noop {} {\bibfield  {journal} {\bibinfo  {journal} {Communications in
  Theoretical Physics}\ }\textbf {\bibinfo {volume} {50}},\ \bibinfo {pages}
  {1303} (\bibinfo {year} {2008})}\BibitemShut {NoStop}%
\bibitem [{\citenamefont {Wan}\ \emph {et~al.}(2018)\citenamefont {Wan},
  \citenamefont {Yu}, \citenamefont {Pan}, \citenamefont {Gao}, \citenamefont
  {Wen},\ and\ \citenamefont {Qin}}]{wan2018asymptotic}%
  \BibitemOpen
  \bibfield  {author} {\bibinfo {author} {\bibfnamefont {L.-C.}\ \bibnamefont
  {Wan}}, \bibinfo {author} {\bibfnamefont {C.-H.}\ \bibnamefont {Yu}},
  \bibinfo {author} {\bibfnamefont {S.-J.}\ \bibnamefont {Pan}}, \bibinfo
  {author} {\bibfnamefont {F.}~\bibnamefont {Gao}}, \bibinfo {author}
  {\bibfnamefont {Q.-Y.}\ \bibnamefont {Wen}}, \ and\ \bibinfo {author}
  {\bibfnamefont {S.-J.}\ \bibnamefont {Qin}},\ }\href@noop {} {\bibfield
  {journal} {\bibinfo  {journal} {Physical Review A}\ }\textbf {\bibinfo
  {volume} {97}},\ \bibinfo {pages} {062322} (\bibinfo {year}
  {2018})}\BibitemShut {NoStop}%
\bibitem [{\citenamefont {Zhou}\ and\ \citenamefont
  {Wang}(2017)}]{zhou2017efficient}%
  \BibitemOpen
  \bibfield  {author} {\bibinfo {author} {\bibfnamefont {S.}~\bibnamefont
  {Zhou}}\ and\ \bibinfo {author} {\bibfnamefont {J.}~\bibnamefont {Wang}},\
  }\href@noop {} {\bibfield  {journal} {\bibinfo  {journal} {Royal Society open
  science}\ }\textbf {\bibinfo {volume} {4}},\ \bibinfo {pages} {160906}
  (\bibinfo {year} {2017})}\BibitemShut {NoStop}%
\bibitem [{\citenamefont {Mahasinghe}\ and\ \citenamefont
  {Wang}(2016)}]{mahasinghe2016efficient}%
  \BibitemOpen
  \bibfield  {author} {\bibinfo {author} {\bibfnamefont {A.}~\bibnamefont
  {Mahasinghe}}\ and\ \bibinfo {author} {\bibfnamefont {J.}~\bibnamefont
  {Wang}},\ }\href@noop {} {\bibfield  {journal} {\bibinfo  {journal} {Journal
  of Physics A: Mathematical and Theoretical}\ }\textbf {\bibinfo {volume}
  {49}},\ \bibinfo {pages} {275301} (\bibinfo {year} {2016})}\BibitemShut
  {NoStop}%
\bibitem [{\citenamefont {Tong}\ \emph {et~al.}(2021)\citenamefont {Tong},
  \citenamefont {An}, \citenamefont {Wiebe},\ and\ \citenamefont
  {Lin}}]{tong2020fast}%
  \BibitemOpen
  \bibfield  {author} {\bibinfo {author} {\bibfnamefont {Y.}~\bibnamefont
  {Tong}}, \bibinfo {author} {\bibfnamefont {D.}~\bibnamefont {An}}, \bibinfo
  {author} {\bibfnamefont {N.}~\bibnamefont {Wiebe}}, \ and\ \bibinfo {author}
  {\bibfnamefont {L.}~\bibnamefont {Lin}},\ }\href@noop {} {\bibfield
  {journal} {\bibinfo  {journal} {Physical Review A}\ }\textbf {\bibinfo
  {volume} {104}},\ \bibinfo {pages} {032422} (\bibinfo {year}
  {2021})}\BibitemShut {NoStop}%
\bibitem [{\citenamefont {Shao}\ and\ \citenamefont
  {Xiang}(2018)}]{shao2018quantum}%
  \BibitemOpen
  \bibfield  {author} {\bibinfo {author} {\bibfnamefont {C.}~\bibnamefont
  {Shao}}\ and\ \bibinfo {author} {\bibfnamefont {H.}~\bibnamefont {Xiang}},\
  }\href@noop {} {\bibfield  {journal} {\bibinfo  {journal} {Physical Review
  A}\ }\textbf {\bibinfo {volume} {98}},\ \bibinfo {pages} {062321} (\bibinfo
  {year} {2018})}\BibitemShut {NoStop}%
\bibitem [{\citenamefont {Lomont}(2003)}]{lomont2003quantum}%
  \BibitemOpen
  \bibfield  {author} {\bibinfo {author} {\bibfnamefont {C.}~\bibnamefont
  {Lomont}},\ }\href@noop {} {\enquote {\bibinfo {title} {Quantum convolution
  and quantum correlation algorithms are physically impossible},}\ } (\bibinfo
  {year} {2003}),\ \Eprint {http://arxiv.org/abs/quant-ph/0309070}
  {arXiv:quant-ph/0309070 [quant-ph]} \BibitemShut {NoStop}%
\bibitem [{\citenamefont {Keriven}\ and\ \citenamefont
  {Peyr{\'e}}(2019)}]{keriven2019universal}%
  \BibitemOpen
  \bibfield  {author} {\bibinfo {author} {\bibfnamefont {N.}~\bibnamefont
  {Keriven}}\ and\ \bibinfo {author} {\bibfnamefont {G.}~\bibnamefont
  {Peyr{\'e}}},\ }\href@noop {} {\bibfield  {journal} {\bibinfo  {journal}
  {Advances in Neural Information Processing Systems}\ }\textbf {\bibinfo
  {volume} {32}},\ \bibinfo {pages} {7092} (\bibinfo {year}
  {2019})}\BibitemShut {NoStop}%
\bibitem [{\citenamefont {Zaheer}\ \emph {et~al.}(2017)\citenamefont {Zaheer},
  \citenamefont {Kottur}, \citenamefont {Ravanbakhsh}, \citenamefont {Poczos},
  \citenamefont {Salakhutdinov},\ and\ \citenamefont {Smola}}]{zaheer2017deep}%
  \BibitemOpen
  \bibfield  {author} {\bibinfo {author} {\bibfnamefont {M.}~\bibnamefont
  {Zaheer}}, \bibinfo {author} {\bibfnamefont {S.}~\bibnamefont {Kottur}},
  \bibinfo {author} {\bibfnamefont {S.}~\bibnamefont {Ravanbakhsh}}, \bibinfo
  {author} {\bibfnamefont {B.}~\bibnamefont {Poczos}}, \bibinfo {author}
  {\bibfnamefont {R.~R.}\ \bibnamefont {Salakhutdinov}}, \ and\ \bibinfo
  {author} {\bibfnamefont {A.~J.}\ \bibnamefont {Smola}},\ }in\ \href@noop {}
  {\emph {\bibinfo {booktitle} {Advances in neural information processing
  systems}}},\ Vol.~\bibinfo {volume} {30}\ (\bibinfo {year}
  {2017})\BibitemShut {NoStop}%
\bibitem [{\citenamefont {Dieleman}\ \emph {et~al.}(2016)\citenamefont
  {Dieleman}, \citenamefont {De~Fauw},\ and\ \citenamefont
  {Kavukcuoglu}}]{dieleman2016exploiting}%
  \BibitemOpen
  \bibfield  {author} {\bibinfo {author} {\bibfnamefont {S.}~\bibnamefont
  {Dieleman}}, \bibinfo {author} {\bibfnamefont {J.}~\bibnamefont {De~Fauw}}, \
  and\ \bibinfo {author} {\bibfnamefont {K.}~\bibnamefont {Kavukcuoglu}},\ }in\
  \href@noop {} {\emph {\bibinfo {booktitle} {International Conference on
  machine learning}}}\ (\bibinfo {organization} {PMLR},\ \bibinfo {year}
  {2016})\ pp.\ \bibinfo {pages} {1889--1898}\BibitemShut {NoStop}%
\bibitem [{\citenamefont {Worrall}\ \emph {et~al.}(2017)\citenamefont
  {Worrall}, \citenamefont {Garbin}, \citenamefont {Turmukhambetov},\ and\
  \citenamefont {Brostow}}]{worrall2017harmonic}%
  \BibitemOpen
  \bibfield  {author} {\bibinfo {author} {\bibfnamefont {D.~E.}\ \bibnamefont
  {Worrall}}, \bibinfo {author} {\bibfnamefont {S.~J.}\ \bibnamefont {Garbin}},
  \bibinfo {author} {\bibfnamefont {D.}~\bibnamefont {Turmukhambetov}}, \ and\
  \bibinfo {author} {\bibfnamefont {G.~J.}\ \bibnamefont {Brostow}},\ }in\
  \href@noop {} {\emph {\bibinfo {booktitle} {Proceedings of the IEEE
  Conference on Computer Vision and Pattern Recognition}}}\ (\bibinfo {year}
  {2017})\ pp.\ \bibinfo {pages} {5028--5037}\BibitemShut {NoStop}%
\bibitem [{\citenamefont {Bogatskiy}\ \emph {et~al.}(2020)\citenamefont
  {Bogatskiy}, \citenamefont {Anderson}, \citenamefont {Offermann},
  \citenamefont {Roussi}, \citenamefont {Miller},\ and\ \citenamefont
  {Kondor}}]{bogatskiy2020lorentz}%
  \BibitemOpen
  \bibfield  {author} {\bibinfo {author} {\bibfnamefont {A.}~\bibnamefont
  {Bogatskiy}}, \bibinfo {author} {\bibfnamefont {B.}~\bibnamefont {Anderson}},
  \bibinfo {author} {\bibfnamefont {J.}~\bibnamefont {Offermann}}, \bibinfo
  {author} {\bibfnamefont {M.}~\bibnamefont {Roussi}}, \bibinfo {author}
  {\bibfnamefont {D.}~\bibnamefont {Miller}}, \ and\ \bibinfo {author}
  {\bibfnamefont {R.}~\bibnamefont {Kondor}},\ }in\ \href@noop {} {\emph
  {\bibinfo {booktitle} {International Conference on Machine Learning}}}\
  (\bibinfo {organization} {PMLR},\ \bibinfo {year} {2020})\ pp.\ \bibinfo
  {pages} {992--1002}\BibitemShut {NoStop}%
\bibitem [{\citenamefont {Thomas}\ \emph {et~al.}(2018)\citenamefont {Thomas},
  \citenamefont {Smidt}, \citenamefont {Kearnes}, \citenamefont {Yang},
  \citenamefont {Li}, \citenamefont {Kohlhoff},\ and\ \citenamefont
  {Riley}}]{thomas2018tensor}%
  \BibitemOpen
  \bibfield  {author} {\bibinfo {author} {\bibfnamefont {N.}~\bibnamefont
  {Thomas}}, \bibinfo {author} {\bibfnamefont {T.}~\bibnamefont {Smidt}},
  \bibinfo {author} {\bibfnamefont {S.}~\bibnamefont {Kearnes}}, \bibinfo
  {author} {\bibfnamefont {L.}~\bibnamefont {Yang}}, \bibinfo {author}
  {\bibfnamefont {L.}~\bibnamefont {Li}}, \bibinfo {author} {\bibfnamefont
  {K.}~\bibnamefont {Kohlhoff}}, \ and\ \bibinfo {author} {\bibfnamefont
  {P.}~\bibnamefont {Riley}},\ }\href@noop {} {\bibfield  {journal} {\bibinfo
  {journal} {arXiv preprint arXiv:1802.08219}\ } (\bibinfo {year}
  {2018})}\BibitemShut {NoStop}%
\bibitem [{\citenamefont {Cohen}\ \emph {et~al.}(2019)\citenamefont {Cohen},
  \citenamefont {Weiler}, \citenamefont {Kicanaoglu},\ and\ \citenamefont
  {Welling}}]{cohen2019gauge}%
  \BibitemOpen
  \bibfield  {author} {\bibinfo {author} {\bibfnamefont {T.}~\bibnamefont
  {Cohen}}, \bibinfo {author} {\bibfnamefont {M.}~\bibnamefont {Weiler}},
  \bibinfo {author} {\bibfnamefont {B.}~\bibnamefont {Kicanaoglu}}, \ and\
  \bibinfo {author} {\bibfnamefont {M.}~\bibnamefont {Welling}},\ }in\
  \href@noop {} {\emph {\bibinfo {booktitle} {International Conference on
  Machine Learning}}}\ (\bibinfo {organization} {PMLR},\ \bibinfo {year}
  {2019})\ pp.\ \bibinfo {pages} {1321--1330}\BibitemShut {NoStop}%
\bibitem [{\citenamefont {Kerenidis}\ \emph {et~al.}(2019)\citenamefont
  {Kerenidis}, \citenamefont {Landman},\ and\ \citenamefont
  {Prakash}}]{kerenidis2019quantum}%
  \BibitemOpen
  \bibfield  {author} {\bibinfo {author} {\bibfnamefont {I.}~\bibnamefont
  {Kerenidis}}, \bibinfo {author} {\bibfnamefont {J.}~\bibnamefont {Landman}},
  \ and\ \bibinfo {author} {\bibfnamefont {A.}~\bibnamefont {Prakash}},\ }in\
  \href@noop {} {\emph {\bibinfo {booktitle} {International Conference on
  Learning Representations}}}\ (\bibinfo {year} {2019})\BibitemShut {NoStop}%
\bibitem [{\citenamefont {Cong}\ \emph {et~al.}(2019)\citenamefont {Cong},
  \citenamefont {Choi},\ and\ \citenamefont {Lukin}}]{cong2019quantum}%
  \BibitemOpen
  \bibfield  {author} {\bibinfo {author} {\bibfnamefont {I.}~\bibnamefont
  {Cong}}, \bibinfo {author} {\bibfnamefont {S.}~\bibnamefont {Choi}}, \ and\
  \bibinfo {author} {\bibfnamefont {M.~D.}\ \bibnamefont {Lukin}},\ }\href@noop
  {} {\bibfield  {journal} {\bibinfo  {journal} {Nature Physics}\ }\textbf
  {\bibinfo {volume} {15}},\ \bibinfo {pages} {1273} (\bibinfo {year}
  {2019})}\BibitemShut {NoStop}%
\bibitem [{\citenamefont {Liu}\ \emph {et~al.}(2019)\citenamefont {Liu},
  \citenamefont {Lim}, \citenamefont {Wood}, \citenamefont {Huang},
  \citenamefont {Guo},\ and\ \citenamefont {Huang}}]{liu2019hybrid}%
  \BibitemOpen
  \bibfield  {author} {\bibinfo {author} {\bibfnamefont {J.}~\bibnamefont
  {Liu}}, \bibinfo {author} {\bibfnamefont {K.~H.}\ \bibnamefont {Lim}},
  \bibinfo {author} {\bibfnamefont {K.~L.}\ \bibnamefont {Wood}}, \bibinfo
  {author} {\bibfnamefont {W.}~\bibnamefont {Huang}}, \bibinfo {author}
  {\bibfnamefont {C.}~\bibnamefont {Guo}}, \ and\ \bibinfo {author}
  {\bibfnamefont {H.-L.}\ \bibnamefont {Huang}},\ }\href@noop {} {\bibfield
  {journal} {\bibinfo  {journal} {arXiv preprint arXiv:1911.02998}\ } (\bibinfo
  {year} {2019})}\BibitemShut {NoStop}%
\bibitem [{\citenamefont {Pesah}\ \emph {et~al.}(2021)\citenamefont {Pesah},
  \citenamefont {Cerezo}, \citenamefont {Wang}, \citenamefont {Volkoff},
  \citenamefont {Sornborger},\ and\ \citenamefont {Coles}}]{pesah2021absence}%
  \BibitemOpen
  \bibfield  {author} {\bibinfo {author} {\bibfnamefont {A.}~\bibnamefont
  {Pesah}}, \bibinfo {author} {\bibfnamefont {M.}~\bibnamefont {Cerezo}},
  \bibinfo {author} {\bibfnamefont {S.}~\bibnamefont {Wang}}, \bibinfo {author}
  {\bibfnamefont {T.}~\bibnamefont {Volkoff}}, \bibinfo {author} {\bibfnamefont
  {A.~T.}\ \bibnamefont {Sornborger}}, \ and\ \bibinfo {author} {\bibfnamefont
  {P.~J.}\ \bibnamefont {Coles}},\ }\href@noop {} {\bibfield  {journal}
  {\bibinfo  {journal} {Physical Review X}\ }\textbf {\bibinfo {volume} {11}},\
  \bibinfo {pages} {041011} (\bibinfo {year} {2021})}\BibitemShut {NoStop}%
\bibitem [{\citenamefont {Fulton}\ and\ \citenamefont
  {Harris}(2013)}]{fulton2013representation}%
  \BibitemOpen
  \bibfield  {author} {\bibinfo {author} {\bibfnamefont {W.}~\bibnamefont
  {Fulton}}\ and\ \bibinfo {author} {\bibfnamefont {J.}~\bibnamefont
  {Harris}},\ }\href@noop {} {\emph {\bibinfo {title} {Representation theory: a
  first course}}},\ Vol.\ \bibinfo {volume} {129}\ (\bibinfo  {publisher}
  {Springer Science \& Business Media},\ \bibinfo {year} {2013})\BibitemShut
  {NoStop}%
\bibitem [{\citenamefont {Berry}\ \emph {et~al.}(2015)\citenamefont {Berry},
  \citenamefont {Childs}, \citenamefont {Cleve}, \citenamefont {Kothari},\ and\
  \citenamefont {Somma}}]{berry2015simulating}%
  \BibitemOpen
  \bibfield  {author} {\bibinfo {author} {\bibfnamefont {D.~W.}\ \bibnamefont
  {Berry}}, \bibinfo {author} {\bibfnamefont {A.~M.}\ \bibnamefont {Childs}},
  \bibinfo {author} {\bibfnamefont {R.}~\bibnamefont {Cleve}}, \bibinfo
  {author} {\bibfnamefont {R.}~\bibnamefont {Kothari}}, \ and\ \bibinfo
  {author} {\bibfnamefont {R.~D.}\ \bibnamefont {Somma}},\ }\href@noop {}
  {\bibfield  {journal} {\bibinfo  {journal} {Physical review letters}\
  }\textbf {\bibinfo {volume} {114}},\ \bibinfo {pages} {090502} (\bibinfo
  {year} {2015})}\BibitemShut {NoStop}%
\bibitem [{\citenamefont {Mitarai}\ \emph {et~al.}(2019)\citenamefont
  {Mitarai}, \citenamefont {Kitagawa},\ and\ \citenamefont
  {Fujii}}]{mitarai2019quantum}%
  \BibitemOpen
  \bibfield  {author} {\bibinfo {author} {\bibfnamefont {K.}~\bibnamefont
  {Mitarai}}, \bibinfo {author} {\bibfnamefont {M.}~\bibnamefont {Kitagawa}}, \
  and\ \bibinfo {author} {\bibfnamefont {K.}~\bibnamefont {Fujii}},\
  }\href@noop {} {\bibfield  {journal} {\bibinfo  {journal} {Physical Review
  A}\ }\textbf {\bibinfo {volume} {99}},\ \bibinfo {pages} {012301} (\bibinfo
  {year} {2019})}\BibitemShut {NoStop}%
\bibitem [{\citenamefont {Brassard}\ \emph {et~al.}(2002)\citenamefont
  {Brassard}, \citenamefont {Hoyer}, \citenamefont {Mosca},\ and\ \citenamefont
  {Tapp}}]{brassard2002quantum}%
  \BibitemOpen
  \bibfield  {author} {\bibinfo {author} {\bibfnamefont {G.}~\bibnamefont
  {Brassard}}, \bibinfo {author} {\bibfnamefont {P.}~\bibnamefont {Hoyer}},
  \bibinfo {author} {\bibfnamefont {M.}~\bibnamefont {Mosca}}, \ and\ \bibinfo
  {author} {\bibfnamefont {A.}~\bibnamefont {Tapp}},\ }\href@noop {} {\bibfield
   {journal} {\bibinfo  {journal} {Contemporary Mathematics}\ }\textbf
  {\bibinfo {volume} {305}},\ \bibinfo {pages} {53} (\bibinfo {year}
  {2002})}\BibitemShut {NoStop}%
\bibitem [{\citenamefont {Ambainis}(2012)}]{ambainis2012variable}%
  \BibitemOpen
  \bibfield  {author} {\bibinfo {author} {\bibfnamefont {A.}~\bibnamefont
  {Ambainis}},\ }in\ \href@noop {} {\emph {\bibinfo {booktitle} {STACS'12 (29th
  Symposium on Theoretical Aspects of Computer Science)}}},\ Vol.~\bibinfo
  {volume} {14}\ (\bibinfo {organization} {LIPIcs},\ \bibinfo {year} {2012})\
  pp.\ \bibinfo {pages} {636--647}\BibitemShut {NoStop}%
\bibitem [{\citenamefont {Dummit}\ and\ \citenamefont
  {Foote}(2004)}]{dummit2004abstract}%
  \BibitemOpen
  \bibfield  {author} {\bibinfo {author} {\bibfnamefont {D.~S.}\ \bibnamefont
  {Dummit}}\ and\ \bibinfo {author} {\bibfnamefont {R.~M.}\ \bibnamefont
  {Foote}},\ }\href@noop {} {\emph {\bibinfo {title} {Abstract algebra}}},\
  Vol.~\bibinfo {volume} {3}\ (\bibinfo  {publisher} {Wiley Hoboken},\ \bibinfo
  {year} {2004})\BibitemShut {NoStop}%
\bibitem [{\citenamefont {Atkinson}\ and\ \citenamefont
  {Han}(2009)}]{atkinson2009numerical}%
  \BibitemOpen
  \bibfield  {author} {\bibinfo {author} {\bibfnamefont {K.}~\bibnamefont
  {Atkinson}}\ and\ \bibinfo {author} {\bibfnamefont {W.}~\bibnamefont {Han}},\
  }in\ \href@noop {} {\emph {\bibinfo {booktitle} {Theoretical Numerical
  Analysis}}}\ (\bibinfo  {publisher} {Springer},\ \bibinfo {year} {2009})\
  pp.\ \bibinfo {pages} {473--549}\BibitemShut {NoStop}%
\bibitem [{\citenamefont {Harrow}\ \emph {et~al.}(2009)\citenamefont {Harrow},
  \citenamefont {Hassidim},\ and\ \citenamefont {Lloyd}}]{hhl}%
  \BibitemOpen
  \bibfield  {author} {\bibinfo {author} {\bibfnamefont {A.~W.}\ \bibnamefont
  {Harrow}}, \bibinfo {author} {\bibfnamefont {A.}~\bibnamefont {Hassidim}}, \
  and\ \bibinfo {author} {\bibfnamefont {S.}~\bibnamefont {Lloyd}},\ }\href
  {\doibase 10.1103/physrevlett.103.150502} {\bibfield  {journal} {\bibinfo
  {journal} {Physical Review Letters}\ }\textbf {\bibinfo {volume} {103}}
  (\bibinfo {year} {2009}),\ 10.1103/physrevlett.103.150502}\BibitemShut
  {NoStop}%
\bibitem [{\citenamefont {Clader}\ \emph {et~al.}(2013)\citenamefont {Clader},
  \citenamefont {Jacobs},\ and\ \citenamefont {Sprouse}}]{preconditionedHHL}%
  \BibitemOpen
  \bibfield  {author} {\bibinfo {author} {\bibfnamefont {B.~D.}\ \bibnamefont
  {Clader}}, \bibinfo {author} {\bibfnamefont {B.~C.}\ \bibnamefont {Jacobs}},
  \ and\ \bibinfo {author} {\bibfnamefont {C.~R.}\ \bibnamefont {Sprouse}},\
  }\href {\doibase 10.1103/PhysRevLett.110.250504} {\bibfield  {journal}
  {\bibinfo  {journal} {Phys. Rev. Lett.}\ }\textbf {\bibinfo {volume} {110}},\
  \bibinfo {pages} {250504} (\bibinfo {year} {2013})}\BibitemShut {NoStop}%
\bibitem [{\citenamefont {Childs}\ \emph {et~al.}(2017)\citenamefont {Childs},
  \citenamefont {Kothari},\ and\ \citenamefont {Somma}}]{childshhl}%
  \BibitemOpen
  \bibfield  {author} {\bibinfo {author} {\bibfnamefont {A.~M.}\ \bibnamefont
  {Childs}}, \bibinfo {author} {\bibfnamefont {R.}~\bibnamefont {Kothari}}, \
  and\ \bibinfo {author} {\bibfnamefont {R.~D.}\ \bibnamefont {Somma}},\ }\href
  {\doibase 10.1137/16m1087072} {\bibfield  {journal} {\bibinfo  {journal}
  {SIAM Journal on Computing}\ }\textbf {\bibinfo {volume} {46}},\ \bibinfo
  {pages} {1920–1950} (\bibinfo {year} {2017})}\BibitemShut {NoStop}%
\bibitem [{\citenamefont {Lloyd}\ \emph {et~al.}(2014)\citenamefont {Lloyd},
  \citenamefont {Mohseni},\ and\ \citenamefont
  {Rebentrost}}]{lloyd2014quantum}%
  \BibitemOpen
  \bibfield  {author} {\bibinfo {author} {\bibfnamefont {S.}~\bibnamefont
  {Lloyd}}, \bibinfo {author} {\bibfnamefont {M.}~\bibnamefont {Mohseni}}, \
  and\ \bibinfo {author} {\bibfnamefont {P.}~\bibnamefont {Rebentrost}},\
  }\href@noop {} {\bibfield  {journal} {\bibinfo  {journal} {Nature Physics}\
  }\textbf {\bibinfo {volume} {10}},\ \bibinfo {pages} {631} (\bibinfo {year}
  {2014})}\BibitemShut {NoStop}%
\bibitem [{\citenamefont {Kiani}\ \emph {et~al.}(2022)\citenamefont {Kiani},
  \citenamefont {De~Palma}, \citenamefont {Englund}, \citenamefont {Kaminsky},
  \citenamefont {Marvian},\ and\ \citenamefont {Lloyd}}]{kiani2022quantum}%
  \BibitemOpen
  \bibfield  {author} {\bibinfo {author} {\bibfnamefont {B.~T.}\ \bibnamefont
  {Kiani}}, \bibinfo {author} {\bibfnamefont {G.}~\bibnamefont {De~Palma}},
  \bibinfo {author} {\bibfnamefont {D.}~\bibnamefont {Englund}}, \bibinfo
  {author} {\bibfnamefont {W.}~\bibnamefont {Kaminsky}}, \bibinfo {author}
  {\bibfnamefont {M.}~\bibnamefont {Marvian}}, \ and\ \bibinfo {author}
  {\bibfnamefont {S.}~\bibnamefont {Lloyd}},\ }\href@noop {} {\bibfield
  {journal} {\bibinfo  {journal} {Physical Review A}\ }\textbf {\bibinfo
  {volume} {105}},\ \bibinfo {pages} {022415} (\bibinfo {year}
  {2022})}\BibitemShut {NoStop}%
\bibitem [{\citenamefont {Montanaro}(2015)}]{montanaro2015quantum}%
  \BibitemOpen
  \bibfield  {author} {\bibinfo {author} {\bibfnamefont {A.}~\bibnamefont
  {Montanaro}},\ }\href@noop {} {\bibfield  {journal} {\bibinfo  {journal}
  {Proceedings of the Royal Society A: Mathematical, Physical and Engineering
  Sciences}\ }\textbf {\bibinfo {volume} {471}},\ \bibinfo {pages} {20150301}
  (\bibinfo {year} {2015})}\BibitemShut {NoStop}%
\bibitem [{\citenamefont {Hoyer}(1997)}]{hoyer1997efficient}%
  \BibitemOpen
  \bibfield  {author} {\bibinfo {author} {\bibfnamefont {P.}~\bibnamefont
  {Hoyer}},\ }\href@noop {} {\bibfield  {journal} {\bibinfo  {journal} {arXiv
  preprint quant-ph/9702028}\ } (\bibinfo {year} {1997})}\BibitemShut {NoStop}%
\bibitem [{\citenamefont {Cerezo}\ \emph {et~al.}(2021)\citenamefont {Cerezo},
  \citenamefont {Arrasmith}, \citenamefont {Babbush}, \citenamefont {Benjamin},
  \citenamefont {Endo}, \citenamefont {Fujii}, \citenamefont {McClean},
  \citenamefont {Mitarai}, \citenamefont {Yuan}, \citenamefont {Cincio} \emph
  {et~al.}}]{cerezo2020variational}%
  \BibitemOpen
  \bibfield  {author} {\bibinfo {author} {\bibfnamefont {M.}~\bibnamefont
  {Cerezo}}, \bibinfo {author} {\bibfnamefont {A.}~\bibnamefont {Arrasmith}},
  \bibinfo {author} {\bibfnamefont {R.}~\bibnamefont {Babbush}}, \bibinfo
  {author} {\bibfnamefont {S.~C.}\ \bibnamefont {Benjamin}}, \bibinfo {author}
  {\bibfnamefont {S.}~\bibnamefont {Endo}}, \bibinfo {author} {\bibfnamefont
  {K.}~\bibnamefont {Fujii}}, \bibinfo {author} {\bibfnamefont {J.~R.}\
  \bibnamefont {McClean}}, \bibinfo {author} {\bibfnamefont {K.}~\bibnamefont
  {Mitarai}}, \bibinfo {author} {\bibfnamefont {X.}~\bibnamefont {Yuan}},
  \bibinfo {author} {\bibfnamefont {L.}~\bibnamefont {Cincio}},  \emph
  {et~al.},\ }\href@noop {} {\bibfield  {journal} {\bibinfo  {journal} {Nature
  Reviews Physics}\ }\textbf {\bibinfo {volume} {3}},\ \bibinfo {pages} {625}
  (\bibinfo {year} {2021})}\BibitemShut {NoStop}%
\bibitem [{\citenamefont {Rasmussen}(2003)}]{rasmussen2003gaussian}%
  \BibitemOpen
  \bibfield  {author} {\bibinfo {author} {\bibfnamefont {C.~E.}\ \bibnamefont
  {Rasmussen}},\ }in\ \href@noop {} {\emph {\bibinfo {booktitle} {Summer school
  on machine learning}}}\ (\bibinfo {organization} {Springer},\ \bibinfo {year}
  {2003})\ pp.\ \bibinfo {pages} {63--71}\BibitemShut {NoStop}%
\bibitem [{\citenamefont {Dietrich}\ and\ \citenamefont
  {Newsam}(1997)}]{dietrich1997fast}%
  \BibitemOpen
  \bibfield  {author} {\bibinfo {author} {\bibfnamefont {C.~R.}\ \bibnamefont
  {Dietrich}}\ and\ \bibinfo {author} {\bibfnamefont {G.~N.}\ \bibnamefont
  {Newsam}},\ }\href@noop {} {\bibfield  {journal} {\bibinfo  {journal} {SIAM
  Journal on Scientific Computing}\ }\textbf {\bibinfo {volume} {18}},\
  \bibinfo {pages} {1088} (\bibinfo {year} {1997})}\BibitemShut {NoStop}%
\bibitem [{\citenamefont {William~Fulton}(2004)}]{reptheory2004springer}%
  \BibitemOpen
  \bibfield  {author} {\bibinfo {author} {\bibfnamefont {J.~H.}\ \bibnamefont
  {William~Fulton}},\ }\href@noop {} {\emph {\bibinfo {title} {Representation
  theory}}},\ Vol.~\bibinfo {volume} {3}\ (\bibinfo  {publisher} {Springer},\
  \bibinfo {year} {2004})\BibitemShut {NoStop}%
\bibitem [{\citenamefont {Lonseth}(1977)}]{source_integral_eq}%
  \BibitemOpen
  \bibfield  {author} {\bibinfo {author} {\bibfnamefont {A.~T.}\ \bibnamefont
  {Lonseth}},\ }\href {\doibase 10.1137/1019039} {\bibfield  {journal}
  {\bibinfo  {journal} {SIAM Review}\ }\textbf {\bibinfo {volume} {19}},\
  \bibinfo {pages} {241} (\bibinfo {year} {1977})},\ \Eprint
  {http://arxiv.org/abs/https://doi.org/10.1137/1019039}
  {https://doi.org/10.1137/1019039} \BibitemShut {NoStop}%
\bibitem [{\citenamefont {Krause}(1975)}]{taxicab}%
  \BibitemOpen
  \bibfield  {author} {\bibinfo {author} {\bibfnamefont {E.~F.}\ \bibnamefont
  {Krause}},\ }\href@noop {} {\emph {\bibinfo {title} {Taxicab Geometry: An
  Adventure in Non-Euclidean Geometry}}}\ (\bibinfo  {publisher} {Dover
  Publications},\ \bibinfo {year} {1975})\BibitemShut {NoStop}%
\bibitem [{\citenamefont {Arnold}(2015)}]{Arnold2015}%
  \BibitemOpen
  \bibfield  {author} {\bibinfo {author} {\bibfnamefont {D.~N.}\ \bibnamefont
  {Arnold}},\ }\enquote {\bibinfo {title} {Stability, consistency, and
  convergence of numerical discretizations},}\ in\ \href@noop {} {\emph
  {\bibinfo {booktitle} {Encyclopedia of Applied and Computational
  Mathematics}}}\ (\bibinfo  {publisher} {Springer Berlin Heidelberg},\
  \bibinfo {year} {2015})\ pp.\ \bibinfo {pages} {1358--1364}\BibitemShut
  {NoStop}%
\bibitem [{\citenamefont {Laurita}\ and\ \citenamefont
  {Mastroianni}(2002)}]{conditionnumber}%
  \BibitemOpen
  \bibfield  {author} {\bibinfo {author} {\bibfnamefont {C.}~\bibnamefont
  {Laurita}}\ and\ \bibinfo {author} {\bibfnamefont {G.}~\bibnamefont
  {Mastroianni}},\ }\href {http://www.jstor.org/stable/26163376} {\bibfield
  {journal} {\bibinfo  {journal} {The Journal of Integral Equations and
  Applications}\ }\textbf {\bibinfo {volume} {14}},\ \bibinfo {pages} {311}
  (\bibinfo {year} {2002})}\BibitemShut {NoStop}%
\end{thebibliography}%

\end{document}